\spnewtheorem{observation}[theorem]{Observation}{\bfseries}{\itshape}
\newcommand{\NN}[0]{\mathbb{N}}
\newcommand{\RR}[0]{\mathbb{R}}
\newcommand{\ppdistr}[0]{\mathcal{P}}
\newcommand{\OPT}[0]{\operatorname{OPT}}
\newcommand{\DNF}[0]{\operatorname{DNF}}
\newcommand{\ew}[1]{\mathbb{E}\left[ #1 \right]}
\newcommand{\prob}[1]{\mathbb{P}\left[ #1 \right]}
\newcommand{\tv}[1]{\left\| #1 \right\|_{\operatorname{TV}}}
\newcommand{\AAPR}[2]{\operatorname{AAPR}(#1, #2)}
\newcommand{\AAPRb}[1]{\operatorname{AAPR}(#1)}
\newcommand{\AECR}[2]{\operatorname{AECR}(#1, #2)}
\newcommand{\AECRb}[1]{\operatorname{AECR}(#1)}
\newcommand{\ROR}[1]{\operatorname{RR}(#1)}
\newcommand{\RORb}[1]{\operatorname{RR}'(#1)}
\newcommand{\algo}[1]{{\scshape #1}}
\newcommand{\Unif}[0]{\operatorname{Unif}}
\title{Probabilistic Analysis of the Dual Next-Fit Algorithm for Bin Covering\thanks{This research was supported by ERC Starting Grant 306465 (BeyondWorstCase).}}
\author{Carsten Fischer \and Heiko R\"oglin}
\institute{
Department of Computer Science\\
University of Bonn, Germany\\
\email{carsten.fischer@uni-bonn.de}, \email{roeglin@cs.uni-bonn.de}}
\begin{document}

\maketitle

\begin{abstract}
In the bin covering problem, the goal is to fill as many bins as possible up to a certain minimal level with a given set of items of different sizes. Online variants, in which the items arrive one after another and have to be packed immediately on their arrival without knowledge about the future items, have been studied extensively in the literature. We study the simplest possible online algorithm Dual Next-Fit, which packs all arriving items into the same bin until it is filled and then proceeds with the next bin in the same manner. The competitive ratio of this and any other reasonable online algorithm is~$1/2$.

We study Dual Next-Fit in a probabilistic setting where the item sizes are chosen i.i.d.\ according to a discrete distribution and we prove that, for every distribution, its expected competitive ratio is at least~$1/2+\epsilon$ for a constant~$\epsilon>0$ independent of the distribution. We also prove an upper bound of~$2/3$ and better lower bounds for certain restricted classes of distributions. Finally, we prove that the expected competitive ratio equals, for a large class of distributions, the random-order ratio, which is the expected competitive ratio when adversarially chosen items arrive in uniformly random order.
\end{abstract}

\section{Introduction}\label{section:Introduction}


In the \emph{bin covering problem} one is given a set of items with sizes~$s_1,\ldots,s_n\in[0,1]$ and the goal is to fill as many bins as possible with these items, where a bin is counted as filled if it contains items with a total size of at least~$1$. More precisely, we are interested in finding the maximal number~$\ell$ of pairwise disjoint sets~$X_1,\ldots,X_{\ell}\subseteq\{1,\ldots,n\}$ such that~$\sum_{i\in X_j}s_i\ge 1$ for every~$j$. We call the sets~$X_j$ \emph{bins} and we say that a bin is \emph{filled} or \emph{covered} if the total size of the items it contains is at least~$1$. Variants of the bin covering problem occur frequently in industrial applications, e.g., when packing food items with different weights into boxes that each need to have at least the advertised weight.

Bin covering is a well-studied NP-hard optimization problem. A straightforward reduction from the partition problem shows that it cannot be approximated within a factor of~$\nicefrac12+\epsilon$ for any~$\epsilon>0$. On the positive side, Jansen and Solis-Oba~\cite{JansenS03} presented an asymptotic fully polynomial-time approximation scheme. In many applications, it is natural to study online variants, in which items arrive one after another and have to be packed directly into one of the bins without knowing the future items. It is also often natural to restrict the number of open bins, i.e., bins that contain at least one item but are not yet covered, that an online algorithm may use.

We study the simple online algorithm \algo{Dual Next-Fit (DNF)} that packs all arriving items into the same bin until it is filled. Then the next items are packed into a new bin until it is filled, and so on. The (asymptotic) competitive ratio of $\DNF$ is~$\nicefrac12$~\cite{AssmannJKL84}, which is best possible for deterministic online algorithms~\cite{CsirikT88}. In fact, all deterministic online algorithms that do not add items to a bin that is already covered and have at most a constant number of open bins at any point in time have a competitive ratio of exactly~$\nicefrac12$~\cite{ChristFL14}. Since competitive analysis does not yield much insight for bin covering, alternative measures have been suggested. Most notably are probabilistic models in which the item sizes are drawn at random from a fixed distribution~\cite{CsirikFGK91} or in which the item sizes are adversarial but the items arrive in random order~\cite{ChristFL14}.

In this article, we study the \emph{asymptotic average performance ratio} and the \emph{asymptotic random-order ratio} of $\DNF$. We give now an intuitive explanation of these measures (formal definitions are given in Section~\ref{subsec:PerformanceMeasures}). In order to define the former, we allow an adversary to choose an arbitrary distribution~$F$ on~$[0,1]$ with finite support. The asymptotic average performance ratio~$\AAPR{\DNF}{F}$ of $\DNF$ with respect to the distribution~$F$ is then defined as the expected competitive ratio of $\DNF$ on instances with~$n\to\infty$ items whose sizes are independently drawn according to~$F$. Furthermore, let~$\AAPRb{\DNF}=\inf_{F}\AAPR{\DNF}{F}$. In order to define the latter, we allow an adversary to choose~$n\to\infty$ item sizes~$s_1,\ldots,s_n\in[0,1]$. The asymptotic random-order ratio~$\ROR{\DNF}$ is then defined as the expected competitive ratio of $\DNF$ on instances in which these items arrive in uniformly random order. It is assumed that the adversary chooses item sizes that minimize this expected value.

We prove several new results on the asymptotic average performance ratio and the asymptotic random-order ratio of $\DNF$ and the relation between these two measures. First of all, observe that both~$\ROR{\DNF}$ and~$\AAPRb{\DNF}$ lie between~$\nicefrac12$ and~$1$ because even in the worst case~$\DNF$ has a competitive ratio of~$\nicefrac12$. Using ideas of Kenyon~\cite{Kenyon96}, it follows that $\AAPRb{\DNF} \ge \ROR{\DNF}$. We show that~$\ROR{\DNF}\le \AAPRb{\DNF}\le \nicefrac23$, which improves a result by Christ et al.~\cite{ChristFL14} who proved that~$\ROR{\DNF} \le \nicefrac45$. To the best of our knowledge the bound by Christ et al.\ is the only non-trivial result about the random-order ratio of~$\DNF$ in the literature.

Csirik et al.~\cite{CsirikFGK91} have proved that~$\AAPR{\DNF}{F}=\nicefrac{2}{e}$ if~$F$ is the uniform distribution on~$[0,1]$. We are not aware, however, of any lower bound for~$\AAPR{\DNF}{F}$ that holds for any discrete distribution~$F$, except for the trivial bound of~$\nicefrac12$. We obtain the first such bound and prove that~$\AAPRb{\DNF}\ge \nicefrac12+\epsilon$ for a small but constant~$\epsilon>0$. We prove even better lower bounds for certain classes of distributions that we will describe in detail in Section~\ref{subsec:OurResults}. Finally we study the connection between the performance measures and prove that~$\AAPRb{\DNF} = \ROR{\DNF}$ if the adversary in the random-order model is restricted to inputs~$s_1,\ldots,s_n$ with~$\sum_is_i = \omega(n^{2/3})$.

\subsection{Performance Measures}
\label{subsec:PerformanceMeasures}

Before we discuss our results in more detail and mention further related work, let us formally introduce the performance measures that we employ.

\begin{definition}
A \emph{discrete distribution} $F$ is defined by a vector $s = (s_1, \ldots, s_m)$ of non-negative rational item sizes and an associated vector $p = (p_1, \ldots, p_m)$ of positive rational probabilities such that $\sum_{i=1}^m p_i = 1$.
\end{definition}

We denote by $I_n(F) = (X_1, \ldots, X_n)$ a list of $n$ items, where the $X_i$ are drawn i.i.d.\ according to $F$. For an algorithm~$A$ and a list of item sizes~$L$ we denote by~$A(L)$ the number of bins that~$A$ fills on input~$L$.

\begin{definition}\label{def:PerformanceMeasures}
Let $A$ be an algorithm for the bin covering problem, and let $F$ be a discrete distribution. We define the \emph{asymptotic average performance ratio} as
\begin{align*}
\AAPR{A}{F}
= \liminf_{n \rightarrow \infty} \, \ew{\frac{A(I_n(F))}{\OPT(I_n(F))}}
\end{align*}
and the \emph{asymptotic expected competitive ratio} as
\begin{align*}
\AECR{A}{F}
= \liminf_{n \rightarrow \infty} \, \frac{\ew{A(I_n(F))}}{\ew{\OPT(I_n(F))}}.
\end{align*}
For a set $\mathcal{F}$ of discrete distributions, we define
\begin{align*}
\AAPR{A}{\mathcal{F}}
= \inf_{F \in \mathcal{F}} \AAPR{A}{F}
\quad\text{and}\quad
\AECR{A}{\mathcal{F}}
= \inf_{F \in \mathcal{F}} \AECR{A}{F}.
\end{align*}
We denote by~$\mathcal{D}$ the set of all discrete distributions and we define
\[
  \AAPRb{A}=\AAPR{A}{\mathcal{D}}\quad\text{and}\quad \AECRb{A}=\AECR{A}{\mathcal{D}}.
\]
\end{definition}
Both the asymptotic average performance ratio and the asymptotic expected competitive ratio have been studied in the literature (sometimes under different names). We will see later that for our purposes there is no need to distinguish between them because they coincide for~$\DNF$.

Let $L = (a_1, \ldots, a_N)$ be a list of length $N$, and let $\sigma \in S_N$ be a permutation of $N$ elements ($S_N$ denotes the symmetric group of order $N$). Then $\sigma(L)$ denotes a permutation of $L$. 

\begin{definition}
In bin covering, the \emph{asymptotic random-order ratio} for an algorithm $A$ is defined as
\begin{align*}
\ROR{A} = \liminf_{\OPT(L) \rightarrow \infty} \frac{\mathbb{E}_{\sigma}[A(\sigma(L))]}{\OPT(L)},
\end{align*}
where $\sigma$ is drawn uniformly from $S_{|L|}$.
\end{definition}
The asymptotic random-order ratio for bin covering and bin packing has been introduced in~\cite{ChristFL14} and~\cite{Kenyon96}, respectively. All definitions above can also be adapted to the bin packing problem; we only have to replace $\inf$ and $\liminf$ by $\sup$ and $\limsup$, respectively.

\subsection{Related Work}

Csirik et al.~\cite{CsirikJK01} presented an algorithm (which requires an unlimited number of open bins) whose asymptotic average performance ratio is~$1$ for all discrete distributions. Csirik et al.~\cite{CsirikFGK91} have proved that the asymptotic expected competitive ratio of~$\DNF$ is~$\nicefrac{2}{e}$ if~$F$ is the uniform distribution on~$[0,1]$. Kenyon~\cite{Kenyon96} introduced the notion of asymptotic random-order ratio for bin packing and proved that the asymptotic random-order ratio of the best-fit algorithm lies between~$1.08$ and~$1.5$. Coffman et al.~\cite{RandomOrderBinPacking} showed that the random-order ratio of the next-fit algorithm is~$2$. Christ et al.~\cite{ChristFL14} adapted the asymptotic random-oder ratio to bin covering and proved that~$\ROR{\DNF} \le \nicefrac45$. The article of Kenyon~\cite{Kenyon96} contains in Section~3 an argument for $\AECRb{\DNF} \ge \ROR{\DNF}$ (even though this is not stated explicitly). Asgeirsson and Stein~\cite{AsgeirssonS09} developed a heuristic for online bin covering based on Markov chains and demonstrated its good performance in experiments.

\subsection{Definitions and Notation}

Let $L = (a_1, \ldots, a_N)\in[0,1]^N$ be a list of items. We denote by~$s(L) := \sum_{i=1}^N a_i$ the total size of the items in~$L$ and by~$N(L) := N$ the length of~$L$. For an algorithm~$A$, we define~$W^A(L) := s(L) - A(L)$ as the \emph{waste of algorithm~$A$ on list~$L$}. We denote by~$\OPT$ an optimal offline algorithm. Of particular interest are distributions that an optimal offline algorithm can pack with sublinear waste.

\begin{definition}\label{def:PerfectPacking}
We say that a discrete distribution $F$ is a \emph{perfect-packing distribution}, if it satisfies the perfect-packing property, i.e.,
\begin{align*}
\ew{W^{\OPT}(I_n(F))} = o(n).
\end{align*}
We denote the set of all perfect-packing distributions by~$\mathcal{P}$.
\end{definition}
Let $F$ be a discrete distribution with associated item sizes $s = (s_1, \ldots, s_m)$ and probabilities $p = (p_1, \ldots, p_m)$. We say that $b = (b_1, \ldots, b_m) \in \NN_0^m$ is a \emph{perfect-packing configuration}, if $\sum_{i=1}^m b_i s_i = 1$. Let $\Lambda_F$ denote the closure under convex combinations and positive scalar multiplication of the set of perfect-packing configurations. Courcoubetis and Weber~\cite{PerfectPackingThm} found out, that~$F$ is a perfect-packing distribution if and only if $p \in \Lambda_F$.

Let $L = (a_1, \ldots, a_N)$ be a list. We say that a discrete distribution $F$ is induced by $L$, if the vector of item sizes $(s_1, \ldots, s_m)$ contains exactly all the item sizes arising in $L$, and the vector of probabilities $p$ is given by $p_i := p(s_i) = |\{1 \leq j \leq N \, : \, a_j = s_i\}|/N$. Vice versa we can find for every discrete distribution $F$ a list $L$, such that $F$ is induced by $L$.

\subsection{Outline and Our Results}
\label{subsec:OurResults}

In Section~\ref{section:basicStatements} we discuss how~$\DNF$ can be interpreted as a Markov chain and we prove some properties using this interpretation. We investigate how the different performance measures are related and we point out that~$\AECR{\DNF}{F}=\AAPR{\DNF}{F}$ for any discrete distribution~$F$. Since the asymptotic expected competitive ratio and the asymptotic average performance ratio coincide for all discrete distributions, we will only consider the former in the following even though all mentioned results are also true for the latter. The main result of Section~\ref{section:ConnectionBetweenMeasures} is a proof that~$\AECRb{\DNF} = \ROR{\DNF}$ if the adversary in the random-order model is restricted to inputs~$s_1,\ldots,s_n$ with~$\sum_is_i = \omega(n^{2/3})$.

We start Section~\ref{section:BoundsPerfectPacking} by showing that perfect-packing distributions are the worst distributions for the considered measures, i.e., $\AECR{\DNF}{\mathcal{P}} = \AECRb{\DNF}$. Similarly we show that for proving a lower bound on the random order ratio of~$\DNF$ it suffices to consider sequences of items that can be packed with waste zero. Then we show that~$\AECRb{\DNF}\le \nicefrac23$, which implies~$\ROR{\DNF} \le \nicefrac23$. The main contribution of Section~\ref{section:BoundsPerfectPacking} are various new lower bounds on the asymptotic expected competitive ratio of~$\DNF$. We first prove that~$\AECRb{\DNF}\ge \nicefrac12+\epsilon$ for a small but constant~$\epsilon>0$. Then we consider the following special cases for which we show better lower bounds.
\begin{itemize}
  \item Let $\ppdistr_x$ be the set of all perfect packing distributions, where the maximum item size is bounded from above by $x$. For~$x\in[1/2,1]$, we prove by an application of Lorden's inequality for the overshoot of a stopped sum of random variables that
 $\AECR{\DNF}{\ppdistr_x} \geq \frac{1}{1 + x^2 + (1-x)^2}$.

\item Let $F$ be a discrete perfect-packing distribution with associated item sizes $s = (s_1, \ldots, s_m)$ and probabilities $p = (p_1, \ldots, p_m)$. According to our discussion after Definition~\ref{def:PerfectPacking}, the vector~$p$ lies in~$\Lambda_F$. Hence, there exist perfect-packing configurations~$b^1,\ldots,b^N$ and coefficients~$\alpha_1,\ldots,\alpha_N\ge 0$ with~$p=\alpha_1b^1+\ldots+\alpha_Nb^N$. We denote the smallest~$N$ for which such~$b^i$ and~$\alpha_i$ exist the \emph{degree} of~$p$. Let~$\mathcal{P}^{(N)}$ denote all discrete perfect-packing distributions with degree~$N$. We prove that
\begin{align*}
\frac{2}{3} 
\leq \AECR{\DNF}{\ppdistr^{(1)}} 
\leq \left( \sum_{i=1}^\infty \frac{(i-1)!}{i^i} \right)^{-1} 
\approx 0.736.
\end{align*}
If the maximum item size is greater than or equal to $\frac{1}{2}$ the lower bound can be improved to $\left(1 + \sum_{i=2}^\infty \frac{1}{i^2} \cdot \left(1 - \frac{1}{i} \right)^{i-2} \right)^{-1} \approx 0.686$.
\item Let $F$ be a discrete perfect-packing distribution with items $s = (s_1, \ldots, s_m)$ and probabilities $p = (p_1, \ldots, p_m)$ and let~$p=\alpha_1b^1+\ldots+\alpha_Nb^N$ for perfect-packing configurations~$b^1,\ldots,b^N$ and coefficients~$\alpha_1,\ldots,\alpha_N\ge 0$. Let~$\mathcal{P}_{\text{two}}$ denote all discrete perfect-packing distributions for which there exists such a representation in which every perfect-packing configuration~$b^i$ contains at most two non-zero entries. We show that $\AECR{\DNF}{\mathcal{P}_{\text{two}}} = \nicefrac23$.
\end{itemize}
In Section~\ref{section:Conclusions} we give some conclusions and present open problems. Appendix~\ref{section:BasicsInMC} contains some basics about Markov chains, and Appendix~\ref{section:appendixProofs} the proofs of two statements that we skipped in the main part.

\section{Basic Statements}\label{section:basicStatements}
Let $L_1$ and $L_2$ be two lists and let $L_1L_2$ denote the concatenation of them. At first, we want to point out that $\OPT$ as well as $\DNF$ are superadditive, i.e., it holds $\OPT(L_1) + \OPT(L_2) \leq \OPT(L_1L_2)$ and $\DNF(L_1) + \DNF(L_2) \leq \DNF(L_1L_2)$. Now let $F$ be a fixed discrete distribution. The limits $\gamma(F) := \lim_{n \rightarrow \infty} \ew{\OPT(I_n(F))}/n$ and $\lim_{n \rightarrow \infty} \ew{\DNF(I_n(F))}/n$ exist due to Fekete's lemma. This guarantees that the $\liminf$ in the definition of $\AECR{\DNF}{F}$ is in fact a limit.

Furthermore, the performance measures mentioned in Definition \ref{def:PerformanceMeasures} coincide in our case:
\begin{lemma}\label{lemma:PerformanceMeasuresCoincide}
Let $F$ be a discrete distribution. It holds
\begin{align*}
\AAPR{\DNF}{F} = \AECR{\DNF}{F}.
\end{align*}
\end{lemma}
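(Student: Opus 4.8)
The plan is to show that the ratio $\DNF(I_n(F))/\OPT(I_n(F))$ converges in probability to the constant $c := \lim_{n\to\infty}\ew{\DNF(I_n(F))}/\ew{\OPT(I_n(F))} = \AECR{\DNF}{F}$, and then to deduce, using that this ratio is bounded, that its expectation converges to $c$ as well; the $\liminf$ defining $\AAPR{\DNF}{F}$ is then a genuine limit equal to $c$. We may assume that $F$ is not the point mass at $0$, since otherwise $\DNF(I_n(F)) = \OPT(I_n(F)) = 0$ for every $n$ and the two measures agree trivially. Put $\gamma := \gamma(F) = \lim_n \ew{\OPT(I_n(F))}/n$ and $\delta := \lim_n \ew{\DNF(I_n(F))}/n$; both limits exist by Fekete's lemma as remarked above, so $c = \delta/\gamma$, and $\gamma > 0$ because packing $\lceil 1/s_i\rceil$ copies of any item size $s_i$ with $p_i>0$ into a common bin already covers $\Omega(n)$ bins in expectation.

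The one substantial ingredient is concentration of $\OPT(I_n(F))$. Changing a single item of a list changes $\OPT$ by at most $1$: deleting the item in the differing position destroys at most one bin of an optimal solution, and the remaining bins form a feasible solution for both lists. Hence McDiarmid's bounded differences inequality gives $\prob{\,|\OPT(I_n(F)) - \ew{\OPT(I_n(F))}| \ge t\,} \le 2\exp(-2t^2/n)$ for all $t > 0$. Since $\ew{\OPT(I_n(F))} = \gamma n + o(n)$, it follows that for every fixed $\epsilon \in (0,\gamma)$ the event $G_n := \{(\gamma-\epsilon)n \le \OPT(I_n(F)) \le (\gamma+\epsilon)n\}$ satisfies $\prob{\overline{G_n}} = o(1)$ as $n\to\infty$.

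It remains to estimate $\ew{\DNF(I_n(F))/\OPT(I_n(F))}$, where we adopt the convention that the ratio equals $1$ when $\OPT(I_n(F)) = 0$; note that $0 \le \DNF \le \OPT$, so the ratio always lies in $[0,1]$, and that $\DNF(I_n(F)) \le n$. On $G_n$ (and for $n$ large enough that $(\gamma-\epsilon)n \ge 1$) we have $\DNF(I_n(F))/\OPT(I_n(F)) \le \DNF(I_n(F))/((\gamma-\epsilon)n)$, while the contribution of $\overline{G_n}$ to the expectation is at most $\prob{\overline{G_n}} = o(1)$; together with $\ew{\DNF(I_n(F))\mathbf{1}_{G_n}} \le \ew{\DNF(I_n(F))} = \delta n + o(n)$ this yields $\ew{\DNF(I_n(F))/\OPT(I_n(F))} \le \frac{\delta n + o(n)}{(\gamma-\epsilon)n} + o(1)$. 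For the matching lower bound, on $G_n$ we have $\DNF(I_n(F))/\OPT(I_n(F)) \ge \DNF(I_n(F))/((\gamma+\epsilon)n)$, and $\ew{\DNF(I_n(F))\mathbf{1}_{G_n}} \ge \ew{\DNF(I_n(F))} - n\,\prob{\overline{G_n}} = \delta n + o(n)$, so $\ew{\DNF(I_n(F))/\OPT(I_n(F))} \ge \frac{\delta n + o(n)}{(\gamma+\epsilon)n}$. Letting $n\to\infty$ and then $\epsilon \to 0$ gives $\lim_n \ew{\DNF(I_n(F))/\OPT(I_n(F))} = \delta/\gamma = \AECR{\DNF}{F}$, which is precisely $\AAPR{\DNF}{F}$.

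The main obstacle is really just the interchange of limit and expectation in the definition of $\AAPR{\DNF}{F}$: a single item can shift the positions of many later bin boundaries, so $\DNF$ does not obviously concentrate as cleanly as $\OPT$, and the argument above sidesteps this by controlling the rare event $\overline{G_n}$ only through the trivial bound $\DNF \le n$ rather than through a concentration statement for $\DNF$. (Alternatively one can verify that $\DNF$ also has bounded differences — changing one item changes the number of filled bins by at most $2$, via the invariant that in two coupled runs the one that is ahead in completed bins always has the smaller current fill level — and then two-sided concentration of both quantities gives the result directly; but this stronger fact is not needed here.)
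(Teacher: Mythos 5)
Your proof is correct. The paper itself does not present a proof of this lemma---it only refers to the extended version of~\cite{AverageAnalysisBP}---so there is no in-paper argument to compare against line by line. Your route, establishing concentration of $\OPT(I_n(F))$ via McDiarmid's bounded-differences inequality (using that changing one item alters $\OPT$ by at most $1$) and then exploiting the deterministic bounds $0 \le \DNF(I_n(F)) \le \OPT(I_n(F))$ and $\DNF(I_n(F)) \le n$ to control the contribution of the rare event where $\OPT$ deviates from $\gamma(F)\, n$, is a clean and standard way to justify the exchange of limit and expectation, and the algebra is correct. Two minor remarks. First, in the degenerate case where $F$ puts all its mass on item size~$0$, the quantity $\AECR{\DNF}{F}$ is formally~$0/0$, so the lemma should be read under the implicit assumption $\gamma(F) > 0$; your initial case split handles this. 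Second, your parenthetical claim that $\DNF$ also has bounded differences is correct---the coupling invariant you sketch (the run that is ahead in completed bins always has the smaller or equal current fill level) does hold and in fact yields a Lipschitz constant of~$1$ rather than~$2$---but, as you note, it is not needed for the argument.
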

A proof of a similar statement can be found in the extended version of~\cite{AverageAnalysisBP}. For our purposes it is easier to deal with $\AECR{\DNF}{F}$, so we will only mention this measure in the following.

In order to study $\ew{\DNF(I_n(F))}$, it will be useful to think of $\DNF(I_n(F))$ as a Markov chain. We will give a brief introduction to Markov chains in Section~\ref{section:BasicsInMC} in the appendix. A comprehensive overview can be found in \cite{MixingTimes}. The state space is given by the possible arising bin levels, where we subsume all bin levels greater than or equal to $1$ and the bin level $0$ to a special state, which we call the \emph{closed} state. This Markov chain is irreducible. 

Sometimes it will be necessary that the Markov chain does not start in the closed state, but with bin level $\ell$. $\DNF(\ell, L)$ denotes the number of bins that~$\DNF$ closes on input~$L$, starting with bin level $\ell$. We set $\DNF(L) := \DNF(c, L)$, where $c$ denotes the closed state.

A first important observation is that we can restrict ourselves to discrete distributions $F$, such that the Markov chain induced by $F$ and $\DNF$ is aperiodic.

\begin{lemma}\label{lemma:MarkovChain}
Let $F$ be a discrete distribution and $d \in \NN_{\geq 2}$. If the Markov chain induced by $F$ and $\DNF$ is $d$-periodic then $\AECR{\DNF}{F} = 1$.
\end{lemma}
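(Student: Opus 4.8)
The plan is to exploit the $d$-periodic structure of the Markov chain to show that $\DNF$ is essentially forced to behave like an optimal algorithm, so that the waste is sublinear and hence the ratio of expected fill counts tends to $1$. Recall that the state space consists of the achievable bin levels together with the single closed state $c$, and that irreducibility already holds. Since the chain is $d$-periodic with $d\ge 2$, the state space partitions into cyclic classes $C_0,\dots,C_{d-1}$, where $c\in C_0$ (say), and every transition moves from $C_j$ to $C_{j+1 \bmod d}$. The key structural consequence I want to extract is that the bin level, read modulo something, is a deterministic function of the number of items packed so far: after packing $k$ items into the current bin (since it was last opened from the closed state), the current partial level lies in a prescribed cyclic class $C_{k \bmod d}$, and in particular a bin can only be closed when $k\equiv 0\pmod d$, i.e. after a number of items that is a multiple of $d$.

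First I would make this precise by arguing that all item sizes $s_i$ in the support must themselves be ``compatible'' with the periodicity: because the chain is irreducible and $d$-periodic, starting from $c$ and adding a single item of size $s_i$ lands in $C_1$, and more generally any walk that returns to $c$ (i.e. closes a bin) has length divisible by $d$. Translating this back to item sizes, every perfect-packing-type relation among support sizes that sums to an integer (a closed bin) must use a number of items divisible by $d$; one can phrase this as: there is a common ``granularity'' forcing each prefix sum of packed items to sit at a level determined mod $d$ by the count of items. The upshot is that $\DNF$, run on $I_n(F)$, closes a bin exactly at the moments when the running item count (within the current bin) hits a multiple of $d$ and the level has reached $\ge 1$; because of irreducibility and finiteness, the chain is in fact guaranteed to close every bin using a bounded number of items, and moreover the total filled size of each closed bin is a fixed integer on average — more carefully, the expected overshoot per closed bin is $O(1)$, uniformly.

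Next I would convert ``$O(1)$ waste per closed bin'' into ``$o(n)$ total waste.'' The number of bins $\DNF$ closes on $n$ items is $\Theta(n)$ only if the per-bin item count is bounded; the periodicity plus irreducibility plus finiteness of the state space gives exactly this (the return time to $c$ has finite, in fact geometrically-tailed, distribution). So $W^{\DNF}(I_n(F)) = s(I_n(F)) - \DNF(I_n(F))$ equals the sum over closed bins of (bin level at closing $-\,1$), which is a sum of $\Theta(n)$ i.i.d.-like bounded overshoot terms — but here is the crucial point the periodicity buys us: the overshoot is not merely bounded, it is \emph{zero} (or deterministic and then absorbable), because a bin can only close at count-multiples of $d$ and the level there is pinned down. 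Hence $W^{\DNF}(I_n(F)) = o(n)$, in fact $O(1)$ in expectation or even identically controlled, which forces $\ew{\DNF(I_n(F))}/n \to \gamma(F)$ where $\gamma(F)=\lim \ew{\OPT(I_n(F))}/n$ as well, since $\OPT$ cannot do asymptotically better than a packing with vanishing waste. Combining, $\AECR{\DNF}{F} = \lim \ew{\DNF(I_n(F))}/\ew{\OPT(I_n(F))} = 1$.

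The main obstacle I anticipate is the second step: cleanly deducing from ``$d$-periodic chain'' the statement that the bin level is a deterministic function (mod the appropriate lattice) of the item count, and hence that the closing overshoot vanishes or is otherwise negligible. Proving that $d$-periodicity of the level-tracking chain implies such an arithmetic rigidity of the support of $F$ requires care — one must rule out that periodicity arises ``accidentally'' without constraining waste. I would handle this by working in the natural sublattice of $\QQ$ generated by the support (all sizes are rational), reducing levels modulo the lattice spacing, and showing the cyclic classes $C_0,\dots,C_{d-1}$ are precisely residue classes there; then the closing level being in $C_0$ forces it to be exactly an integer, giving zero overshoot. The remaining parts — Fekete-type existence of the limits (already granted in the excerpt) and the bounded return time — are routine given irreducibility and finiteness of the state space.
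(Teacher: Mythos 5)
Your central structural claim is false, and the argument built on it collapses. You assert that $d$-periodicity pins down the closing level (``the overshoot is not merely bounded, it is \emph{zero}''), hence $W^{\DNF}(I_n(F))=o(n)$, and you then conclude the ratio is $1$ by comparing both $\DNF$ and $\OPT$ to a packing with vanishing waste. Consider $F$ uniform on the two sizes $0.6$ and $0.7$: the induced chain is $2$-periodic (every bin closes after exactly two items), yet every closed bin has level in $[1.2,1.4]$, so the overshoot is bounded away from $0$ and the total waste of $\DNF$ is linear in $n$. Your lattice argument also misfires because the ``closed'' state subsumes \emph{all} levels $\ge 1$ together with level $0$, so membership of the closing step in the cyclic class of $c$ says nothing about the closing level being an integer. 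The lemma is nevertheless true in this example --- not because $\DNF$ wastes little, but because $\OPT$ wastes exactly as much: no two items sum to less than $1$ and no single item reaches $1$, so $\OPT$ also covers exactly $\lfloor n/2\rfloor$ bins.

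That is the route the paper takes, and it is the repair your proof needs: show that $d$-periodicity forces \emph{every} item size into the interval $[\nicefrac{1}{d},\nicefrac{1}{d-1})$. For the largest size $s_1$ this is immediate ($(d-1)s_1<1\le d\,s_1$). For the smallest size $s_m$, if $b$ is minimal with $b\,s_m\ge 1$ then periodicity forces $b=kd$; the paper rules out $k\ge 2$ by exhibiting the sequence of $kd-2$ copies of $s_m$ followed by one $s_1$, which would close a bin with $kd-1$ items and contradict $d$-periodicity. Once all sizes lie in $[\nicefrac{1}{d},\nicefrac{1}{d-1})$, any $d$ items cover a bin and no $d-1$ items do, so $\DNF(I_n(F))=\OPT(I_n(F))=\lfloor n/d\rfloor$ deterministically and $\AECR{\DNF}{F}=1$. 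In short: the correct mechanism is that the adversarial distribution is so rigid that $\OPT$ can do no better than $\DNF$, not that $\DNF$ achieves sublinear waste; the first paragraph of your proposal (bins close only after a multiple of $d$ items) is the right observation, but it must be converted into the size-interval statement rather than into a zero-overshoot claim.
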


\begin{proof}
We show that if the Markov chain, induced by $F$ and $\DNF$ is $d$-periodic with $d \geq 2$, then all item sizes lie in the interval $[d^{-1}, (d-1)^{-1})$. In this case we cannot perform better than putting $d$ arbitrary items in a bin.

Let $(s_1, \ldots, s_m)$ be the vector of item sizes corresponding to $F$. We assume that $s_1$ is the largest item size, and $s_m$ the smallest. W.l.o.g. we can assume that $s_m > 0$. Since the Markov chain is $d$-periodic, it is clear that $(d-1) \cdot s_1 < 1$, but $d \cdot s_1 \geq 1$. Therefore $s_1 \in [d^{-1}, (d-1)^{-1})$. Let $b \in \NN$, such that $(b-1)\cdot s_m <1$ and $b \cdot s_m \geq 1$. It follows from the $d$-periodicity that $b = kd$, where $k \in \NN$. Then $s_m \in [b^{-1}, (b-1)^{-1}) = [(kd)^{-1}, (kd-1)^{-1})$. Now look at the sequence consisting of $kd-2$ times item $s_m$ and finally one time item $s_1$. Then
\begin{align*}
(kd-2) \cdot \frac{1}{kd} + \frac{1}{d} \geq 1,
\end{align*}
if and only if $k \geq 2$. That means, if $k \geq 2$, there exists a sequence of items, which closes a bin using $(kd-1)$ items, which is a contradiction to $d$-periodicity. Hence, $k=1$ and $s_m \in [d^{-1}, (d-1)^{-1})$.
\qed
\end{proof}

Therefore, we will assume in the following that the Markov chain induced by a discrete distribution $F$ and $\DNF$ is aperiodic, and so it converges to a unique stationary distribution $\pi_F$. It holds $\pi_F(c) = \ew{T_{\DNF}^F}^{-1}$, where $T_{\DNF}^F$ denotes the number of items we need to close a bin, starting with bin level zero.

\begin{lemma}\label{lemma:limits}
Let $F$ be a perfect-packing distribution and $X$ be distributed according to $F$. Then
\begin{align*}
\lim_{n \rightarrow \infty} \frac{\ew{\OPT(I_n(F))}}{n} = \ew{X}.
\end{align*}
For every discrete distribution $F$, it holds
\begin{align*}
\lim_{n \rightarrow \infty} \frac{\ew{\DNF(I_n(F))}}{n} = \frac{1}{\ew{T_{\DNF}^F}}.
\end{align*}
\end{lemma}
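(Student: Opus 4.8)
The plan is to prove the two limits separately, since they rely on different tools.

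\medskip

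\noindent\textbf{The first limit.} Let $F$ be a perfect-packing distribution and $X\sim F$. The upper bound $\OPT(I_n(F))\le s(I_n(F))$ holds deterministically, since each filled bin consumes total item size at least $1$; taking expectations and dividing by $n$ gives $\ew{\OPT(I_n(F))}/n\le \ew{s(I_n(F))}/n=\ew{X}$, so $\limsup_n \ew{\OPT(I_n(F))}/n\le\ew{X}$. For the matching lower bound I would use exactly the definition of the perfect-packing property together with the waste notation introduced before Definition~\ref{def:PerfectPacking}: by definition $W^{\OPT}(L)=s(L)-\OPT(L)$, so
\begin{align*}
\frac{\ew{\OPT(I_n(F))}}{n}=\frac{\ew{s(I_n(F))}}{n}-\frac{\ew{W^{\OPT}(I_n(F))}}{n}=\ew{X}-\frac{o(n)}{n},
\end{align*}
and letting $n\to\infty$ yields $\liminf_n\ew{\OPT(I_n(F))}/n\ge\ew{X}$. (We already know from the discussion after Fekete's lemma that this $\liminf$ is in fact a limit.) Combining the two bounds gives the claim. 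This part is essentially immediate given the definitions; there is no real obstacle here.

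\medskip

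\noindent\textbf{The second limit.} This is the substantive part and is a renewal-theoretic / Markov-chain statement. Here I would argue in one of two equivalent ways. The renewal-theory route: the sequence of items can be cut into i.i.d.\ blocks, where a new block begins each time $\DNF$ opens a fresh bin from the closed state; the length of a block is a copy of $T_{\DNF}^F$, the number of items needed to fill a bin starting from level zero. Since $F$ has finite support and the item sizes are positive (we may assume $s_m>0$ as in the proof of Lemma~\ref{lemma:MarkovChain}), $T_{\DNF}^F$ has finite expectation. The number of bins $\DNF$ fills among the first $n$ items is then the number of renewals in $n$ steps, call it $R_n$, and the elementary renewal theorem gives $\ew{R_n}/n\to 1/\ew{T_{\DNF}^F}$. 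One only has to note that $\DNF(I_n(F))$ and $R_n$ differ by at most $1$ (the possibly-incomplete last bin), which does not affect the limit. The Markov-chain route, which matches the framing the paper has set up: the state space is finite (bin levels are rationals with a common denominator, plus the closed state), the chain is irreducible, and by our standing assumption aperiodic, hence it converges to the stationary distribution $\pi_F$ with $\pi_F(c)=\ew{T_{\DNF}^F}^{-1}$ as already recorded in the text; $\DNF(I_n(F))$ counts visits to the closed state among the first $n$ transitions, and the ergodic theorem for finite Markov chains gives $\ew{\DNF(I_n(F))}/n\to\pi_F(c)=1/\ew{T_{\DNF}^F}$. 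Either way the existence of the limit is already guaranteed by Fekete's lemma (superadditivity of $\DNF$), so strictly only the value needs to be identified.

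\medskip

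\noindent The main obstacle, such as it is, lies in handling the boundary effect at the end of the sequence cleanly: the last bin may be left open, and in the Markov-chain picture the walk may be in an arbitrary state at time $n$ rather than having just closed a bin. Both issues contribute only an $O(1)$ additive error to $\ew{\DNF(I_n(F))}$ and hence vanish after dividing by $n$; I would dispatch this with a one-line sandwich argument ($R_n\le \DNF(I_n(F))\le R_n+1$, or the analogous bound on return counts). Everything else is a direct invocation of standard renewal theory or the ergodic theorem for finite aperiodic irreducible chains, both of which the paper is entitled to cite (e.g.\ \cite{MixingTimes}).
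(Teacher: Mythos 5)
Your proof of the first limit is exactly the paper's argument (rearrange $s(L)=\OPT(L)+W^{\OPT}(L)$, take expectations, use the perfect-packing property), so nothing to add there. For the second limit you offer two routes, and both are essentially what the paper does: the identity $\pi_F(c)=\ew{T_{\DNF}^F}^{-1}$ is precisely the renewal-reward relation, and the paper's one-step estimate $\left|\ew{\DNF(0,I_n(F))}-\mathbb{E}_{s\sim\pi_F}[\DNF(s,I_n(F))]\right|\le 1$ is your ``sandwich'' boundary argument. The one place you should be more careful is the periodic case. The second statement of the lemma is asserted \emph{for every discrete distribution}, and the chain can be $d$-periodic; the paper notices this and handles it as a separate branch (there $T_{\DNF}^F=d$ deterministically and $\ew{\DNF(I_n(F))}=\lfloor n/d\rfloor$). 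Your renewal-theory route covers this automatically, since the elementary renewal theorem needs only $\ew{T_{\DNF}^F}<\infty$ and no aperiodicity; but your Markov-chain phrasing (``by our standing assumption aperiodic \ldots\ hence it converges to the stationary distribution $\pi_F$'') invokes marginal convergence, which genuinely fails for periodic chains. To make that route fully correct you must either appeal to the \emph{time-average} ergodic theorem for irreducible positive-recurrent chains, which does not require aperiodicity, or treat the periodic case separately as the paper does. This is a fixable imprecision rather than a structural gap, but as written the Markov-chain branch of your argument does not cover all the distributions the lemma promises to cover. Finally, a tiny nit: with your definition of $R_n$ one has $\DNF(I_n(F))=R_n$ exactly (neither counts the open last bin), so the $\pm 1$ slack is harmless but unnecessary.
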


\begin{proof}
We start proving the first equation. It holds
\begin{align*}
n \cdot \ew{X} = \ew{s(I_n(F))} = \ew{\OPT(I_n(F))} + \ew{W^{\OPT}(I_n(F))}.
\end{align*}
Since $F$ is a perfect-packing distribution, it follows that 
\begin{align*}
\left| n \cdot \ew{X} - \ew{\OPT(I_n(F))} \right| = o(n).
\end{align*}
Hence,
\begin{align*}
\lim_{n \rightarrow \infty} \frac{\ew{\OPT(I_n(F))}}{n} = \ew{X}.
\end{align*}
To prove the second equation we distinguish two cases. At first we assume that the Markov chain induced by $\DNF$ and $F$ is $d$-periodic, where $d \geq 2$. We know from the proof of Lemma \ref{lemma:MarkovChain}, that in this case $T_{\DNF}^F = d$. Therefore
$\ew{\DNF(I_n(F))} = \left\lfloor \frac{n}{d} \right\rfloor$. Hence
\begin{align*}
\lim_{n \rightarrow \infty} \frac{\ew{\DNF(I_n(F))}}{n}
= \frac{1}{d} = \frac{1}{\ew{T_{\DNF}^F}}.
\end{align*}
If the induced Markov chain is aperiodic, then
\begin{align*}
\left| \ew{\DNF(0, I_n(F))} - \mathbb{E}_{s \sim \pi_F}\left[ \DNF(s, I_n(F)) \right] \right| \leq 1.
\end{align*}
It holds $\mathbb{E}_{s \sim \pi_F}\left[ \DNF(s, I_n(F)) \right] = n \cdot \pi_F(c)$ and $\pi_F(c) = \ew{T_{\DNF}^F}^{-1}$, so
\begin{align*}
\lim_{n \rightarrow \infty} \frac{\ew{\DNF(I_n(F))}}{n}
= \lim_{n \rightarrow \infty} \frac{n \cdot \pi_F(c)}{n}
= \frac{1}{\ew{T_{\DNF}^F}}.
\end{align*}
\qed
\end{proof}

So, if $F$ is a perfect-packing distribution, it holds
\begin{align}
\label{eqn:limitAECR}
\AECR{\DNF}{F}
= \frac{1}{\ew{X} \cdot \ew{T_{\DNF}^F}}.
\end{align}

\section{Connection between Asymptotic Expected Competitive Ratio and Random-order Ratio}\label{section:ConnectionBetweenMeasures}
\label{section:ConnectionAECRandRR}

In this section we want to examine the connection between the asymptotic expected competitive ratio and the random-order ratio. At first we want to mention a result, which follows from \cite{Kenyon96}.
\begin{lemma}
\label{lemma:estimateKenyon}
It holds
\begin{align*}
\ROR{\DNF} \leq \AECRb{\DNF}.
\end{align*}
\end{lemma}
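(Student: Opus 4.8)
The plan is to connect the two measures through the observation that, for any fixed distribution $F$, the random input $I_n(F)$ is essentially a random-order presentation of a ``typical'' list drawn from $F$. More precisely, I would fix a perfect-packing distribution $F$ (or any discrete $F$) and let $L_n$ be a list of $n$ items whose empirical distribution converges to $F$; for instance, take $L_n$ to realize the frequencies $p_i$ as closely as integrality permits. Then a uniformly random permutation $\sigma(L_n)$ has the same asymptotic behavior under $\DNF$ as $I_n(F)$, up to lower-order terms coming from sampling without replacement versus with replacement. The point is that $\DNF$ run on the first $k \ll n$ items of $\sigma(L_n)$ sees an item-size sequence whose law is close in total variation to $I_k(F)$, and since $\DNF$'s progress depends only on a bounded-memory Markovian quantity (the current bin level, which takes finitely many values), these small discrepancies accumulate only to an $o(n)$ additive error in the number of closed bins.

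The key steps, in order, would be: (1) Recall from Lemma~\ref{lemma:limits} that $\ew{\DNF(I_n(F))}/n \to 1/\ew{T_{\DNF}^F}$ and $\ew{\OPT(I_n(F))}/n \to \ew{X} = \gamma(F)$ when $F$ is a perfect-packing distribution, so $\AECR{\DNF}{F} = 1/(\ew{X}\ew{T_{\DNF}^F})$. (2) For the list $L_n$ inducing (approximately) $F$, establish that $\OPT(L_n)/n \to \gamma(F)$ as well (this holds because the empirical distribution converges to $F$ and $\OPT$ is well-behaved under such convergence, using superadditivity and Fekete as already invoked in Section~\ref{section:basicStatements}). (3) Show $\mathbb{E}_\sigma[\DNF(\sigma(L_n))] = \ew{\DNF(I_n(F))} + o(n)$ by a coupling between the without-replacement sequence $\sigma(L_n)$ and the i.i.d.\ sequence $I_n(F)$: couple them to agree for as long as possible; a standard birthday-type argument shows they agree on a $1 - o(1)$ fraction of positions in expectation, and on the disagreeing positions $\DNF$ loses at most a bounded number of bins per disagreement, so the total loss is $o(n)$. (4) Combine: $\mathbb{E}_\sigma[\DNF(\sigma(L_n))]/\OPT(L_n) \to 1/(\ew{X}\ew{T_{\DNF}^F}) = \AECR{\DNF}{F}$, hence $\ROR{\DNF} \le \liminf_{n}\mathbb{E}_\sigma[\DNF(\sigma(L_n))]/\OPT(L_n) = \AECR{\DNF}{F}$. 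Taking the infimum over perfect-packing $F$ (which, by the forthcoming result that perfect-packing distributions are the worst case, equals $\AECRb{\DNF}$) yields $\ROR{\DNF} \le \AECRb{\DNF}$.

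There is an alternative, possibly cleaner route that avoids the sampling-without-replacement coupling: for a fixed $F$ of interest, directly build adversarial lists $L_N$ of growing length by concatenating many independent blocks, each block being a fresh sample $I_n(F)$ for $n$ large; with high probability over the block samples, a random permutation of $L_N$ locally resembles i.i.d.\ draws from $F$, and one argues $\DNF$ on the shuffled concatenation performs like $\DNF$ on one long i.i.d.\ sequence. This is morally the same argument but lets one push the i.i.d.\ structure into the construction of the adversary's list rather than fighting the hypergeometric fluctuations directly.

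I expect the main obstacle to be step~(3): making rigorous that a uniformly random permutation of a list with a prescribed empirical distribution behaves, from $\DNF$'s local (Markov-chain) viewpoint, like an i.i.d.\ stream, and controlling the error at the $o(n)$ level. The subtlety is that $\DNF$'s bin-level process is not a plain function of a sliding window — the level depends on the entire history since the last closed bin — so one cannot just bound the number of mismatched positions; one must argue that mismatches do not propagate, which is where the bounded number of distinct bin levels and the renewal structure (bins close regularly, giving frequent ``resets'') are essential. Quantifying how quickly the without-replacement marginals converge to the i.i.d.\ marginals (a total-variation bound of order $O(k^2/n)$ for the first $k$ items, say) and summing the per-renewal errors is the technical heart of the proof; the rest is bookkeeping with limits and the infimum over distributions.
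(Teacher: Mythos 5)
Your route is genuinely different from the paper's, and much heavier. The paper proves Lemma~\ref{lemma:estimateKenyon} with a short conditioning argument: partition the support of $I_n(F)$ into permutation classes $\mathcal{L}_n$; since the items are i.i.d., conditioned on the multiset $H$ the arrival order is uniformly random, so $\ew{\DNF(I_n(F))}/\ew{\OPT(I_n(F))}$ is a ratio of sums indexed by $H\in\mathcal{L}_n$, and the mediant inequality $(\sum_i b_i)/(\sum_i c_i)\ge\min_i b_i/c_i$ gives $\ew{\DNF(I_n(F))}/\ew{\OPT(I_n(F))}\ge\min_{H\in\mathcal{L}_n}\mathbb{E}_\sigma[\DNF(\sigma(H))]/\OPT(H)$, which is bounded below by the random-order ratio in the limit. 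No couplings, no total-variation estimates, no Markov-chain facts are needed; that is exactly why this direction holds unconditionally, while the converse (Theorem~\ref{thm:RR_AECR}) needs the $s(L)\in\omega(N(L)^{2/3})$ hypothesis. Your plan --- for each $F$, exhibit lists $L_n$ realizing $F$ and show their random-order behavior matches the i.i.d.\ behavior --- is a logically valid way to obtain $\ROR{\DNF}\le\AECR{\DNF}{F}$ for every $F$ (and the detour through perfect-packing distributions and Lemma~\ref{lemma:perfectPackingIsWorstCase} is unnecessary: once completed, the argument applies to every discrete $F$ directly).

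However, there is a genuine gap precisely where you flag it, in step~(3), and the concrete argument you sketch for it does not work: a coupling of the full-length sequences $\sigma(L_n)$ and $I_n(F)$ cannot agree on a $1-o(1)$ fraction of positions, because the i.i.d.\ sequence contains $\Theta(n)$ repeated draws of individual list elements, so any coupling must disagree on a constant fraction of positions; the $O(k^2/n)$ total-variation bound you invoke is useful only for prefixes or blocks of length $k=o(\sqrt{n})$. Making step~(3) rigorous is exactly the content of Theorem~\ref{thm:differenceDnfAndAsymptotic}: cut $\sigma(L_n)$ into blocks of length $b\approx n^{1/3}$, compare each block to $I_b(F)$ with error $b^3/n$ per block (Lemma~\ref{lemma:estimateDifferenceExpectations}), and absorb the dependence on the bin level at each block boundary via the stationary distribution (cost at most one bin per block), for a total error of order $n^{2/3}$; the statement your step~(2) needs for $\OPT$ is Theorem~\ref{thm:differenceOptAndAsymptotic}, which in turn rests on a nontrivial concentration result cited from the literature. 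So your proposal would in effect reprove the lemma through the machinery the paper develops for the much harder Theorem~\ref{thm:RR_AECR}; it can be completed along those lines, but as written the technical heart is missing and the one explicit coupling argument offered is incorrect, whereas the conditioning-plus-mediant argument yields the lemma in a few lines.
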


\begin{proof}
Let $L_n = \{L = (a_1, \ldots, a_n) \, : \, \prob{I_n(F) = L} > 0\}$. Then there exists a set of lists $\mathcal{L}_n$, such that $L_n = \dot{\bigcup}_{H \in \mathcal{L}_n} \{L \, : \, \exists \, \sigma \in S_n \text{ s.t. }  L = \sigma(H) \}$. Using the inequality $(\sum_{i=1}^n b_i)/(\sum_{i=1}^n c_i) \geq \min_{1 \leq i \leq n} b_i/c_i$, it follows
\begin{align*}
\frac{\ew{\DNF(I_n(F))}}{\ew{\OPT(I_n(F))}}
\geq \min_{H \in \mathcal{L}_n} \frac{\mathbb{E}_\sigma\left[\DNF(\sigma(H))\right]}{\mathbb{E}_\sigma\left[\OPT(\sigma(H))\right]}
= \min_{H \in \mathcal{L}_n} \frac{\mathbb{E}_\sigma\left[ \DNF(\sigma(H)) \right]}{\OPT(H)}.
\end{align*}
\qed
\end{proof}

We will show that the performance measures coincide if the sum of the items increases fast enough in terms of the number of items. A side product of the results of this section is the following: In \cite{randomOrderNextFit} the authors noted that in bin packing for the algorithm Next-fit and a certain list $L$, the following relationship holds:
\begin{align}
\label{eqn:motivationConcatenatedLists}
\lim_{j \rightarrow \infty} \frac{\mathbb{E}_\sigma\left[ \operatorname{NF}(\sigma(L^j)) \right]}{\OPT(L^j)}
= \AECR{\operatorname{NF}}{F},
\end{align}
where $L^j$ denotes the concatenation of $j$ copies of $L$ and $F$ is the discrete distribution induced by $L$. They asked if this result holds for arbitrary lists $L$. We can show that the answer is true in the context of $\DNF$.

In the following let $K$ denote a universal constant, which does not depend on the considered list $L$. We establish the following two bounds:

\begin{theorem}
\label{thm:differenceOptAndAsymptotic}
Let $L$ be an arbitrary instance and let $F$ be the induced discrete distribution. Then it holds
\begin{align*}
\left| \OPT(L) - N(L) \cdot \gamma(F) \right| \leq K \cdot N(L)^{2/3}.
\end{align*}
\end{theorem}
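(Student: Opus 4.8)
The plan is to prove the two one–sided estimates $\OPT(L)\le N(L)\,\gamma(F)+K\,N(L)^{2/3}$ and $\OPT(L)\ge N(L)\,\gamma(F)-K\,N(L)^{2/3}$ separately; throughout write $N:=N(L)$. The first is the easy direction, and in fact admits the sharper bound $\OPT(L)\le N\gamma(F)$ with no error term at all. I would prove it by a concatenation argument. For every $M\in\NN$, superadditivity of $\OPT$ gives $M\cdot\OPT(L)\le\OPT(L^M)$, where $L^M$ is the $M$-fold concatenation of $L$. Drawing $MN$ items i.i.d.\ from $F$ is the same as throwing $MN$ balls independently and uniformly into the $N$ positions of $L$ and reading off the corresponding sizes, so if position $j$ receives $C_j$ balls then the multisets $L^M$ and $I_{MN}(F)$ differ in exactly $\sum_{j=1}^{N}|C_j-M|$ items; since inserting or deleting a single item changes $\OPT$ by at most $1$, this gives $\bigl|\OPT(L^M)-\OPT(I_{MN}(F))\bigr|\le\sum_{j}|C_j-M|$ pointwise. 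Taking expectations and using $\ew{|C_j-M|}\le\sqrt{\operatorname{Var}(C_j)}\le\sqrt{M}$ yields $\OPT(L^M)\le\ew{\OPT(I_{MN}(F))}+N\sqrt{M}$, and $\ew{\OPT(I_{MN}(F))}\le MN\gamma(F)$ because $n\mapsto\ew{\OPT(I_n(F))}$ is superadditive with Fekete limit $\gamma(F)$. Dividing by $M$ and letting $M\to\infty$ gives $\OPT(L)\le N\gamma(F)$.

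For the lower bound — the real content — I would pass to the covering LP. Let $\operatorname{LP}(F)$ be the maximum of $\sum_{b}x_b$ over $x\ge0$ assigning weights to minimal covering configurations $b$ subject to $\sum_{b}x_b\,b_i\le p_i$ for each size $s_i$. The first step is $\gamma(F)\le\operatorname{LP}(F)$: any integral covering of $I_n(F)$ yields, after normalising, a feasible fractional solution for the empirical distribution $\widehat{F}_n$, so $\ew{\OPT(I_n(F))}/n\le\ew{\operatorname{LP}(\widehat{F}_n)}$, and since the optimal LP value is a continuous (concave) function of the probability vector while $\widehat{F}_n\to F$, the right-hand side tends to $\operatorname{LP}(F)$. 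The second step is to exhibit an integral covering of $L$ with at least $N\operatorname{LP}(F)-O(N^{2/3})$ bins: discard every item of size below $N^{-1/3}$, whose total size — and hence the number of bins any packing can devote to them — is at most $N^{2/3}$; round the surviving sizes down onto a grid of mesh $N^{-1/3}$, leaving only $O(N^{1/3})$ distinct sizes and changing the total available size by only $O(N^{2/3})$; solve the LP for this rounded instance, whose optimal basic solution is supported on $O(N^{1/3})$ configurations, and round that solution down to integral bin counts, losing $O(N^{1/3})$ further bins. Since a bin covered with respect to the rounded-down sizes is a fortiori covered with respect to the true sizes, the outcome is a genuine covering of $L$, and collecting the losses gives $\OPT(L)\ge N\operatorname{LP}(F)-O(N^{2/3})\ge N\gamma(F)-O(N^{2/3})$.

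The main obstacle is carrying out the lower bound with a constant that is genuinely universal. The number $m$ of distinct item sizes occurring in $L$ can be as large as $N$, so one cannot simply round an optimal LP solution for $L$ itself — that would cost $\Theta(m)$ bins — and the discretization must be balanced so that coarsening the large items enough to leave only $O(N^{1/3})$ relevant configurations does not at the same time destroy more than $O(N^{2/3})$ covered bins; this trade-off is exactly what fixes the exponent $2/3$. The delicate point is that in bin \emph{covering} a very small item can be \emph{catalytic}, completing many otherwise-uncovered bins, so one has to check carefully that discarding all items below the threshold and rounding the rest down costs only the claimed $O(N^{2/3})$ in LP value; this, together with the (standard) continuity of an LP value in its right-hand side used for $\gamma(F)\le\operatorname{LP}(F)$, is where the bookkeeping sits.
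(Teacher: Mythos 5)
Your first half is fine, and in fact stronger than needed: the concatenation/coupling argument (comparing $L^M$ with $I_{MN}(F)$ via the multinomial counts $C_j$ and using Fekete's $\gamma(F)=\sup_n \ew{\OPT(I_n(F))}/n$) correctly yields $\OPT(L)\le N(L)\cdot\gamma(F)$. The gap is in the lower bound, precisely at the step you flagged as needing "careful bookkeeping": discarding all items of size below $N^{-1/3}$ is \emph{not} justified by the fact that their total size is at most $N^{2/3}$. In bin covering the damage caused by removing items is bounded by the \emph{number} of bins that contain a removed item, not by the removed \emph{size}: a single item of size $1/N$ can be exactly what lifts a bin from just below $1$ to $1$. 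Concretely, take $L$ with $N/2$ items of size $1-\tfrac1N$ and $N/2$ items of size $\tfrac1N$. Then $\OPT(L)=N/2$ and $N\gamma(F)=N\operatorname{LP}(F)=N/2+o(N)$, but after discarding the small items every covering (even fractional, even before grid rounding) must use two large items per bin, so your construction certifies only about $N/4$ covered bins. The loss is $\Theta(N)$, not $O(N^{2/3})$, so the chain $\OPT(L)\ge N\operatorname{LP}(F)-O(N^{2/3})\ge N\gamma(F)-O(N^{2/3})$ breaks. The catalytic role of small items is not a technicality to be checked but the reason this reduction fails; the standard repair (as in asymptotic schemes for bin covering) is to \emph{keep} the small items --- e.g.\ glue them greedily into pseudo-items of size in $[N^{-1/3},2N^{-1/3}]$, or use them to top off bins opened by large items --- and to handle the large sizes by linear grouping rather than a fixed grid, after which an LP-rounding argument of the kind you sketch can be made to work. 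As written, though, the lower-bound direction is not established.

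For comparison, the paper takes a route that avoids any structural/LP argument: since $\OPT$ is permutation-invariant, it writes $\OPT(L)=\mathbb{E}_\sigma[\OPT(\sigma(L))]$, splits $\sigma(L)$ into blocks of length $b=\lceil N^{1/3}\rceil$, uses near-additivity of $\OPT$ under concatenation, compares each block (sampling without replacement) to an i.i.d.\ block $I_b(F)$ via a total-variation bound of order $b^2/N$ (costing $b^3/N$ per block), and finally invokes a cited estimate $\bigl|\ew{\OPT(I_N(F))}-N\gamma(F)\bigr|\le K N^{2/3}$, combining everything by the triangle inequality. If you want to salvage your approach, either import that last cited estimate and the block/TV comparison, or carry out the small-item and grouping repairs above; the exponent $2/3$ then comes out of balancing the block length (respectively the grouping granularity) exactly as you anticipated.
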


\begin{theorem}
\label{thm:differenceDnfAndAsymptotic}
Let $L$ be an arbitrary instance and let $F$ be the induced discrete distribution. We assume that the Markov chain induced by $\DNF$ and $F$ possesses a unique stationary measure $\pi_F$. Then it holds
\begin{align*}
\left| \mathbb{E}_\sigma\left[ \DNF(\sigma(L)) \right] - N(L) \cdot \pi_F(c) \right| \leq K \cdot N(L)^{2/3}.
\end{align*}
\end{theorem}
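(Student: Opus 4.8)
The overall plan is to compare $\mathbb{E}_\sigma[\DNF(\sigma(L))]$ with the i.i.d.\ quantity $\ew{\DNF(I_N(F))}$, where $N:=N(L)$; the proof of Lemma~\ref{lemma:limits} in fact shows $|\ew{\DNF(I_N(F))} - N\pi_F(c)| \le 1$ (in the aperiodic case via the coupling with the stationary start, in the periodic case because both sides equal $N/d$ up to $1$), so it suffices to bound $|\mathbb{E}_\sigma[\DNF(\sigma(L))] - \ew{\DNF(I_N(F))}|$ by $K\,N^{2/3}$. The basic tool is an elementary monotonicity lemma: for every list $S$ and all starting levels $0 \le \ell_1 \le \ell_2 < 1$ it holds that $\DNF(\ell_1, S) \le \DNF(\ell_2, S) \le \DNF(\ell_1, S) + 1$. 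One proves this by running the two copies of $\DNF$ in parallel and keeping the invariant that either both have closed the same number of bins and the $\ell_2$-copy has the weakly higher level, or the $\ell_2$-copy has closed exactly one more bin and has the weakly lower level; a short case check shows this survives each new item. Together with the superadditivity of $\DNF$ noted at the start of Section~\ref{section:basicStatements}, the lemma yields for any concatenation the sandwich $\sum_{i=1}^k \DNF(L_i) \le \DNF(L_1\cdots L_k) \le \sum_{i=1}^k \DNF(L_i) + (k-1)$, and it also implies that replacing a single item of a list changes $\DNF$ by at most $1$ (before the changed position the two runs coincide; applying the level lemma to the common suffix, together with a case analysis of whether each run closes a bin at the changed position, gives the $\pm1$ bound).

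Second comes a hybrid argument comparing a without-replacement sample with an i.i.d.\ one. Fix a block length $n' \le N/2$. For $t = 0,1,\ldots,n'$ let $\Psi(t)$ be the random list whose first $t$ items are drawn without replacement from $L$ and whose remaining $n'-t$ items are drawn i.i.d.\ from $F$, so that $\Psi(0)$ has the law of $I_{n'}(F)$ and $\Psi(n')$ has the law of a uniformly random without-replacement sample of size $n'$ from $L$. Coupling $\Psi(t-1)$ and $\Psi(t)$ so that they share the first $t-1$ items, share the last $n'-t$ items, and use a maximal coupling at position $t$, the one-item-change bound gives $|\ew[\DNF(\Psi(t))] - \ew[\DNF(\Psi(t-1))]| \le \ew[\tv{q^{(t)} - p}]$, where $q^{(t)}$ is the empirical distribution of the $N-t+1$ items of $L$ not yet drawn. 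The count of size $s_i$ among the first $t-1$ draws is hypergeometric with mean $p_i(t-1)$ and variance at most $p_i(1-p_i)(t-1)$, so by Jensen and Cauchy--Schwarz over the $m$ distinct sizes $\ew[\tv{q^{(t)} - p}] \le \frac{\sqrt{(t-1)m}}{2(N-t+1)} \le \frac{\sqrt{n'm}}{N}$ for $t \le n' \le N/2$. Summing over $t$ gives $\bigl|\ew[\DNF(\text{w/o-repl sample of size } n')] - \ew{\DNF(I_{n'}(F))}\bigr| \le (n')^{3/2}\sqrt m/N$.

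Third, I would cut the uniform random permutation $\sigma(L)$ into $k$ consecutive blocks of length $n' = N/k$ (up to rounding). By exchangeability each block has the marginal law of a without-replacement sample of size $n'$ from $L$, so the concatenation sandwich, the hybrid bound, and $|\ew{\DNF(I_{n'}(F))} - n'\pi_F(c)| \le 1$ combine to give
\[
\bigl|\,\mathbb{E}_\sigma[\DNF(\sigma(L))] - N\pi_F(c)\,\bigr| \;\le\; 2k + \frac{k\,(n')^{3/2}\sqrt m}{N} \;=\; 2k + \sqrt{mN/k}.
\]
Choosing $k = \lceil (mN)^{1/3}\rceil$ balances the two terms at $O((mN)^{1/3})$, and since $m \le N$ always, this is $O(N^{2/3})$ with a constant independent of $L$; the finitely many small values of $N$, for which one only needs the trivial bound $|\cdots| \le N$, are absorbed into $K$.

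The crux is exactly this tradeoff. A naive term-by-term coupling of $\sigma(L)$ directly against $I_N(F)$ only produces an $O(\sqrt{mN})$ bound, which is worthless when the number $m$ of distinct item sizes is large (it may be as large as $N$). Running the comparison block by block is what rescues the argument: the without-replacement error accumulated \emph{within} a block of length $n'$ is only $O((n')^{3/2}\sqrt m/N)$ — cubic rather than quadratic in the block length — so splitting into many short blocks suppresses it, while superadditivity guarantees the splitting itself costs only $O(k)$ in total. The remaining work is bookkeeping: verifying the monotonicity lemma and the concatenation sandwich with the exact additive constant $1$, the hypergeometric moment estimate, and the (immediate) periodic case, where every item lies in $[d^{-1},(d-1)^{-1})$ and both $\DNF(\sigma(L))$ and $N\pi_F(c)$ equal $N/d$ up to a constant.
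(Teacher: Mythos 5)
Your proof is correct, and it shares the paper's skeleton---cut the random permutation into consecutive blocks, compare each block with an i.i.d.\ block, and pay $O(1)$ per block for boundary/starting-level effects via the level-monotonicity of $\DNF$---but the central estimate is genuinely different. The paper bounds the within-block discrepancy between sampling without replacement and i.i.d.\ sampling in one shot: by Lemma~\ref{lemma:EstimateExpectationTV} the total variation distance between the two laws of an entire block of length $b$ is at most $b^2/(2N)$, and since $\DNF\le b$ on such a block, Lemma~\ref{lemma:estimateDifferenceExpectations} gives an error of $b^3/N$ per block, independent of the number $m$ of distinct sizes; with $b=\lceil N^{1/3}\rceil$ this immediately yields $K\cdot N^{2/3}$, and the same lemma serves verbatim for $\OPT$ in Theorem~\ref{thm:differenceOptAndAsymptotic}. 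You instead run a hybrid argument inside each block, exchanging one position at a time, which requires the extra (correct, but algorithm-specific) observation that changing a single item alters $\DNF$ by at most $1$, together with a hypergeometric/Cauchy--Schwarz estimate of $\ew{\tv{q^{(t)}-p}}$; this gives a per-block error $(n')^{3/2}\sqrt m/N$, forces an $m$-dependent block length $n'=N/\lceil(mN)^{1/3}\rceil$, and needs the closing remark $m\le N$. What your route buys is a slightly stronger bound, $O((mN)^{1/3})$ rather than $O(N^{2/3})$, which improves on the paper when there are few distinct sizes; what the paper's route buys is simplicity and generality, since the whole-block TV bound applies to any functional bounded by the block length and hence covers $\OPT$ with no additional work, whereas your one-item Lipschitz step must be re-verified for each algorithm (it does hold for $\OPT$ as well). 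Your other deviation---restarting $\DNF$ at each block boundary and using the sandwich $\sum_i\DNF(B_i)\le\DNF(\sigma(L))\le\sum_i\DNF(B_i)+(k-1)$ instead of tracking the boundary-level distribution $\mu_i$ as the paper does---is an equivalent, if anything cleaner, way to pay the same $O(1)$ per block, since it sidesteps the dependence between the level at a block boundary and the contents of that block.
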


The first step of the proofs is to split up $I_{N(L)}(F)$ or $\sigma(L)$ into smaller sublists, an idea which was brought up in \cite{randomOrderNextFit}. The following lemma shows that the difference between sampling with and without replacement can be controlled if the length of the sublists is sufficiently small compared to $N(L)$.
\begin{lemma}
\label{lemma:estimateDifferenceExpectations}
Let $L = (a_1, \ldots, a_N)$, $F$ be the corresponding induced discrete distribution, and $b \in \NN$. We set $\sigma(L)_{[1:b]} = (a_{\sigma(1)}, \ldots, a_{\sigma(b)})$, where $\sigma$ is an arbitrary permutation of $L$. Then for $A \in \{\DNF, \OPT\}$ it holds
\begin{align*}
\left| \ew{A(\sigma(L)_{[1:b]})} - \ew{A(I_b(F))} \right|
\leq \frac{b^3}{N}.
\end{align*}
\end{lemma}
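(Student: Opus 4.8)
The plan is to couple the process of drawing $b$ items i.i.d.\ from $F$ (sampling with replacement from the multiset underlying $L$) with the process of taking the first $b$ entries of a uniformly random permutation $\sigma(L)$ (sampling without replacement), and to bound the probability that the two sampled sequences differ. The key point is that $A(\cdot)$ for $A\in\{\DNF,\OPT\}$ takes values in $\{0,1,\dots,b\}$ on a list of length $b$, so whenever the two coupled sequences coincide the contributions to the two expectations cancel, and whenever they differ the contribution to $|\ew{A(\sigma(L)_{[1:b]})}-\ew{A(I_b(F))}|$ is at most $b$ times the probability of the differing event. Hence it suffices to show that this probability is at most $b^2/N$.

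First I would set up the coupling explicitly: draw the $b$ items one at a time, and couple the with-replacement draw and the without-replacement draw so that they agree as long as no ``collision'' has occurred, i.e.\ as long as the with-replacement process has not yet re-drawn a position it already used. Conditioned on the first $j-1$ draws agreeing and using distinct positions, the $j$-th with-replacement draw lands on one of those $j-1$ already-used positions with probability $(j-1)/N$, and on this event we allow the two processes to decouple; otherwise they can be made to agree on a fresh position. A union bound over $j=1,\dots,b$ gives
\begin{align*}
\prob{\sigma(L)_{[1:b]}\neq I_b(F)\text{ under the coupling}}
\leq \sum_{j=1}^{b}\frac{j-1}{N}
= \frac{b(b-1)}{2N}
\leq \frac{b^2}{N}.
\end{align*}
Then, writing $D$ for the differing event,
\begin{align*}
\left|\ew{A(\sigma(L)_{[1:b]})}-\ew{A(I_b(F))}\right|
&= \left|\ew{\bigl(A(\sigma(L)_{[1:b]})-A(I_b(F))\bigr)\mathbf{1}_D}\right|
\leq b\cdot\prob{D}
\leq \frac{b^3}{N},
\end{align*}
using $|A(L_1)-A(L_2)|\leq b$ whenever $N(L_1)=N(L_2)=b$ and the fact that under the coupling the two sequences have the same marginal laws (uniform over permutations of $L$ restricted to the first $b$ coordinates, and i.i.d.\ $F^{\otimes b}$, respectively).

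The only delicate step is verifying that the marginal of the without-replacement draw in this coupling is genuinely the law of $\sigma(L)_{[1:b]}$ for uniform $\sigma$; this is a standard fact (a uniform random permutation can be generated by sampling positions without replacement one at a time, each uniformly among the remaining positions), but I would state it cleanly so that the cancellation argument is rigorous. Everything else is the elementary union bound above and the trivial range bound on $A$, so I do not expect any real obstacle; the $b^3/N$ rather than $b^2/N$ in the statement simply absorbs the extra factor $b$ coming from the magnitude of $A$.
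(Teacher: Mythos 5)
Your proposal is correct and follows essentially the same route as the paper: a coupling between sampling with and without replacement from the positions of $L$, a union bound over collisions giving probability $O(b^2/N)$ of decoupling, and the range bound $A(\cdot)\in\{0,\dots,b\}$ on lists of length $b$ to turn this into an expectation bound of $b^3/N$. The paper merely packages the same argument through the total variation distance (invoking a coupling characterization of $\tv{\mu-\nu}$ and a separate lemma bounding $\tv{\mu-\nu}\le b^2/(2N)$), while you inline the coupling and cancellation step directly; the substance is identical.
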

The proof of the lemma is based on estimates of the \emph{total variation distance}. For the sake of readability, we skip here the proof and refer to the appendix. The proofs of Theorem \ref{thm:differenceOptAndAsymptotic} and \ref{thm:differenceDnfAndAsymptotic} are somehow similar. Since the proof of the second is more compact, we present here only this one and defer the first one also to the appendix.

\begin{proof}[Theorem \ref{thm:differenceDnfAndAsymptotic}]
Let $L = (a_1, \ldots, a_N)$. For sake of readability we write $N$ instead of $N(L)$, and set $L^\sigma := \sigma(L)$. We divide $L^\sigma$ into $\left\lceil \frac{N}{b} \right\rceil$ sublists $L^\sigma_1, \ldots, L^\sigma_{\left\lceil \frac{N}{b} \right\rceil}$. Here, for $1 \leq i \leq \left\lceil \frac{N}{b} \right\rceil - 1$, it is $L^\sigma_i = (a^\sigma_{(i-1)b + 1}, \ldots, a^\sigma_{ib})$, and $L^\sigma_{\left\lceil \frac{N}{b} \right\rceil} = (a^\sigma_{\left( \left\lceil \frac{N}{b} \right\rceil - 1 \right)b + 1}, \ldots, a^\sigma_N)$. By $\mu_i$ we denote the distribution of the bin level of $\DNF$ after inserting the first $(i-1)b$ items from a random permutation of $L$.

Then it holds
\begin{multline*}
\mathbb{E}_\sigma\left[ \DNF(\sigma(L)) \right]
= \sum_{i=1}^{\left\lceil \frac{N}{b} \right\rceil} \mathbb{E}_{\ell \sim \mu_i}\left[ \DNF(\ell, L^\sigma_i) \right] \\
=  \left( \left\lceil \frac{N}{b} \right\rceil - 1 \right) \mathbb{E}_{\ell \sim \pi_F}\left[ \DNF(\ell, I_b(F)) \right] \\
+ \sum_{i=1}^{\left\lceil \frac{N}{b} \right\rceil - 1} \left( \mathbb{E}_{\ell \sim \mu_i}\left[ \DNF(\ell, L^\sigma_i) \right] - \mathbb{E}_{\ell \sim \mu_i}\left[ \DNF(\ell, I_b(F)) \right] \right) \\
+ \sum_{i=1}^{\left\lceil \frac{N}{b} \right\rceil - 1} \left( \mathbb{E}_{\ell \sim \mu_i}\left[ \DNF(\ell, I_b(F)) \right] - \mathbb{E}_{\ell \sim \pi_F}\left[ \DNF(\ell, I_b(F)) \right] \right) \\
+ \mathbb{E}_{\ell \sim \mu_{\left\lceil \frac{N}{b} \right\rceil}}\left[ \DNF(\ell, L^\sigma_i) \right].
\end{multline*}
We now have to bound the differences, which stem from different underlying probability measures and different starting points. 
According to Lemma \ref{lemma:estimateDifferenceExpectations} it is
\begin{align}
\label{eqn:estimateTV}
\left| \mathbb{E}_{\ell \sim \mu_i}\left[ \DNF(\ell, L^\sigma_i)) \right] - \mathbb{E}_{\ell \sim \mu_i}\left[ \DNF(\ell, I_b(F)) \right] \right| & \leq \frac{b^3}{N}.
\intertext{Furthermore, it is easy to show, that for $\DNF$}
\label{eqn:estimateStartingPoints}
\left| \mathbb{E}_{\ell \sim \mu_i}\left[ \DNF(\ell, I_b(F)) \right] - \mathbb{E}_{\ell \sim \pi_F}\left[ \DNF(\ell, I_b(F)) \right] \right| & \leq 1.
\end{align}
Using (\ref{eqn:estimateTV}) and (\ref{eqn:estimateStartingPoints}) we achieve the upper bound
\begin{align*}
\mathbb{E}_\sigma\left[ \DNF(\sigma(L)) \right]
& \leq \frac{N}{b} \cdot \mathbb{E}_{\ell \sim \pi_F}\left[ \DNF(\ell, I_b(F)) \right] + \frac{N}{b} \cdot \frac{b^3}{N} + \frac{N}{b} \cdot 1 + b \\
& = \frac{N}{b} \cdot b \cdot \pi_F(c) + b^2 + \frac{N}{b} + b \\
& \leq N \cdot \pi_F(c) + K \cdot N^{2/3}.
\end{align*}
In the same way, we could derive the lower bound.
\qed
\end{proof}

As a consequence of both bounds, we can give a non-trivial condition, which guarantees that the random-order ratio and the asymptotic expected competitive ratio coincide.

\begin{theorem}\label{thm:RR_AECR}
If there exists a sequence of lists $L^{(i)}$ such that
\begin{align*}
\lim_{i \rightarrow \infty} \frac{\mathbb{E}_\sigma\left[ \DNF(\sigma(L^{(i)})) \right]}{\OPT(L^{(i)})}
= \ROR{\DNF},
\end{align*}
and $s(L^{(i)}) \in \omega(N(L^{(i)})^{2/3})$, then $\ROR{\DNF} = \AECRb{\DNF}$.
\end{theorem}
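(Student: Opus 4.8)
The plan is to prove the single missing inequality $\AECRb{\DNF}\le\ROR{\DNF}$; combined with Lemma~\ref{lemma:estimateKenyon} (which gives $\ROR{\DNF}\le\AECRb{\DNF}$) this yields equality. Fix a sequence $L^{(1)},L^{(2)},\ldots$ as in the hypothesis and write $N_i:=N(L^{(i)})$. Since $s(L^{(i)})\le N_i$, the assumption $s(L^{(i)})=\omega(N_i^{2/3})$ forces $N_i\to\infty$ (passing to a subsequence if necessary, which does not affect the limit of the ratios). Let $F_i$ be the discrete distribution induced by $L^{(i)}$. The whole argument rests on the approximation
\[
  \frac{\mathbb{E}_\sigma\!\left[\DNF(\sigma(L^{(i)}))\right]}{\OPT(L^{(i)})}\;\approx\;\frac{\pi_{F_i}(c)}{\gamma(F_i)}\;=\;\AECR{\DNF}{F_i}\;\ge\;\AECRb{\DNF},
\]
where the equality comes from Lemma~\ref{lemma:limits} together with $\pi_{F_i}(c)=\ew{T_{\DNF}^{F_i}}^{-1}$, and the error hidden in ``$\approx$'' is precisely what Theorems~\ref{thm:differenceOptAndAsymptotic} and~\ref{thm:differenceDnfAndAsymptotic} control. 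So I would take the $\liminf$ over $i$ of the left-hand side — which by hypothesis equals $\ROR{\DNF}$ — and show that it is at least $\AECRb{\DNF}$.

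First I would dispose of the periodic case. If the Markov chain of $\DNF$ and $F_i$ is $d$-periodic for some $d\ge 2$, then by the proof of Lemma~\ref{lemma:MarkovChain} all item sizes of $L^{(i)}$ lie in $[d^{-1},(d-1)^{-1})$, so any $d$ of them fill a bin and any $d-1$ of them do not; hence both $\DNF$ (in every order) and $\OPT$ place exactly $d$ items per bin, $\DNF(\sigma(L^{(i)}))=\lfloor N_i/d\rfloor=\OPT(L^{(i)})$ for every $\sigma$, and the ratio equals $1\ge\AECRb{\DNF}$. From now on I assume $F_i$ is aperiodic, so that $\pi_{F_i}$ exists and Theorem~\ref{thm:differenceDnfAndAsymptotic} applies.

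Next I would combine the two theorems. Using $\mathbb{E}_\sigma[\DNF(\sigma(L^{(i)}))]\ge N_i\,\pi_{F_i}(c)-K N_i^{2/3}$ and $\OPT(L^{(i)})\le N_i\,\gamma(F_i)+K N_i^{2/3}$ together with $\AECR{\DNF}{F_i}=\pi_{F_i}(c)/\gamma(F_i)$,
\[
  \frac{\mathbb{E}_\sigma\!\left[\DNF(\sigma(L^{(i)}))\right]}{\OPT(L^{(i)})}
  \;\ge\;\frac{N_i\,\pi_{F_i}(c)-K N_i^{2/3}}{N_i\,\gamma(F_i)+K N_i^{2/3}}
  \;=\;\frac{\AECR{\DNF}{F_i}-\varepsilon_i}{1+\varepsilon_i},\qquad \varepsilon_i:=\frac{K N_i^{2/3}}{N_i\,\gamma(F_i)}.
\]
Since $\AECR{\DNF}{F_i}\ge\AECRb{\DNF}$, all that remains is $\varepsilon_i\to0$, and this is the one place the growth hypothesis enters. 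A next-fit style count gives $\OPT(L^{(i)})\ge\DNF(L^{(i)})>\tfrac12(s(L^{(i)})-1)$ — every closed bin has level in $[1,2)$ and the remaining open bin has level in $[0,1)$, so $s(L^{(i)})<2\,\DNF(L^{(i)})+1$ — whence $\OPT(L^{(i)})=\omega(N_i^{2/3})$; Theorem~\ref{thm:differenceOptAndAsymptotic} then yields $N_i\,\gamma(F_i)\ge\OPT(L^{(i)})-K N_i^{2/3}=\omega(N_i^{2/3})$, i.e.\ $\varepsilon_i\to0$. Hence $\liminf_i \mathbb{E}_\sigma[\DNF(\sigma(L^{(i)}))]/\OPT(L^{(i)})\ge\AECRb{\DNF}$, that is, $\ROR{\DNF}\ge\AECRb{\DNF}$, completing the proof.

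The hard part will be exactly this last error estimate: one needs both ``main terms'' $N_i\,\gamma(F_i)$ and $N_i\,\pi_{F_i}(c)$ to be of order $\Theta(\OPT(L^{(i)}))=\omega(N_i^{2/3})$, so that the additive $\Theta(N_i^{2/3})$ slacks furnished by Theorems~\ref{thm:differenceOptAndAsymptotic} and~\ref{thm:differenceDnfAndAsymptotic} wash out in the quotient; everything else is routine bookkeeping plus the periodicity dichotomy. A minor technical wrinkle is that the adversarial lists $L^{(i)}$ need not have rational item sizes, so $F_i$ is a priori only a finite-support distribution on $[0,1]$ rather than a ``discrete distribution'' in the paper's sense; this is absorbed either by observing that Theorems~\ref{thm:differenceOptAndAsymptotic}--\ref{thm:differenceDnfAndAsymptotic} and the equality $\AECRb{\DNF}=\inf_F\AECR{\DNF}{F}$ are insensitive to the distinction, or by a small rational perturbation of the item sizes.
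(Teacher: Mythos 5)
Your proposal is correct and follows essentially the same route as the paper: it combines Theorems~\ref{thm:differenceOptAndAsymptotic} and~\ref{thm:differenceDnfAndAsymptotic} with the identity $\AECR{\DNF}{F^i}=\pi_{F^i}(c)/\gamma(F^i)$ and the bound $\OPT(L^{(i)})\ge\DNF(L^{(i)})=\Omega(s(L^{(i)}))=\omega(N(L^{(i)})^{2/3})$ to show the random-order ratio on $L^{(i)}$ differs from $\AECR{\DNF}{F^i}$ by $O(N(L^{(i)})^{2/3}/\OPT(L^{(i)}))=o(1)$, then closes with Lemma~\ref{lemma:estimateKenyon}. Your explicit treatment of the periodic case and of the rationality of item sizes only makes explicit details the paper leaves implicit.
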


\begin{proof}
Let $\epsilon > 0$ be arbitrary. Since $s(L^{(i)}) \geq \OPT(L^{(i)}) \geq \DNF(L^{(i)}) \geq s(L^{(i)})/2$, it also holds $\OPT(L^{(i)}) \in \omega(N(L^{(i)})^{2/3})$. Let $F^i$ denote the discrete distribution induced by $L^{(i)}$. Using that $\AECR{\DNF}{F^i} = \pi_{F^i}(c) / \gamma(F^i)$ and the basic inequality $|a/b - a'/b'| \leq |(a-a')/b| +  |a'/b'| \cdot |(b-b')/b|$, we obtain
\begin{align*}
\left| \frac{\mathbb{E}_\sigma\left[ \DNF(L^{(i)}) \right]}{\OPT(L^{(i)})} - \AECR{\DNF}{F^i} \right|
\leq K \cdot \frac{N(L^{(i)})^{2/3}}{\OPT(L^{(i)})}.
\end{align*}
Hence, if we choose $i$ large enough, we can find a distribution $F^i$, such that $\AECR{\DNF}{F^i} \leq \ROR{\DNF} + \epsilon$. Then
\begin{align*}
\ROR{\DNF} + \epsilon \geq \AECR{\DNF}{F^i} \geq \AECRb{\DNF} \geq \ROR{\DNF}, 
\end{align*}
i.e., both performance measures would coincide.
\qed
\end{proof}

The following corollary follows from the proof of Theorem~\ref{thm:RR_AECR}.
\begin{corollary}
Let~$\RORb{\DNF}$ denote the random-order ratio of~$\DNF$ restricted to instances~$L$ with~$s(L) \in \omega(N(L)^{2/3})$. It holds~$\RORb{\DNF} = \AECRb{\DNF}$.
\end{corollary}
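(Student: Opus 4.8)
The plan is to obtain the corollary by repeating the argument in the proof of Theorem~\ref{thm:RR_AECR}, observing that its estimate applies to \emph{every} admissible instance, not only to the instances of some fixed sequence. We prove $\RORb{\DNF}\ge\AECRb{\DNF}$ and $\RORb{\DNF}\le\AECRb{\DNF}$ in turn. Throughout we use that the Markov chain induced by $\DNF$ and any discrete distribution $F$ is irreducible on a finite state space and therefore has a unique stationary measure $\pi_F$, so that Theorems~\ref{thm:differenceOptAndAsymptotic} and~\ref{thm:differenceDnfAndAsymptotic} apply, and with them the triangle-inequality bound
\begin{align*}
\left|\frac{\mathbb{E}_\sigma[\DNF(\sigma(L))]}{\OPT(L)}-\AECR{\DNF}{F}\right|\le K\cdot\frac{N(L)^{2/3}}{\OPT(L)}
\end{align*}
derived in the proof of Theorem~\ref{thm:RR_AECR}, valid for any list $L$ inducing $F$.

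For $\RORb{\DNF}\ge\AECRb{\DNF}$ I would take an arbitrary admissible sequence of lists $L^{(i)}$, i.e., one with $\OPT(L^{(i)})\to\infty$ and $s(L^{(i)})\in\omega(N(L^{(i)})^{2/3})$, and let $F^i$ be the distribution induced by $L^{(i)}$. As in the proof of Theorem~\ref{thm:RR_AECR} we have $\OPT(L^{(i)})\ge s(L^{(i)})/2\in\omega(N(L^{(i)})^{2/3})$, so the error term $K\cdot N(L^{(i)})^{2/3}/\OPT(L^{(i)})$ tends to $0$; combined with the bound above and $\AECR{\DNF}{F^i}\ge\AECRb{\DNF}$ this yields $\liminf_i \mathbb{E}_\sigma[\DNF(\sigma(L^{(i)}))]/\OPT(L^{(i)})\ge\AECRb{\DNF}$. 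Taking the infimum over all admissible sequences gives the claim.

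For $\RORb{\DNF}\le\AECRb{\DNF}$ I would fix $\epsilon>0$ and choose a discrete distribution $F$ with $\AECR{\DNF}{F}\le\AECRb{\DNF}+\epsilon$. Picking a list $L_0$ inducing $F$ and forming the concatenations $L_0^j$, these again induce $F$, and since $s(L_0^j)=j\cdot s(L_0)$ and $N(L_0^j)=j\cdot N(L_0)$, the total size $s(L_0^j)$ is linear in $N(L_0^j)$ — in particular $s(L_0^j)\in\omega(N(L_0^j)^{2/3})$ — while $\OPT(L_0^j)\ge s(L_0^j)/2\to\infty$. Hence $(L_0^j)_{j}$ is admissible, and by the bound above $\mathbb{E}_\sigma[\DNF(\sigma(L_0^j))]/\OPT(L_0^j)\to\AECR{\DNF}{F}\le\AECRb{\DNF}+\epsilon$ as $j\to\infty$, so $\RORb{\DNF}\le\AECRb{\DNF}+\epsilon$. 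Letting $\epsilon\to0$ completes the proof.

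I do not expect a genuine obstacle: the corollary is essentially a re-reading of the proof of Theorem~\ref{thm:RR_AECR}. The only points that need care are that the $\omega(N^{2/3})$ restriction is exactly what makes the error term $K\cdot N^{2/3}/\OPT$ vanish (because $\OPT\ge s/2$), and that concatenating a list preserves admissibility, which is immediate since any function linear in $N$ dominates $N^{2/3}$.
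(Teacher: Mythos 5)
Your argument is correct and follows the route the paper intends (the paper simply says the corollary ``follows from the proof of Theorem~\ref{thm:RR_AECR}''): both directions rest on the uniform estimate $\bigl|\mathbb{E}_\sigma[\DNF(\sigma(L))]/\OPT(L)-\AECR{\DNF}{F}\bigr|\le K\cdot N(L)^{2/3}/\OPT(L)$ derived from Theorems~\ref{thm:differenceOptAndAsymptotic} and~\ref{thm:differenceDnfAndAsymptotic}, with the concatenation construction $L_0^j$ (which the paper itself invokes in the sentence after the corollary, as the $\DNF$-version of~(\ref{eqn:motivationConcatenatedLists})) supplying the admissible sequences needed for the inequality $\RORb{\DNF}\le\AECRb{\DNF}$. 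You correctly recognize that Lemma~\ref{lemma:estimateKenyon} alone does not yield this direction once the adversary is restricted, which is precisely why the concatenation step is required; the rest is the same bookkeeping as in the proof of Theorem~\ref{thm:RR_AECR}.
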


Using the same method as in the proof we can also show that (\ref{eqn:motivationConcatenatedLists}) holds true for the dual next-fit algorithm.

\section{Upper and Lower Bounds for Dual Next-Fit on Perfect-packing Distributions}\label{section:BoundsPerfectPacking}
At first we show that we can restrict ourselves to studying perfect-packing distributions. They represent the worst-case with respect to the investigated performance measures.
\begin{lemma}
\label{lemma:perfectPackingIsWorstCase}
Let $L= (a_1, \ldots, a_N)$ be a list for bin covering. Then there exists a list $H$ that can be packed perfectly, i.e., $W^{\OPT}(H) = 0$, such that
\begin{align*}
\frac{\mathbb{E}_\sigma\left[ \DNF(\sigma(L)) \right]}{\OPT(L)}
\geq
\frac{\mathbb{E}_\sigma\left[ \DNF(\sigma(H)) \right]}{\OPT(H)}.
\end{align*}

Furthermore, for each distribution $F$ there exists a perfect-packing distribution $G$ such that
\begin{align*}
\lim_{n \rightarrow \infty} \frac{\ew{\DNF(I_n(F))}}{\ew{\OPT(I_n(F))}}
\geq \lim_{n \rightarrow \infty} \frac{\ew{\DNF(I_n(G))}}{\ew{\OPT(I_n(G))}}.
\end{align*}
\end{lemma}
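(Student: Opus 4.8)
The plan is to prove both statements with a single elementary ``shrink-and-prune'' construction, and to deduce the distributional statement from the list statement using the results of Section~\ref{section:ConnectionBetweenMeasures}. For the list statement, fix $L$ and an optimal cover of $L$ by $k=\OPT(L)$ bins $B_1,\dots,B_k$ together with a set $R$ of leftover items. First I would argue that we may assume every $B_j$ is a \emph{minimal} cover: trim superfluous items of any bin into $R$; this keeps $k$ bins covered, and by optimality of the cover the resulting $R$ still has $s(R)<1$ (otherwise a further bin could be formed greedily). For a minimal cover one has $\delta_j:=s(B_j)-1\in[0,1)$ and every item of $B_j$ exceeds $\delta_j$, so we may build $B_j'$ from $B_j$ by decreasing one item by exactly $\delta_j$, obtaining $s(B_j')=1$ with all sizes still positive. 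Let $H$ be the concatenation of $B_1',\dots,B_k'$, with the items of $R$ simply dropped. Then $\{B_j'\}$ partitions $H$ into blocks of size exactly $1$, so $W^{\OPT}(H)=0$ and $\OPT(H)=k=\OPT(L)$. Hence it remains to show $\mathbb{E}_\sigma[\DNF(\sigma(H))]\le\mathbb{E}_\sigma[\DNF(\sigma(L))]$.

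The key is a monotonicity lemma: if $a=(a_1,\dots,a_n)$ and $a'=(a_1',\dots,a_n')$ satisfy $0\le a_i'\le a_i\le 1$ for all $i$, then $\DNF(a')\le\DNF(a)$. I would prove this by induction on the index of closed bins. Let $r_c$ (resp.\ $r_c'$) be the position at which $\DNF$ closes its $c$-th bin on $a$ (resp.\ on $a'$), so that $r_1=\min\{t:\sum_{i\le t}a_i\ge 1\}$ and $r_{c+1}=\min\{t>r_c:\sum_{r_c<i\le t}a_i\ge 1\}$. One shows $r_c'\ge r_c$ for every $c$: the case $c=1$ is immediate from $\sum_{i\le t}a_i'\le\sum_{i\le t}a_i$, and for the step one uses that for $t>r_c'\ge r_c$ the block sum $\sum_{r_c'<i\le t}a_i'$ is dominated by $\sum_{r_c<i\le t}a_i$, so whenever $\DNF$ closes bin $c+1$ on $a'$ at time $t$ it has already done so on $a$. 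Consequently $\DNF(a')=\#\{c:r_c'\le n\}\le\#\{c:r_c\le n\}=\DNF(a)$. Now $H$ arises from $L$ by (i) replacing each shrunk item by a smaller one and (ii) deleting the items of $R$, and deletion is just the special case of shrinking a size to $0$; since a zero item does not affect $\DNF$, averaging over permutations of the larger multiset equals averaging over permutations of the reduced one. As shrinking commutes with applying a permutation, the monotonicity lemma yields $\DNF(\sigma(H))\le\DNF(\sigma(L))$ under a common $\sigma$, and dividing by $\OPT(H)=\OPT(L)$ after averaging gives the first claim.

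For the distributional statement, fix $F$. I would use the argument behind Lemma~\ref{lemma:estimateKenyon}: for each $n$ there is a list $H_n$ realizable by $I_n(F)$ with $\mathbb{E}_\sigma[\DNF(\sigma(H_n))]/\OPT(H_n)\le\ew{\DNF(I_n(F))}/\ew{\OPT(I_n(F))}$, and the right-hand side converges to $\AECR{\DNF}{F}$. Applying the list construction to $H_n$ produces a perfectly packed $H_n'$ with a no-larger ratio and with $\OPT(H_n')=\OPT(H_n)=\Theta(n)$ and $s(H_n')=\Theta(N(H_n'))=\omega(N(H_n')^{2/3})$; Theorems~\ref{thm:differenceOptAndAsymptotic} and~\ref{thm:differenceDnfAndAsymptotic} then show this ratio equals $\AECR{\DNF}{F_n}+o(1)$, where $F_n$ is the distribution induced by $H_n'$, so $\AECR{\DNF}{F_n}\le\AECR{\DNF}{F}+o(1)$. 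The item sizes occurring in the $H_n'$ are built from the finitely many bin configurations available to $\OPT$ on $F$-instances, hence all $F_n$ are perfect-packing distributions with support in one fixed finite set of rationals, and along a subsequence $F_n\to G$ for a perfect-packing distribution $G$ (the cone $\Lambda$ over a fixed finite support is closed). Passing to the limit gives $\AECR{\DNF}{G}\le\AECR{\DNF}{F}$.

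I expect the main obstacle to be twofold. In the list part, the monotonicity lemma must cope with the fact that closing a bin resets the level and discards the overshoot; the epoch-index induction above is exactly what handles this cleanly, whereas a naive level-coupling invariant fails. In the distributional part, the delicate point is the limit $F_n\to G$: one has to know that $\pi_{F_n}(c)$ and $\gamma(F_n)$ converge to $\pi_G(c)$ and $\gamma(G)$, i.e.\ that the asymptotic competitive ratio is continuous along the subsequence, which amounts to controlling the irreducibility and aperiodicity of the induced Markov chains in the limit; by Lemma~\ref{lemma:MarkovChain} the periodic case only forces the ratio up to $1$ and is therefore harmless, but establishing this continuity is where the technical care is needed. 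If one only wants $\AECRb{\DNF}=\AECR{\DNF}{\mathcal{P}}$, one can take infima directly and avoid extracting a single limit distribution $G$.
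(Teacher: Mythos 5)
Your proof of the first statement is correct and essentially the paper's argument: the paper also obtains $H$ from $L$ by shrinking items (removing each bin's overshoot from one item and reducing unused items to size $0$) and then invokes the monotonicity of $\DNF$; you spell out that monotonicity via the epoch induction $r_c' \ge r_c$, which the paper leaves implicit, and your treatment of deleted items as items of size $0$ is fine.

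For the second statement you take a genuinely different route, and it contains a real gap. The paper never takes a limit of distributions: it fixes a single list $L$ inducing $F$ that minimizes $W^{\OPT}(L)/N(L)$, shrinks it to $H$, lets $G$ be induced by $H$, deduces $\OPT(L^j)=\OPT(H^j)$ for all $j$ from this minimality together with superadditivity, transfers this to $\left|\ew{\OPT(I_{jN}(F))}-\ew{\OPT(I_{jN}(G))}\right|\le K (jN)^{2/3}$ via Lemmas~\ref{lemma:Triangle1} and~\ref{lemma:Triangle2}, and handles the numerators by the monotone coupling $\DNF(I_n(F))\ge \DNF(I_n(G))$. You instead build a sequence of induced perfect-packing distributions $F_n$ with $\AECR{\DNF}{F_n}\le\AECR{\DNF}{F}+o(1)$ and extract a subsequential limit $G$; the step you leave open, namely $\AECR{\DNF}{G}\le\liminf_n\AECR{\DNF}{F_n}$, is exactly the crux and is not merely a matter of periodicity or irreducibility. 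Since $\AECR{\DNF}{F_n}=\bigl(\ew{X_n}\cdot\ew{T_{\DNF}^{F_n}}\bigr)^{-1}$, you need $\limsup_n \ew{T_{\DNF}^{F_n}}\le \ew{T_{\DNF}^{G}}$, whereas pointwise convergence of $\prob{T_{\DNF}^{F_n}>k}$ plus Fatou only gives the opposite inequality $\liminf_n \ew{T_{\DNF}^{F_n}}\ge \ew{T_{\DNF}^{G}}$; one must add a uniform tail (uniform integrability) bound for the stopping times, which is available here because all $F_n$ live on one fixed finite set of positive rationals, but this argument is absent from your write-up, so the second claim is not yet proved as stated. Two smaller points: your assertion $\OPT(H_n)=\Theta(n)$ can fail if $0$ lies in the support of $F$, because the minimizing class in the Lemma~\ref{lemma:estimateKenyon}-style argument may be a bounded list padded with zero items, so you need a preliminary reduction to distributions without an atom at $0$; and your final remark is correct that $\AECRb{\DNF}=\AECR{\DNF}{\ppdistr}$ follows without extracting $G$, but the lemma demands a single perfect-packing $G$ for each $F$, which is precisely what the paper's construction delivers directly and your compactness argument does not yet.
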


\begin{proof}
Let $L = (a_1, \ldots, a_N)$ be a list. Then we can modify the items in such a way that we obtain a new list $H = (\tilde{a}_1, \ldots, \tilde{a}_N)$, such that $a_i \geq \tilde{a}_i$ for all $i \in [N]$, $W^{\OPT}(H) = 0$, and $\OPT(L) = \OPT(H)$. Due to the monotonicity properties of $\DNF$, it holds $\DNF(\sigma(L)) \geq \DNF(\sigma(H))$ for an arbitrary permutation $\sigma$. Hence,
\begin{align*}
\frac{\mathbb{E}_\sigma\left[ \DNF(\sigma(L)) \right]}{\OPT(L)}
\geq \frac{\mathbb{E}_\sigma\left[ \DNF(\sigma(H)) \right]}{\OPT(H)}.
\end{align*}

Now let $F$ be a discrete distribution and $L = (a_1, \ldots, a_N)$ be a list, which induces $F$ and minimizes $W^{\OPT}(L)/N(L)$. We modify $L$ in the same way as in the previous case, and let $H$ denote the modified list. We denote by $L^j$ (or $H^j$) the concatenation of $j$ copies of $L$ (or $H$), and $G$ denotes the perfect-packing distribution induced by $H$. For each $j \in \NN$ it holds $\OPT(L^j) = \OPT(H^j)$, otherwise there is a contradiction to the choice of $L$.

Furthermore, each $L^j$ induces $F$, and each $H^j$ induces $G$. Therefore, we know from Lemma \ref{lemma:Triangle1} and \ref{lemma:Triangle2}, that
\begin{align*}
\left| \ew{\OPT(I_{jN}(F))} - \ew{\OPT(I_{jN}(G))} \right|
\leq K \cdot N^{2/3} \cdot j^{2/3}.
\end{align*}
$\ew{\OPT(I_{jN}(F))}$ grows linearly in $j$, and $\DNF(I_{jN}(F)) \geq \DNF(I_{jN}(G))$. Hence,
\begin{align*}
\lim_{n \rightarrow \infty} \frac{\ew{\DNF(I_n(F))}}{\ew{\OPT(I_n(F))}}
\geq \lim_{j \rightarrow \infty} \frac{\ew{\DNF(I_{jN}(G))}}{\ew{\OPT(I_{jN}(G))}}
= \lim_{n \rightarrow \infty} \frac{\ew{\DNF(I_{n}(G))}}{\ew{\OPT(I_{n}(G))}}.
\end{align*}
\qed
\end{proof}

\subsection{Upper and Lower Bounds for Arbitrary Perfect-packing Distributions}
\label{subsection:UpperLowerBoundsArbitraryPPdistr}
We begin presenting an upper bound for the considered performance measures, which improves a result in \cite{ChristFL14}.
\begin{theorem}
\label{thm:UpperBoundPR}
It holds
\begin{align*}
\ROR{\DNF} \leq \AECRb{\DNF} \leq \frac{2}{3}.
\end{align*}
\end{theorem}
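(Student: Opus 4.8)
The plan is to exhibit, for every $\epsilon>0$, a perfect-packing distribution $F$ with $\AECR{\DNF}{F}\le \nicefrac23+\epsilon$; by Lemma~\ref{lemma:perfectPackingIsWorstCase} and Lemma~\ref{lemma:estimateKenyon} this gives $\ROR{\DNF}\le\AECRb{\DNF}\le\nicefrac23$. The natural candidate is a distribution supported on just two sizes: a large item of size close to $1$ and a small item that complements it. Concretely, fix a small parameter $\delta>0$ and take item sizes $1-\delta$ and $\delta$ (with $1/\delta\in\NN$ so that the small items alone can also tile a bin), and choose the probability $p$ of the large item so that the vector $p=(p,1-p)$ lies in $\Lambda_F$ and $F$ is a perfect-packing distribution. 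The idea is that, in an optimal packing, a bin is typically covered by \emph{one} large item plus \emph{one} small item (using almost all the mass efficiently: overshoot only about $\delta$ per covered bin for the $(1-\delta,\delta)$ pairs, with the leftover small items forming a negligible fraction of all-small bins). So $\gamma(F)=\ew{X}\approx$ (expected mass), while $\OPT$ covers roughly one bin per two items.

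Next I would compute $\ew{T_{\DNF}^F}$, the expected number of items $\DNF$ needs to close a bin when starting from level $0$. Here the key combinatorial point is that $\DNF$ behaves wastefully: when the current bin level is still below $1-\delta$ and a \emph{large} item arrives, the level jumps to something just below $1$, and then the \emph{next} item — large or small — closes the bin. The problematic configurations are those where $\DNF$ packs a large item first (covering the bin with just two items, waste $\approx\delta$), versus runs of small items. A careful case analysis of the Markov chain on bin levels shows that, with the right choice of $p$ (tuned so that large items are common enough), $\DNF$ closes a typical bin with roughly $3$ items while $\OPT$ uses $2$, driving $\AECR{\DNF}{F}=\frac{1}{\ew{X}\cdot\ew{T^F_{\DNF}}}$ toward $\nicefrac23$ as $\delta\to 0$. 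Equation~(\ref{eqn:limitAECR}) is exactly the tool that converts these two expectations into the ratio.

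The main obstacle is the computation of $\ew{T_{\DNF}^F}$: one must set up the bin-level Markov chain for the two-size distribution, verify aperiodicity (so that Lemma~\ref{lemma:MarkovChain} does not trivialize the example), and carefully account for the "small-item runs" — after a large item the bin is almost full and closes in one more step, but a run of only small items takes about $1/\delta$ steps and covers with waste $o(\delta)$, so these runs must be rare enough (equivalently $p$ bounded away from $0$) that they do not pull the ratio back up toward $1$, yet $F$ must still be a genuine perfect-packing distribution. Balancing these constraints — choosing $\delta$ and $p$ simultaneously so that (i) $p\in\Lambda_F$, (ii) $\ew{X}\to 1$, and (iii) $\ew{T_{\DNF}^F}\to 2$ from above in the right proportion — is the delicate part; once the parameters are pinned down the bound follows by plugging into~(\ref{eqn:limitAECR}) and letting $\delta\to 0$.
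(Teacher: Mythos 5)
There is a genuine gap: the two-size family you propose cannot get below $4/5$, so it does not prove the bound $2/3$. With sizes $\delta=1/n$ and $1-\delta$, the only perfect-packing configurations are $(1,1)$ and $(0,n)$, so by the Courcoubetis--Weber criterion the probability $p$ of the large item must satisfy $p\le 1/2$; in particular $\ew{X}=p(1-\delta)+(1-p)\delta\le 1/2+\delta$, and your target ``$\ew{X}\to 1$'' is unattainable (it is also inconsistent with your own aim, since $\ew{X}\to1$ and $\ew{T_{\DNF}^F}\to2$ would give ratio $1/2$ via (\ref{eqn:limitAECR}), not $2/3$). More importantly, with a single complementary pair every large item closes the bin as soon as the level is positive: if $G$ denotes the position of the first large item, the closing time is $\min(G,n)$, plus one extra item when $G=1$, so
\begin{align*}
\ew{T_{\DNF}^F}=\frac{1-(1-p)^n}{p}+p .
\end{align*}
For fixed $p\le 1/2$ and $\delta\to0$ this gives $\ew{X}\cdot\ew{T_{\DNF}^F}\to 1+p^2\le 5/4$, and letting $p$ scale with $\delta$ only increases the product; hence by (\ref{eqn:limitAECR}) your family yields $\AECR{\DNF}{F}\ge 4/5-o(1)$, i.e.\ exactly the old bound of Christ et al.\ and nothing better. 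The structural reason is that $2/3$ requires the expected overshoot $\ew{R}=\ew{X}\ew{T_{\DNF}^F}-1$ to approach $1/2$, whereas with one pair of sizes it is capped near $1/4$.

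The paper's proof keeps your overall reduction (Lemma~\ref{lemma:estimateKenyon} for $\ROR{\DNF}\le\AECRb{\DNF}$ and formula (\ref{eqn:limitAECR})), but the construction is necessarily multi-scale: $F(m,k)$ is uniform on the $2m$ sizes $k^{-i}$ and $1-k^{-i}$, $i\in[m]$, so that $\ew{X}=1/2$ exactly, and a counting argument gives $\prob{T_{\DNF}^{F(m,k)}>i}\ge 2^{-i}\bigl(1+(1-1/m)^i\bigr)$, hence $\ew{T_{\DNF}^{F(m,k)}}\ge 3-\epsilon$ for suitable $k,m$. The point of having large items of many different sizes is that a large item $1-k^{-j}$ does \emph{not} close a bin whose level is below $k^{-j}$, which is what drives the expected overshoot toward $1/2$ and the ratio down to $2/3$; this effect is absent in any single-pair distribution, so your construction would need to be replaced, not merely tuned.
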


\begin{proof}
Let $F(m,k)$ be the uniform distribution on the item sizes
\begin{align*}
\left( \frac{1}{k}, 1 - \frac{1}{k}, \left(\frac{1}{k} \right)^2, 1 - \left(\frac{1}{k} \right)^2, \ldots, \left(\frac{1}{k} \right)^m, 1 - \left(\frac{1}{k} \right)^m \right).
\end{align*}
It is clear, that $F(m,k)$ is a perfect-packing distribution. We show that for every $\epsilon > 0$ there are parameters $m$ and $k$, such that $\ew{T_{\DNF}^{F(m,k)}} \geq 3 - \epsilon$.

It holds
\begin{align*}
\ew{T_{\DNF}^{F(m,k)}} 
= \sum_{i=0}^\infty \prob{T_{\DNF}^{F(m,k)} > i}
\geq 2 + \sum_{i=2}^{k-1} \prob{T_{\DNF}^{F(m,k)} > i}.
\end{align*}
Simple counting yields for $i \geq 2$
\begin{multline*}
\prob{T_{\DNF}^{F(m,k)} > i}
= \frac{m^i}{(2m)^i} + \frac{1}{(2m)^i} \sum_{j=2}^m i \cdot (j-1)^{i-1}
= \frac{1}{2^i} + \frac{i}{2^i m^i} \sum_{j=1}^{m-1} j^{i-1} \\
\geq \frac{1}{2^i} + \frac{i}{2^i m^i} \cdot \int_0^{m-1} x^{i-1} \, dx
= \frac{1}{2^i} \cdot \left[ 1 + \left( 1 - \frac{1}{m} \right)^i \right].
\end{multline*}
Therefore, if we choose at first $k$, and then $m$ large enough
\begin{align*}
\ew{T_{\DNF}^{F(m,k)}} 
\geq 2 + \sum_{i=2}^{k-1} \frac{1}{2^i} \cdot \left[ 1 + \left( 1 - \frac{1}{m} \right)^i \right]
\geq 3 - \epsilon.
\end{align*}
Using Lemma \ref{lemma:limits} the statement follows.
\qed
\end{proof}

Now we show, that in a probabilistic setting, we behave better than in the worst-case.
\begin{theorem}
\label{theorem:generalLowerBound}
There exists an $\epsilon > 0$ such that
\begin{align*}
\AECRb{\DNF} \geq \frac{1}{2} + \epsilon.
\end{align*}
\end{theorem}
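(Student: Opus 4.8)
By Lemma~\ref{lemma:perfectPackingIsWorstCase} and equation~(\ref{eqn:limitAECR}) it suffices to exhibit a constant $\delta > 0$ such that every perfect-packing distribution $F$ satisfies $\ew{X}\cdot\ew{T_{\DNF}^F} \le 2 - \delta$, where $X\sim F$. By Lemma~\ref{lemma:MarkovChain} we may assume the induced Markov chain is aperiodic. The plan is to split into two regimes according to how large $\ew{X}$ is. If $\ew{X}$ is bounded away from $1$, say $\ew{X}\le 1-\delta_0$, then since $T_{\DNF}^F$ is the number of items needed to reach level $\ge 1$ starting from $0$ and each item has size $\le 1$, one has the deterministic lower bound $s(\text{those items}) \ge 1$ and an upper bound $s \le 2$ on the overshoot-free part, giving $\ew{T_{\DNF}^F}\le$ (something like) $2/\ew{X}$ via Wald-type reasoning; combined with $\ew{X}\le 1-\delta_0$ this is not quite enough, so instead I would argue $\ew{T_{\DNF}^F}\le 2$ always (the waste of a single $\DNF$-bin is $<1$, so the bin contains total size $<2$, hence on average at most roughly $2/\ew{X}$ items — wait, that blows up) — the cleaner route: show directly that $\pi_F(c)\ge \ew{X}/2 \cdot c'$ fails to give the constant, so one genuinely needs the case analysis below.

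The substantive case is $\ew{X}$ close to $1$, i.e.\ $\ew{X}\ge 1-\delta_0$. Here the idea is that a distribution with large mean must place significant probability on large items (close to $1$), and when a large item arrives as the first item of a fresh bin, $\DNF$ closes that bin with few items — in particular with positive probability it closes after just one or two items, pulling $\ew{T_{\DNF}^F}$ strictly below $2$. Concretely I would let $q = \prob{X \ge 1/2}$. A perfect-packing distribution with $\ew{X}\ge 1-\delta_0$ must have $q$ bounded below by some $q_0(\delta_0) > 0$ (if almost all mass were on items $<1/2$, the mean could not approach $1$; make this quantitative by a Markov/averaging argument). Now, starting a bin from level $0$: with probability $q$ the first item has size $\ge 1/2$; conditioned on this, with probability at least $q$ the second item also has size $\ge 1/2$, and then the bin is closed after exactly $2$ items. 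More usefully, I would track $\prob{T_{\DNF}^F = 1} = \prob{X = 1}$ and lump the events "closed within $2$ steps''; writing $r = \prob{T_{\DNF}^F \le 2}$ and using $T_{\DNF}^F \le$ (a geometric-type tail once we know each step from any level $\ge 1/2$ closes with probability $\ge q$), one gets $\ew{T_{\DNF}^F} \le 2 - r + (\text{tail})$, and the tail is controlled by $q_0$: from any level $\ell\in[1/2,1)$ a single item of size $\ge 1/2$ closes the bin, so the number of additional items after reaching level $\ge 1/2$ is stochastically dominated by a geometric with success probability $q \ge q_0$. Assembling these estimates yields $\ew{T_{\DNF}^F}\le 2 - \delta_1(q_0) < 2$, and since $\ew{X}\le 1$ trivially, $\AECR{\DNF}{F} = 1/(\ew{X}\ew{T_{\DNF}^F}) \ge 1/(2-\delta_1)$.

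Combining the two regimes: in the first regime I would use that $\ew{T_{\DNF}^F}$ is always at most some absolute constant $C$ (the expected hitting time of the closed state in an irreducible chain on a finite state set whose size is bounded in terms of the smallest item — but the smallest item is not bounded below, so instead bound $\ew{T_{\DNF}^F}$ by a constant using that after any item the level increase is "not too concentrated near $0$'' for a perfect-packing distribution, or simply restrict attention, via Lemma~\ref{lemma:perfectPackingIsWorstCase} and a limiting argument, to distributions whose smallest item is bounded below by a fixed $s_{\min}>0$, making $C = C(s_{\min})$) — then $\ew{X}\cdot\ew{T_{\DNF}^F}\le (1-\delta_0)\cdot C$; this is $\le 2-\delta$ only if $C \le 2/(1-\delta_0)$, which forces choosing $\delta_0$ and handling the remaining moderate range by yet another argument. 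I expect the main obstacle to be exactly this bookkeeping: getting a single clean constant that works simultaneously in the "mean near $1$'' regime and the "mean bounded away from $1$'' regime, and in particular controlling $\ew{T_{\DNF}^F}$ uniformly (the lack of a lower bound on the smallest item size is the real technical nuisance, and I anticipate the authors sidestep it by a compactness/limiting reduction to finitely many "types'' of distribution, or by the Lorden-inequality bound previewed for $\mathcal{P}_x$ applied at $x = 1$).
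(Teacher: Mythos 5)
The reformulation is fine — by Lemma~\ref{lemma:perfectPackingIsWorstCase} and~\eqref{eqn:limitAECR} it does suffice to show $\ew{X}\cdot\ew{T_{\DNF}^F}\le 2-\delta$ uniformly over perfect-packing $F$ — but the proposed case split on $\ew{X}$ then goes in a circle and neither branch closes. By Wald's equation $\ew{X}\cdot\ew{T_{\DNF}^F}=\ew{S_{T_{\DNF}^F}}=1+\ew{R}$, where $R$ is the overshoot of the stopped sum past $1$; so the quantity you need to push below $2$ is exactly $1+\ew{R}$, independently of whether $\ew{X}$ is near $1$ or not. Your ``mean near~$1$'' branch tries to conclude $\ew{T_{\DNF}^F}\le 2-\delta_1$, but the geometric-tail estimate you sketch (reach level $\ge 1/2$, then close with per-step probability $\ge q$) gives something like $\ew{T_{\DNF}^F}\le \ew{T_{1/2}}+1/q \le 2/(1-2\delta_0)$, which is \emph{above}~$2$, not below; closing quickly does not by itself control the overshoot, since two items of size near~$1$ close a bin in two steps with overshoot near~$1$. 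Your ``mean bounded away from~$1$'' branch has no argument at all: $\ew{T_{\DNF}^F}$ is genuinely unbounded (all items of size $1/m$ gives $\ew{T_{\DNF}^F}=m$), the reduction to a fixed $s_{\min}>0$ is not provided by Lemma~\ref{lemma:perfectPackingIsWorstCase}, and Lorden's inequality at $x=1$ yields only the trivial bound $\tfrac12$. You have correctly identified where the difficulty lies, but you do not have a mechanism that resolves it.

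The paper's proof avoids any case split on $\ew{X}$ and attacks $\ew{R}$ directly. It observes $\AECR{\DNF}{F}=(1+\ew{R})^{-1}$ and proves $\ew{R}\le 1-\epsilon$ with $\epsilon$ universal, by exhibiting, for every perfect-packing $F$, an event of probability at least $1.1\cdot 10^{-12}$ on which the bin is closed with overshoot at most $\tfrac12$. The heart of the argument is combinatorial: after writing $F$ as the uniform distribution over the items of $\ell^\star$ perfectly packed bins, split items into large ($>\tfrac12$) and small, let $b^\star$ be the median large item and $s^\star=1-b^\star$, and show that, conditioned on not yet having seen a large item in $15n/\ell$ draws, the small items drawn sum to at least $s^\star$ with constant probability. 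This is proved via a coupon-collector/parking-function style estimate (Corollary~\ref{cor:Cover}): the small items are partitioned into $n/\ell$ groups of size $\ell$, and a random vector in $\{1,\dots,n/\ell\}^{5n/\ell}$ ``covers'' $(n/\ell,\dots,1)$ with probability $\ge 0.956$. Once the small items before the first large item total between $s^\star$ and $\tfrac12$, a large item of size $\ge b^\star$ (probability $\ge\tfrac12$) closes the bin with overshoot $\le\tfrac12$; a second inequality handles the complementary case where the small items alone reach $\tfrac12$. This covering machinery is the missing ingredient in your proposal, and it is what lets the argument be uniform over all discrete $F$ without any lower bound on item sizes.
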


\begin{proof}
Let $X_1, X_2, \ldots$ denote i.i.d.\ random variables distributed according to $F$, and $S_n = \sum_{i=1}^n X_i$. The waste, which occurs closing the bin, is given by $R = S_{T_{\DNF}^F} - 1$. We will denote $R$ also as \emph{overshoot}. Due to Wald's equation it holds that $1 + \ew{R} = \ew{S_{T_{\DNF}^F}} = \ew{T_{\DNF}^F} \cdot \ew{X}$. From (\ref{eqn:limitAECR}) it follows that $\AECR{\DNF}{F} = (1 + \ew{R})^{-1}$. We show that there exists an $\epsilon > 0$, independent of the perfect-packing distribution $F$, such that $\ew{R} \leq 1 - \epsilon$.

We assume that $F$ is induced by $\ell^\star$ perfectly packed bins and a uniform distribution on the items. Otherwise we could copy bins to achieve such a setting. We call an item \emph{large} if it is strictly larger than $1/2$. Otherwise we call the item \emph{small}. Our goal is to show that we will close a bin with a small item with a constant probability, independent of $F$.

We denote by $\ell \leq \ell^\star$ the number of large items and by $n$ the number of small items. We assume that $n \geq \ell$ and that $n$ is a multiple of $\ell$. If this is not the case, we add an appropriate number of items of size $0$. This will not change the probability of closing a bin with a small item.

Let $b_1, \ldots, b_\ell$ denote the large items and assume that $b_1 \geq b_2 \geq \ldots \geq b_\ell > 1/2$. Let $b^\star := b_{\lceil \ell/2 \rceil}$ and $s^\star = 1 - b^\star$.

Let $T$ denote the time step at which we draw for the first time a large item, and let $\mathcal{E}_F$ denote the event $\{T \geq 15n/\ell + 1\} \cap \{\sum_{i=1}^{15n/\ell} X_i \geq s^\star\}$. We will show after finishing this proof that for an arbitrary discrete distribution, it is $\prob{\mathcal{E}_F} \geq 1.1 \cdot 10^{-12}$. Let $A := \{\sum_{i=1}^{T-1} X_i < 1/2\}$. We set $q := \prob{\mathcal{E}_F^c}$, $p_1 := \prob{\mathcal{E}_F \cap A}$, and $p_2 := \prob{\mathcal{E}_F \cap A^c}$. Then, the following inequalities hold:
\begin{align*}
\ew{R} & \leq (1/2) \cdot p_2^2 + 1\cdot(1 - p_2^2) = 1 - p_2^2/2 \\
\ew{R} & \leq (1/2) \cdot p_1/2 + 1 \cdot (p_1/2 + p_2 + q) = 1 - p_1/4.
\end{align*}
The first inequality follows from the observation that~$\prob{A^c}^2\ge p_2^2$ is a lower bound on the probability that the bin gets filled with only small items, in which case the waste is at most~$1/2$. The second inequality follows because if the event~$\mathcal{E}_F \cap A$ occurs then the small items that arrive before the first large item have a total size of at least~$s^\star$ and at most~$1/2$. If the first large item has size at least~$b^\star$, which happens with probability at least~$1/2$, then it closes the bin with waste at most~$1/2$.
Since $p_1 + p_2 \geq 1.1 \cdot 10^{-12}$, it follows that $\ew{R} < 1$.
\qed
\end{proof}

In the proof we have used a lower bound for the probability of $\prob{\mathcal{E}_F}$:

\begin{lemma}
\label{lemma:estimateProbLowerBound}  
For every discrete distribution $F$ it is
\begin{align*}
\prob{\mathcal{E}_F} \geq 1.1 \cdot 10^{-12}.
\end{align*}
\end{lemma}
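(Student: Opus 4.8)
The plan is to condition on the first $k:=15n/\ell$ drawn items all being small and to show that, conditioned on this, the total size they collect reaches $s^\star$ with at least constant probability; here I assume $\ell\ge1$, since for $\ell=0$ all items are small and Theorem~\ref{theorem:generalLowerBound} needs no input from this lemma. Write $r:=n/\ell$, which is a positive integer by the normalization in the proof of Theorem~\ref{theorem:generalLowerBound}. If $b^\star=1$, i.e.\ $s^\star=0$, the size condition in $\mathcal{E}_F$ is vacuous and $\prob{\mathcal{E}_F}=\prob{X_1,\dots,X_k\text{ all small}}=\left(\tfrac{r}{r+1}\right)^{15r}\ge e^{-15}$ (using $r\ln(1+\tfrac1r)\le1$); so I assume $s^\star>0$. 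Conditioned on $X_1,\dots,X_k$ all being small, they are i.i.d.\ uniform on the multiset of the $n$ small items, hence
\begin{align*}
\prob{\mathcal{E}_F}\;=\;\left(\tfrac{r}{r+1}\right)^{15r}\cdot\prob{\textstyle\sum_{i=1}^{k}Y_i\ge s^\star}\;\ge\;e^{-15}\cdot\prob{\textstyle\sum_{i=1}^{k}Y_i\ge s^\star},
\end{align*}
with $Y_1,\dots,Y_k$ i.i.d.\ uniform on the $n$ small items, and it remains to bound $\prob{\sum_{i=1}^{k}Y_i\ge s^\star}$ below by an absolute constant.

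The key structural observation concerns where the small mass sits. Let $b_1\ge\dots\ge b_\ell$ be the large items; since two items exceeding $\tfrac12$ cannot share a bin of size $1$, each $b_i$ occupies its own bin, whose remaining items are all small with total size $1-b_i<\tfrac12$. For the $g:=\lfloor\ell/2\rfloor+1$ bins whose large item is $\le b^\star=b_{\lceil\ell/2\rceil}$, this small part has size $1-b_i\ge1-b^\star=s^\star$; call these the \emph{good} bins. Their small parts are pairwise disjoint sub-multisets of the $n$ small items, each of total size in $[s^\star,\tfrac12)$, and $g\ge\ell/2$, so $g/n\ge1/(2r)$.

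I would then split according to how the small mass inside a good bin is distributed: call a good bin \emph{heavy} if it contains a small item of size $\ge s^\star/2$ and \emph{light} otherwise, and note that one of the two classes has size $\ge g/2$. If at least $g/2$ good bins are heavy, there are at least $g/2$ distinct small items of size $\ge s^\star/2$, so $p:=\prob{Y_1\ge s^\star/2}\ge\tfrac{g/2}{n}\ge\tfrac1{4r}$, and the count $M\sim\mathrm{Bin}(k,p)$ of such draws has mean $kp\ge\tfrac{15}{4}$; since $M\ge2$ forces $\sum_iY_i\ge s^\star$, it suffices to bound $\prob{M\le1}$. Using $\prob{M\le1}=(1-p)^{k-1}\bigl(1+(k-1)p\bigr)\le(1+x)e^{-x}$ with $x=(k-1)p\ge(1-\tfrac1{15})\tfrac{15}{4}=\tfrac72$ and $(1+x)e^{-x}$ decreasing, one gets $\prob{M\le1}\le\tfrac{9}{2}e^{-7/2}<\tfrac14$, hence $\prob{\sum_iY_i\ge s^\star}>\tfrac34$. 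If instead at least $g/2$ good bins are light, let $V$ be the union of their small parts: a sub-multiset of total size $\ge(g/2)s^\star$ whose elements are all $<s^\star/2$. Then $W:=\sum_{i=1}^{k}Y_i\mathbf{1}[Y_i\in V]\le\sum_iY_i$ is a sum of $k$ i.i.d.\ terms in $[0,s^\star/2)$ with $\ew W=k\cdot\tfrac{\mathrm{size}(V)}{n}\ge\tfrac{15}{4}s^\star$ and $\mathrm{Var}(W)\le k\,\ew{Y_1^2\mathbf{1}[Y_1\in V]}\le\tfrac{s^\star}{2}\ew W$, so Chebyshev's inequality with $u:=\ew W/s^\star\ge\tfrac{15}{4}$ gives
\begin{align*}
\prob{W<s^\star}\;\le\;\frac{(s^\star/2)\,\ew W}{(\ew W-s^\star)^2}\;=\;\frac{u/2}{(u-1)^2}\;\le\;\frac{15/8}{(11/4)^2}\;=\;\frac{30}{121}\;<\;\frac14,
\end{align*}
so again $\prob{\sum_iY_i\ge s^\star}\ge\prob{W\ge s^\star}>\tfrac34$. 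In either case $\prob{\sum_{i=1}^{k}Y_i\ge s^\star}>\tfrac34$, whence $\prob{\mathcal{E}_F}>\tfrac34 e^{-15}$, which exceeds $1.1\cdot10^{-12}$ with a large margin.

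I expect the main obstacle to be precisely that $s^\star$ may be arbitrarily small, so that $\sum_iY_i$ can have tiny expectation and no single second-moment estimate works; the heavy/light dichotomy is what repairs this. In the heavy regime one drops to a crude binomial tail, and in the light regime the summands become so small relative to the target $s^\star$ that Chebyshev finally bites --- but only barely ($30/121<\tfrac14$ by one part in $121$), which is exactly why the factor $15$ is chosen in $15n/\ell$; any smaller factor breaks this step. Two auxiliary estimates still need to be verified uniformly, though both are routine: $\left(\tfrac{r}{r+1}\right)^{15r}\ge e^{-15}$ for all integers $r\ge1$, and $\prob{\mathrm{Bin}(k,p)\le1}<\tfrac14$ whenever $k\ge15$ and $kp\ge\tfrac{15}{4}$.
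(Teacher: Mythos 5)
Your proof is correct, and it takes a genuinely different route from the paper's. The paper conditions on the first $15n/\ell$ draws being small, partitions the $n$ small items into $n/\ell$ size-sorted groups of $\ell$, invokes the separate covering/coupon-collector lemma (Corollary~\ref{cor:Cover}) to bound the probability of covering every group three times, and then splits on whether the heaviest group carries too much mass; the awkward case yields the final constant $1.1\cdot10^{-12}$ via a raw binomial estimate. You instead exploit that in the paper's bin representation each large item $b_i>1/2$ sits alone in its own perfectly-packed bin, so the $g=\lfloor\ell/2\rfloor+1$ bins with $b_i\le b^\star$ provide $g$ pairwise disjoint sub-multisets of small items each of total size at least $s^\star$. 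Your heavy/light dichotomy on these good bins (at least one of which captures $\ge g/2\ge\ell/4$ bins) reduces the problem to a $\mathrm{Bin}(k,p)$ tail with $kp\ge 15/4$ in the heavy case, and to a one-sided Chebyshev bound $\frac{u/2}{(u-1)^2}\le\frac{30}{121}<\frac14$ with $u\ge15/4$ in the light case, where the bounded support $<s^\star/2$ controls the variance. This bypasses the covering lemma entirely, is arguably more elementary, makes the role of the factor $15$ transparent (it is exactly what pushes $30/121$ just under $1/4$), and yields a much stronger constant $\frac34e^{-15}\approx2\cdot10^{-7}$ versus the paper's $1.1\cdot10^{-12}$. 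Your handling of the degenerate cases ($\ell=0$ not needed, $s^\star=0$ vacuous) is also sound, and your sharper estimate $\left(\frac{r}{r+1}\right)^{15r}\ge e^{-15}$ for the large-item-free prefix (versus the paper's $4^{-15}$) is correct since $r\ln(1+1/r)\le1$.
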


The interesting part in the proof of the lemma is to show that the sum of the small items exceeds the barrier $s^\star$ with positive probability. In order to prove this we reduce the summation to a kind of coupon-collectors problem. We assume that the coupons have numbers from $1$ to $n$, and we can use for an coupon with number $i$, also a coupon with a higher number. 

Let~$m\ge n$ and~$v\in\{1,\ldots,n\}^m$. We denote by~$v^{\star}$ the vector with the same entries as~$v$ except that the entries are ordered in non-increasing order. We say that~$v$ \emph{covers} the vector~$(n,\ldots,1)$ if~$v_i^{\star}\ge n-i+1$ for all~$i\in[n]$.

\begin{lemma}
Let~$a\in\{5,6,7,\ldots\}$ and let~$n\in\NN$ be arbitrary.
Let~$v$ be chosen uniformly at random from~$\{1,\ldots,n\}^{an}$. Then the probability that~$v$ does not cover the vector~$(n,\ldots,1)$ is bounded from above by
\[
    \frac{1}{e^a-1} + \frac{1}{ea}\left(\frac{1}{e^{a-2}/(2a)-1}-\frac{2a}{e^{a-2}}\right).
\]
\end{lemma}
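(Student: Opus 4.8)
The plan is a union bound over $n$ one‑dimensional threshold events, each controlled by a binomial lower‑tail estimate. For $i\in[n]$ let $M_i$ denote the number of coordinates of $v$ that are at least $n-i+1$; since the coordinates are i.i.d.\ uniform on $\{1,\ldots,n\}$, $M_i$ is distributed as $\operatorname{Bin}(an,i/n)$, and after sorting $v$ non‑increasingly the condition $v_i^{\star}\ge n-i+1$ holds exactly when $M_i\ge i$. Hence $v$ fails to cover $(n,\ldots,1)$ if and only if $M_i\le i-1$ for some $i\in[n]$, and the index $i=n$ may be dropped because $M_n=an\ge 5n>n$. Therefore
\[
  \prob{v\text{ does not cover }(n,\ldots,1)}\le\sum_{i=1}^{n-1}\prob{M_i\le i-1}.
\]

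Next I would split $\prob{M_i\le i-1}=\sum_{j=0}^{i-1}\prob{M_i=j}$ into the $j=0$ term and the terms with $1\le j\le i-1$. The former is $\prob{M_i=0}=(1-i/n)^{an}\le e^{-ai}$, and since $\sum_{i\ge1}e^{-ai}=\frac{1}{e^a-1}$ this already accounts for the first summand of the claimed bound. For the terms with $j\ge1$ (which only occur when $i\ge2$) I would use $\binom{an}{j}\le (an)^j/j!$ together with a bound on $(1-i/n)^{an-j}$ that is uniform in $j$: because $j\le i-1\le i\le n$ we may replace the exponent $an-j$ by the smaller $an-i$, giving $(1-i/n)^{an-j}\le(1-i/n)^{an-i}\le e^{-ai+i^2/n}\le e^{-(a-1)i}$. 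Thus $\prob{M_i=j}\le e^{-(a-1)i}(ai)^j/j!$, and the exponential factor no longer depends on $j$.

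It then remains to bound $\sum_{j=1}^{i-1}(ai)^j/j!$. The ratio of consecutive terms equals $ai/j\ge a\ge5$ over the whole range, so the sum is at most $\frac{a}{a-1}\le2$ times its largest term $(ai)^{i-1}/(i-1)!$; and $(i-1)!\ge((i-1)/e)^{i-1}$ combined with $i/(i-1)\le2$ for $i\ge2$ gives $(ai)^{i-1}/(i-1)!\le(2ae)^{i-1}$. Putting this together, the $j\ge1$ part of $\prob{M_i\le i-1}$ is at most $2(2ae)^{i-1}e^{-(a-1)i}=\frac{1}{ae}\bigl(2ae^{2-a}\bigr)^i$. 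Since $2ae^{2-a}<1$ for $a\ge5$, summing the geometric series over $i\ge2$ yields $\frac{1}{ae}\cdot\frac{(2ae^{2-a})^2}{1-2ae^{2-a}}$; substituting $2ae^{2-a}=2a/e^{a-2}$ and using $\frac{x^2}{1-x}=\frac{x}{1-x}-x$ rewrites this as exactly $\frac{1}{ea}\bigl(\frac{1}{e^{a-2}/(2a)-1}-\frac{2a}{e^{a-2}}\bigr)$. Adding the two contributions proves the lemma.

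There is no deep difficulty here; the proof just has to avoid being wasteful at the two places that matter. The first is keeping the exponential factor $e^{-(a-1)i}$ independent of $j$, which is why one shrinks the exponent $an-j$ to $an-i$; it is precisely the cancellation in $\binom{an}{j}(i/n)^j\le(ai)^j/j!$ that makes the inner sum a partial exponential series with no dependence on $n$. The second is calibrating the crude constants ($\frac{a}{a-1}\le2$, $\frac{i}{i-1}\le2$, separating off $j=0$) so that the two resulting geometric series collapse into the stated closed form rather than merely a comparable expression, and checking en route that $2ae^{2-a}<1$, i.e.\ $e^{a-2}>2a$, which is the only place the hypothesis $a\ge5$ is used.
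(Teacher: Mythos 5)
Your proof is correct and follows essentially the same route as the paper: a union bound over the per-index failure events $\{v_i^\star < n-i+1\}$ (binomial lower tails), splitting off the $j=0$ term to get $\frac{1}{e^a-1}$, and bounding the $j\ge 1$ terms uniformly in $j$ by $e^{-(a-1)i}(ea)^{i-1}2^i$ before summing the geometric series in $i$. The only differences are cosmetic choices of elementary estimates ($\binom{an}{j}\le (an)^j/j!$ plus a ratio argument, versus the paper's $\binom{an}{j}\le(ean/j)^j$ and $\sum_j (i/j)^j\le 2^i$), which lead to the identical intermediate bound and the same closed form.
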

\begin{proof}
For~$i\in[n]$, let~$U_i$ denote the event that $v^{\star}_i < n-i+1$. The event~$U_i$ occurs if and only if the number of entries from~$\{n-i+1,\ldots,n\}$ in~$v$ is at most~$i-1$. Hence,
\[
  \prob{U_1} = \prob{\forall i\in[n]:v_i<n} = \left(1-\frac{1}{n}\right)^{an} \le e^{-a}.
\]
For~$i\in \{2,\ldots,n\}$,
\begin{align*}
  \prob{U_i} & = \prob{\text{at most~$i-1$ entries of~$v$ from~$\{n-i+1,\ldots,n\}$}}\\
           & = \sum_{j=0}^{i-1} \binom{an}{j} \cdot \left(1-\frac{i}{n}\right)^{an-j}\cdot \left(\frac{i}{n}\right)^{j}\\
           & = \left(1-\frac{i}{n}\right)^{an}+\sum_{j=1}^{i-1} \binom{an}{j} \cdot \left(1-\frac{i}{n}\right)^{an-j}\cdot \left(\frac{i}{n}\right)^{j}\\
           & \le e^{-ia}+\sum_{j=1}^{i-1} \left(\frac{ean}{j}\right)^j \cdot \left(1-\frac{i}{n}\right)^{(a-1)n}\cdot \left(\frac{i}{n}\right)^{j}\\
           & = e^{-ia}+\left(1-\frac{i}{n}\right)^{(a-1)n} \cdot \sum_{j=1}^{i-1} \left(\frac{eai}{j}\right)^j\\
           & \le e^{-ia}+e^{-i(a-1)} \cdot (ea)^{i-1}\cdot \sum_{j=1}^{i-1} \left(\frac{i}{j}\right)^j\\
           & \le e^{-ia}+e^{-i(a-1)} \cdot (ea)^{i-1}\cdot \sum_{j=1}^{i-1} \binom{i}{j}\\
           & \le e^{-ia}+e^{-i(a-1)} \cdot (ea)^{i-1}\cdot 2^i.
\end{align*}

The probability that~$v$ does not cover the vector~$(n,\ldots,1)$ can be bounded from above as follows:
\begin{align*}
\prob{U_1\cup\ldots\cup U_n} & \le \sum_{i=1}^n\prob{U_i}\\
  & \le e^{-a} + \sum_{i=2}^n \left(e^{-ia}+e^{-i(a-1)} \cdot (ea)^{i-1}\cdot 2^i\right)\\
  & = \sum_{i=1}^n e^{-ia} + \frac{1}{ea}\sum_{i=2}^n \left(e^{-i(a-2)} \cdot (2a)^i\right)\\
  & \le \left(\frac{1}{1-e^{-a}}-1\right) + \frac{1}{ea}\sum_{i=2}^n \left(\frac{2a}{e^{a-2}}\right)^i\\
  & \le \frac{e^{-a}}{1-e^{-a}} + \frac{1}{ea}\left(\frac{1}{1-2a/e^{a-2}}-\frac{2a}{e^{a-2}}-1\right)\\
  & = \frac{1}{e^a-1} + \frac{1}{ea}\left(\frac{1}{e^{a-2}/(2a)-1}-\frac{2a}{e^{a-2}}\right).
\end{align*}
In the calculation we used that~$\frac{2a}{e^{a-2}} < 1$ for~$a\ge 5$.
\qed
\end{proof}

\begin{corollary}\label{cor:Cover}
The probability that a uniform random vector from~$\{1,\ldots,n\}^{5n}$ does not cover the vector~$(n,\ldots,1)$ is bounded from above by~$0.044$.
\end{corollary}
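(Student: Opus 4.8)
The plan is to obtain Corollary~\ref{cor:Cover} as the special case $a=5$ of the preceding lemma together with a short numerical estimate. Since $5\in\{5,6,7,\ldots\}$, the lemma applies for every $n\in\NN$ and bounds the probability that a uniform random vector from $\{1,\ldots,n\}^{5n}$ does not cover $(n,\ldots,1)$ from above by
\[
\frac{1}{e^5-1} + \frac{1}{5e}\left(\frac{1}{e^3/10-1}-\frac{10}{e^3}\right).
\]
So it only remains to check that this constant is at most $0.044$.

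I would estimate the two summands separately, being careful to round $e$ in the direction that keeps the expression an upper bound. For the first term, $e^5 > 148$, hence $\tfrac{1}{e^5-1} < \tfrac{1}{147} < 0.0068$. For the second term, $e^3$ lies strictly between $20$ and $21$; in particular $e^3/10>2$, so $\tfrac{1}{e^3/10-1}<\tfrac{1}{2.008-1}<0.992$ (this also re-confirms the hypothesis $2a/e^{a-2}=10/e^3<1$ used in the lemma), while $\tfrac{10}{e^3}>\tfrac{10}{21}>0.476$. Thus the parenthesized factor is less than $0.992-0.476<0.516$, and since $\tfrac{1}{5e}<\tfrac{1}{13.59}<0.0736$, the second summand is less than $0.0736\cdot 0.516<0.038$. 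Adding the two bounds gives a total strictly below $0.0068+0.038=0.0448$; a very mild sharpening of the rough estimates (e.g.\ using $e^3>20.08$ so that $10/e^3>0.497$ and the parenthesized factor is below $0.495$) brings the total below $0.0068+0.0365<0.044$, which proves the claim.

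There is essentially no genuine obstacle here: the statement is a direct instantiation, and the only point requiring care is the bookkeeping of the numerical bounds, making sure each occurrence of $e$ is rounded up or down so that the chain of inequalities indeed yields an upper bound on the target constant.
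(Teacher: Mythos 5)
Your proposal is correct and is precisely the intended route: the paper states Corollary~\ref{cor:Cover} without a separate argument because it is the instantiation $a=5$ of the preceding lemma, and verifying $\frac{1}{e^5-1}+\frac{1}{5e}\bigl(\frac{1}{e^3/10-1}-\frac{10}{e^3}\bigr)\approx 0.0431<0.044$ is all that remains. One tiny slip: with your intermediate bounds $1/(5e)<0.0736$ and the parenthesized factor $<0.495$, the second summand is bounded by $0.0736\cdot 0.495\approx 0.03643$ rather than $0.0365$, but since $0.0068+0.03643<0.044$ the conclusion still goes through as stated.
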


\begin{proof}[Lemma \ref{lemma:estimateProbLowerBound}]
Using that $n \geq \ell$, and so $\ell/(n+\ell) \leq 1/2$, we obtain
\begin{align*}
\prob{T \geq \frac{15n}{\ell} + 1}
= \left( 1 - \frac{\ell}{n+\ell} \right)^{15n/\ell}
\geq \left( 1 - \frac{\ell}{n+\ell} \right)^{15(n+\ell)/\ell}
\geq \frac{1}{4^{15}}.
\end{align*}
Let us condition in the following on the event that $T \geq \frac{15n}{\ell} +1$. We now want to show, that the event, that the sum of the first $15n/\ell$ small items is at least $s^\star$, occurs with positive probability independent of $F$. Thereto we partition the small items according to their size into $n/\ell$ groups with $\ell$ items each. The $\ell$ smallest items are in the first group and so on. We denote by $h$ the total size of items in the last group and $Z$ the total size of all small items. Let $v = (m_1, \ldots, m_{15n/\ell})$, where $m_i$ denotes the number of the group the $i$-th drawn item belongs to. We say that $v$ covers all groups three times, if there exists a permutation $v^\sigma$ of $v$ s.t. $(v^\sigma_1, \ldots, v^\sigma_{5n/\ell})$, $(v^\sigma_{5n/\ell + 1}, \ldots, v^\sigma_{10n/\ell})$ and $(v^\sigma_{10n/\ell + 1}, \ldots, v^\sigma_{15n/\ell})$ cover $(n,\ldots,1)$ respectively. Under the condition that $T \geq 15n/\ell + 1$, $v$ covers all groups three times with probability at least $0.956^3 \geq 0.873$ according to Corollary \ref{cor:Cover}. For $i \in [n/\ell]$ let $g_i$ denote the largest item in group $i$. If all groups are covered three times, the sum of the small items drawn is at least
\begin{align*}
3 \sum_{i=1}^{n/\ell-1} g_i \geq \frac{3(Z-h)}{\ell}.
\end{align*}
Furthermore, the total weight of all small items is at least
\begin{align*}
Z \geq (\ell - [\ell/2] + 1)s^\star \geq \ell s^\star/2.
\end{align*} 
For the following argument we can assume w.l.o.g. that there is no small item with size larger than $s^\star$ because we are only interested in the probability that the small items drawn add up to at least $s^\star$. If all groups are covered three times, the sum of the small items is at least
\begin{align*}
3 \sum_{i=1}^{n/\ell-1} g_i
\geq \frac{3(Z-h)}{\ell}
\geq \frac{3(\ell s^\star/2 - h)}{\ell}
= 3s^\star/2 - 3h/\ell.
\end{align*}
If $h \leq \ell s^\star/6$, then the sum of the small items drawn is at least $s^\star$. Hence, in this case
\begin{align*}
\prob{\mathcal{E}}
\geq 0.873 \cdot \frac{1}{4^{15}}
\geq 8 \cdot 10^{-10}.
\end{align*}
If $h > \ell s^\star/6$ then at least $\ell/11$ small items have size at least $s^\star/12$. We can see this as follows: Let $x$ denote the number of items in group $1$, which have size at least $s^\star/12$. Then $h$ is bounded from above by $xs^\star + (\ell - x)s^\star/12$. Since $h > \ell s^*/6$, it follows
\begin{align*}
(\ell s^\star)/6 < h \leq x s^\star + (\ell - x)s^\star/12.
\end{align*}
A simple computation then yields $x > \ell/11$. The probability that exactly $12$ of these items are drawn under the condition $T \geq 15n/\ell + 1$ is at least
\begin{align*}
& \binom{15n/\ell}{12}\cdot\left(1-\frac{\ell/11}{n}\right)^{15n/\ell-12}\cdot\left(\frac{\ell/11}{n}\right)^{12} \\
\geq & \left(\frac{15n}{12\ell}\right)^{12}\cdot \left(1-\frac{\ell}{11n}\right)^{15n/\ell}\cdot\left(\frac{\ell}{11n}\right)^{12}\\
\geq & \left(\frac{15}{11\cdot 12}\right)^{12}\cdot 0.239 \ge 1.1\cdot 10^{-12}.
\end{align*}
\qed
\end{proof}

\subsection{Improved Lower Bounds for Certain Classes of Perfect-packing Distributions}\label{subsection:SpecialUpperLowerBounds}
At first we look at the case that the maximum item size in the perfect-packing distribution is bounded from above by $x$. Let $\ppdistr_x$ denote the set of all such distributions.

\begin{theorem}
\label{thm:ppdistrx}
If $x \geq \frac{1}{2}$, then
\begin{align*}
\AECR{\DNF}{\ppdistr_x} \geq \frac{1}{1 + x^2 + (1-x)^2}. 
\end{align*}
\end{theorem}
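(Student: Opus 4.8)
The plan is to work with the identity $\AECR{\DNF}{F} = (1+\ew{R})^{-1}$ that was established in the proof of Theorem~\ref{theorem:generalLowerBound}, where $R = S_{T_{\DNF}^F}-1$ is the \emph{overshoot} of the random walk $S_n = \sum_{i=1}^n X_i$ (with $X_i$ i.i.d.\ according to $F$) past level~$1$. Since $\ppdistr_x$ consists of perfect-packing distributions, it then suffices to prove the deterministic bound $\ew{R} \le x^2 + (1-x)^2$ for every $F\in\ppdistr_x$ with $x\ge\tfrac12$; taking the infimum over $F$ afterwards yields the theorem.

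To bound $\ew{R}$ I would invoke \emph{Lorden's inequality}: for a random walk with positive i.i.d.\ increments, the expected overshoot past any level is at most $\ew{X^2}/\ew{X}$. (Item sizes equal to~$0$ can be discarded beforehand, since they affect neither the law of $R$ nor the ratio $\ew{X^2}/\ew{X}$, so that all increments become strictly positive; and the non-strict stopping rule $S_n\ge 1$ can be replaced by $S_n > 1-\epsilon$ for a sufficiently small $\epsilon$ — harmless for a discrete $F$ — to match the hypotheses of Lorden's inequality.) It thus remains to show that $\ew{X^2} \le (x^2+(1-x)^2)\,\ew{X}$ whenever all item sizes of $F$ are at most $x$. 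Here I would use the Courcoubetis--Weber characterization recalled after Definition~\ref{def:PerfectPacking}: the probability vector decomposes as $p=\sum_k\alpha_k b^k$ with $\alpha_k\ge 0$ and perfect-packing configurations $b^k$ (the conic hull of the finitely many configurations on a fixed support is polyhedral, hence closed, so no limiting argument is needed). Substituting this decomposition gives $\ew{X}=\sum_i s_i p_i=\sum_k\alpha_k\sum_i b^k_i s_i=\sum_k\alpha_k$ and $\ew{X^2}=\sum_k\alpha_k\sum_i b^k_i s_i^2$, so the inequality reduces to the per-configuration statement $\sum_i b_i s_i^2 \le x^2+(1-x)^2$ for each individual perfect-packing configuration $b$ that uses only item sizes $\le x$.

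This last statement is the combinatorial heart of the argument and the step I expect to require the most care (though it stays elementary). It says: if $t_1,\dots,t_K\ge 0$ with $\sum_j t_j = 1$ and $t_j\le x$ for all $j$, then $\sum_j t_j^2 \le x^2+(1-x)^2$. Since $t\mapsto\sum_j t_j^2$ is convex, its maximum over the polytope $\{t\ge 0:\sum_j t_j=1,\ t_j\le x\}$ is attained at a vertex, and at a vertex at most one coordinate lies strictly between $0$ and $x$; hence a maximizer has the form ``$a$ coordinates equal to $x$, one coordinate equal to $1-ax$, the rest $0$.'' The constraint $x\ge\tfrac12$ forces $a\le 1/x\le 2$, so checking $a\in\{0,1,2\}$ shows the value equals $x^2+(1-x)^2$ in every case (and is attained by the configuration $(x,1-x)$). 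Combining the three pieces gives $\ew{R}\le \ew{X^2}/\ew{X}\le x^2+(1-x)^2$, hence $\AECR{\DNF}{F}\ge\bigl(1+x^2+(1-x)^2\bigr)^{-1}$ for every $F\in\ppdistr_x$, which is the claim.
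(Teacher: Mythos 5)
Your proposal is correct and follows essentially the same route as the paper: the identity $\AECR{\DNF}{F}=(1+\ew{R})^{-1}$, Lorden's inequality $\ew{R}\le \ew{X^2}/\ew{X}$, and the perfect-packing decomposition of $p$ to reduce to bounding $\sum_i b_i s_i^2$ for a single configuration by $x^2+(1-x)^2$. The only difference is cosmetic: the paper bounds the ratio by the maximum over configurations (with denominator $1$) and simply asserts the final bound, whereas you verify it via the vertex/convexity argument, which is a welcome extra level of detail.
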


The given lower bound slightly improves the worst-case bound $(1+x)^{-1}$ in the case that the maximum item size is greater than $\frac{1}{2}$.
Csirik et al.\ pointed out in \cite{CsirikFGK91} that in the case of $\DNF$ there is a connection between the bin covering problem and renewal theory, and so it is obvious to use tools from this field. The proof is based on an estimate of the overshoot, given by Lorden:
\begin{lemma}[Lorden's inequality, \cite{Lorden}]
Suppose $X_1, X_2, \ldots$ are non-negative i.i.d. random variables with $\ew{X_1} > 0$ and $\ew{X_1^2} < \infty$. Let $S_n = X_1 + \ldots + X_n$, $T = \inf\{n \in \NN \, : \, S_n \geq 1\}$, and $R = S_{T} - 1$. Then
\begin{align*}
\ew{R} \leq \ew{X_1^2}/\ew{X_1}.
\end{align*}
\end{lemma}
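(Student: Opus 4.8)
The plan is to treat this as a statement in renewal theory and prove it in two steps: first express $\ew{R}$ explicitly in terms of the renewal measure of the walk $(S_n)_{n\ge 0}$, and then estimate that expression using $\ew{X_1^2}$ together with elementary properties of renewal measures. Throughout write $\mu=\ew{X_1}$, let $F$ be the law of $X_1$, and let $U=\sum_{n\ge 0}F^{\ast n}$ be the renewal measure of $(S_n)_{n\ge 0}$ (so $U(\{0\})=1$).

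First I would establish the identity
\[
\ew{R}=\int_{[0,1)}h(1-s)\,U(ds),\qquad\text{where }h(u):=\ew{(X_1-u)^+}=\int_u^\infty\prob{X_1>y}\,dy .
\]
This rests on a telescoping observation: in the formal sum $\sum_{n\ge 1}\bigl(X_n-(1-S_{n-1})\bigr)^+\mathbf{1}[S_{n-1}<1]$ the $n$-th summand equals $(S_n-1)^+$ whenever $S_{n-1}<1$ and vanishes otherwise, so it is non-zero only for $n=T$, where it equals $S_T-1=R$; hence the whole sum equals $R$. Taking expectations, conditioning each summand on $S_{n-1}$ and using that $X_n$ is independent of $S_{n-1}$ (so that $\ew{(X_n-(1-s))^+\mid S_{n-1}=s}=h(1-s)$ for $s<1$), this turns into $\ew{R}=\sum_{m\ge 0}\ew{h(1-S_m)\mathbf{1}[S_m<1]}$ (reindexing $m=n-1$) $=\int_{[0,1)}h(1-s)\,U(ds)$. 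I would then record the three elementary facts about $h$ that drive the estimate: $h$ is non-increasing, $h(0)=\mu$, and $\int_0^\infty h(u)\,du=\ew{X_1^2}/2$ (the last by Fubini, since $\int_0^\infty\!\int_u^\infty\prob{X_1>y}\,dy\,du=\ew{\int_0^{X_1}y\,dy}$).

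The remaining step is to show $\int_{[0,1)}h(1-s)\,U(ds)\le \tfrac{2}{\mu}\int_0^\infty h(u)\,du=\ew{X_1^2}/\mu$, and I expect this to be the main obstacle. The idea is to compare $U$ with the renewal measure of the stationary (equilibrium-delayed) renewal process, which is exactly $\mu^{-1}$ times Lebesgue measure on $(0,\infty)$ and whose corresponding overshoot has, at every level, the equilibrium distribution and hence constant mean $\ew{X_1^2}/(2\mu)$; one then argues that, because $h$ is non-increasing with $h(0)=\mu$, the excess of $U$ over $\mu^{-1}$ Lebesgue measure — which sits near the origin, where $U$ carries its unit atom — is absorbed at the cost of a factor $2$. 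Equivalently, Wald's equation $\ew{R}=\mu\,\ew{T}-1$ together with $\ew{T}=U([0,1))$ reduces the claim to the renewal-function bound $U([0,1))\le \mu^{-1}+\ew{X_1^2}/\mu^{2}$. The delicate point is that no pointwise control suffices: the naive estimate $U([0,a])\le a/\mu$ is false, its correct second-order correction being of order $\ew{X_1^2}/\mu^2$ — precisely what makes Lorden's bound tight (e.g.\ for deterministic $X_1$) — so one must genuinely exploit both the monotonicity of $h$ and the normalization $h(0)=\mu$ rather than any crude bound on the renewal measure alone.
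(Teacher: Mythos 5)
The paper does not prove this lemma itself; it is quoted from Lorden's article, so there is no internal proof to compare against. Judged on its own terms, your proposal is correct up to and including Step~1: the telescoping identity $\ew{R}=\int_{[0,1)}h(1-s)\,U(ds)$ with $h(u)=\ew{(X_1-u)^+}$ is a valid renewal-theoretic expression for the expected overshoot, and the three facts you record about $h$ (non-increasing, $h(0)=\ew{X_1}$, $\int_0^\infty h(u)\,du=\ew{X_1^2}/2$) are all correct.

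Step~2 is a genuine gap, and you flag it yourself. The inequality $\int_{[0,1)}h(1-s)\,U(ds)\le \ew{X_1^2}/\ew{X_1}$ is never established; the claim that the excess of $U$ over $\mu^{-1}\cdot\operatorname{Leb}$ ``is absorbed at the cost of a factor $2$'' describes the conclusion rather than deriving it, and it does not survive a closer look. First, $U$ is not dominated by $\delta_0+\mu^{-1}\operatorname{Leb}$: if $X_1\in\{0.1,0.9\}$ with equal probability, $U$ carries an atom of mass $1/2$ at $0.1$. Second, even if the excess were exactly a unit atom at the origin, the resulting estimate $h(0)+\mu^{-1}\int_0^\infty h(u)\,du=\mu+\ew{X_1^2}/(2\mu)$ exceeds the target $\ew{X_1^2}/\mu$ whenever $\ew{X_1^2}<2\ew{X_1}^2$, e.g.\ for deterministic $X_1$, so monotonicity of $h$ plus $h(0)=\mu$ is not enough. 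Finally, the Wald reformulation is correct but gives no leverage: since $\ew{R_t}=\mu\,U([0,t))-t$ for every level $t$, the renewal-function bound $U([0,t))\le t/\mu+\ew{X_1^2}/\mu^2$ is exactly equivalent to Lorden's inequality at that level, not a simpler lemma that implies it. So the framing and the reduction are right, but the essential content of Lorden's theorem --- which in Lorden's paper is closed via a second-moment/ladder-variable argument for the stopped sum --- is still missing and needs to be supplied.
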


\begin{proof}[Theorem \ref{thm:ppdistrx}]
Let $F \in \ppdistr_x$. We have already shown in the proof of Theorem \ref{theorem:generalLowerBound} that $\AECR{\DNF}{F} = (1 + \ew{R})^{-1}$. Plugging in the estimate from Lorden's inequality yields 
\begin{align}
\label{eqn:EstimateAECRviaLorden}
\AECR{\DNF}{F} \geq \left(1 + \frac{\ew{X^2}}{\ew{X}} \right)^{-1}.
\end{align}
Let $\mathcal{C}(F)$ denote all perfect-packing configurations of $F$. Since $F$
is a perfect-packing distribution, we could write $p_F$ as $\sum_{b \in
\mathcal{C}(F)} \lambda_b \cdot c$. Therefore,
\begin{align*}
\frac{\ew{X^2}}{\ew{X}}
= \frac{\sum_{b \in \mathcal{C}(F)} \lambda_b \sum_{i=1}^m s_i^2 b_i}{\sum_{b
\in \mathcal{C}(F)} \lambda_b \sum_{i=1}^m s_i b_i}
\leq \max_{b \in \mathcal{C}(F)} \frac{\sum_{i=1}^m s_i^2 b_i}{\sum_{i=1}^m s_i
b_i}.
\end{align*}
Since $c$ are perfect-packing configurations, the denominator is equal to $1$
and the numerator is bounded from above by $x^2 + (1-x)^2$.
Plugging this into (\ref{eqn:EstimateAECRviaLorden}) yields the lower bound.
\qed
\end{proof}

Now we want to look at the case $F \in \ppdistr^{(1)}$, i.e., we have a vector of item sizes $s$, and the vector of probabilities $p_F$ is given by $p_F = b/|b|_1$, where $b$ is a perfect-packing configuration.

\begin{theorem}
\label{thm:BoundsPP1}
It holds
\begin{align*}
\frac{2}{3} 
\leq \AECR{\DNF}{\ppdistr^{(1)}} 
\leq \left( \sum_{i=1}^\infty \frac{(i-1)!}{i^i} \right)^{-1} 
\approx 0.736.
\end{align*}
If the maximum item size is greater than or equal to $\frac{1}{2}$ the lower bound can be improved to $\left(1 + \sum_{i=2}^\infty \frac{1}{i^2} \cdot \left(1 - \frac{1}{i} \right)^{i-2} \right)^{-1} \approx 0.686$.
\end{theorem}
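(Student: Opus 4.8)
The plan is to make everything a statement about the expected overshoot. By the argument in the proof of Theorem~\ref{theorem:generalLowerBound}, for a perfect-packing distribution~$F$ we have $\AECR{\DNF}{F}=(1+\ew{R})^{-1}$ with $R=S_{T_{\DNF}^F}-1$, so all four inequalities are equivalent to two-sided control of $\ew{R}$ over $\ppdistr^{(1)}$. For $F\in\ppdistr^{(1)}$ with perfect-packing configuration~$b$ and $B:=|b|_1$ we have $\ew{X}=1/B$ and $\ew{X^2}/\ew{X}=\sum_i b_i s_i^2$, which is a convex combination of the item sizes with weights~$b_i s_i$ (these sum to~$1$); hence Lorden's inequality already gives $\ew{R}\le\sum_i b_i s_i^2\le\max_i s_i$.

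For the lower bound $\AECR{\DNF}{\ppdistr^{(1)}}\ge\nicefrac23$ I would split on the maximum item size. If $\max_i s_i\le\nicefrac12$, the observation above yields $\ew{R}\le\nicefrac12$ immediately. If $\max_i s_i>\nicefrac12$, then since $\sum_i b_i s_i=1$ there is a \emph{unique} item of size $L>\nicefrac12$, it has multiplicity~$1$, and every other item has size at most $1-L<\nicefrac12$; here a direct analysis of the renewal process is needed, since the simple Lorden bound is too weak. The overshoot is small unless the large item itself closes the bin, so I would condition on the step at which~$L$ closes the bin and on the level reached just before (a sum of small items, possibly together with an earlier copy of~$L$ drawn while the level was below~$1-L$) and estimate $\ew{R}$ from this. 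As this is exactly the analysis that, sharpened, produces the improved bound, I would prove the two lower bounds together.

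For the improved bound under $\max_i s_i\ge\nicefrac12$ I would push the large-item analysis further, decomposing $\ew{R}$ according to the fine structure of the run (how many items the closing bin uses, how many copies of~$L$ occur in it and where they sit) and bounding each contribution, worst-case over admissible configurations, by a term of the form $\frac{1}{i^2}\bigl(1-\frac1i\bigr)^{i-2}$ — a factor $\frac1i$ from drawing the unique large item, a second $\frac1i$ (roughly) from its position among~$i$ items, and $\bigl(1-\frac1i\bigr)^{i-2}$ from the remaining small draws not closing the bin prematurely — so that summing over $i\ge2$ gives $\ew{R}\le\sum_{i\ge2}\frac1{i^2}\bigl(1-\frac1i\bigr)^{i-2}$. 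Getting the exact coefficients, rather than a cruder estimate, is the delicate point, because several different configuration shapes must be compared against one another.

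For the upper bound $\AECR{\DNF}{\ppdistr^{(1)}}\le\bigl(\sum_{i\ge1}(i-1)!/i^i\bigr)^{-1}$ I would exhibit a sequence $F^{(k)}\in\ppdistr^{(1)}$ with $|b^{(k)}|_1\to\infty$ — consisting of a large number of vanishingly small items together with a carefully chosen collection of larger items whose sizes and multiplicities are tuned so that, after rescaling time by the small-item accumulation, the arrival process of the larger items has a clean renewal/Poisson limit — and show $\ew{R_{F^{(k)}}}\to\sum_{i\ge2}(i-1)!/i^i$, equivalently $\ew{T_{\DNF}^{F^{(k)}}}/|b^{(k)}|_1\to\sum_{i\ge1}(i-1)!/i^i$; in this limit the summand $(i-1)!/i^i$ is the overshoot-weighted probability that the bin is closed using exactly~$i$ of the larger items. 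I expect this to be the main obstacle of the whole theorem: identifying the extremal family so that the constraint $\sum_i b_i s_i=1$ holds while the limiting behaviour is exactly the desired one, and then carrying out the limit (essentially computing $\lim_k\sum_{n\ge0}\prob{S_n<1}/|b^{(k)}|_1$). The matching lower bound $\nicefrac23$ on this quantity, of course, leaves a gap, which I would leave as an open problem.
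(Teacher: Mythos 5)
There is a genuine gap. Your reduction to controlling $\ew{R}$ (equivalently $\ew{T_{\DNF}^F}$, since $\AECR{\DNF}{F}=\bigl(\ew{X}\ew{T_{\DNF}^F}\bigr)^{-1}=(1+\ew{R})^{-1}$) is the right framing, and your observations that $\ew{X^2}/\ew{X}=\sum_i b_i s_i^2\le\max_i s_i$ and that the case $\max_i s_i\le\nicefrac12$ is immediate are both correct. But every step that actually needs work — the two lower bounds when $\max_i s_i\ge\nicefrac12$ and the upper bound — is left as a vague plan (``a direct analysis of the renewal process is needed'', ``decompose $\ew{R}$ according to the fine structure of the run''), and the concrete combinatorial device that drives the paper's proof is missing. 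The paper normalizes so that $p_F$ is uniform on $s_1<\cdots<s_m$ with $\sum_i s_i=1$, encodes each draw by the \emph{rank} $Y_i\in\{1,\ldots,m\}$ of the size drawn, and defines a stopping time $\tilde T(m)$: the first time the largest rank $m$ has been drawn twice, or the first time the rank sequence \emph{covers} $(m,\ldots,1)$ in the parking-function sense. Covering forces the collected sizes to sum to at least $\sum_i s_i=1$, and two copies of the largest item (size $\ge\nicefrac12$) also close the bin, so $T_{\DNF}^{F}\le\tilde T(m)$ pathwise. The count of parking functions ($(n+1)^{n-1}$) then yields the exact recursion $\ew{\tilde T(m)}=\frac{m}{m-1}\ew{\tilde T(m-1)}+\frac{(m-1)^{m-2}}{m^{m-1}}$, which telescopes to $m\bigl(1+\sum_{i=2}^m\frac1{i^2}(1-\frac1i)^{i-2}\bigr)$. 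That is where the summand $\frac1{i^2}(1-\frac1i)^{i-2}=\frac{(i-1)^{i-2}}{i^i}$ actually comes from — it is the probability that the first $i-1$ ranks cover $(i,\ldots,2)$ using rank $i$ exactly once — not from the ``one factor $\frac1i$ for drawing the large item, another $\frac1i$ for its position'' heuristic you give, which does not reproduce the correct exponent $i-2$ or the $(i-1)^{i-2}$ structure.

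Your upper-bound plan is also not enough as written. The paper's extremal family is explicit and elementary: $s\propto(1-\epsilon,\,(1-\epsilon)\epsilon,\ldots,(1-\epsilon)\epsilon^{m-1})$, normalized. For $\epsilon$ small this satisfies $s_i\ge\sum_{j>i}s_j$, which makes the covering stopping time and $T_{\DNF}$ coincide exactly; the recursion then becomes $\ew{\tilde T(m)}=\frac{m}{m-1}\ew{\tilde T(m-1)}+\frac{(m-1)!}{m^{m-1}}$ (because now the only way to need one extra draw is to have drawn a \emph{permutation} of $(m,\ldots,2)$), yielding $\ew{\tilde T(m)}=m\sum_{i=1}^m\frac{(i-1)!}{i^i}$ and the stated bound as $m\to\infty$. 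Your proposed ``Poisson/renewal scaling limit'' is a genuinely different mechanism, and you leave unspecified both the family and the computation; I do not see how it would reproduce the constant without in effect rediscovering the parking-function recursion. In short, the reduction is right but the proof has not been done: the key idea — a coupon-collector stopping time whose distribution is governed by parking-function counts — is absent.
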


To prove this statement, we use again the coupon-collector ansatz, already used in the proof of the lower bound for arbitrary discrete distributions.

\begin{lemma}
There are $(n+1)^{n-1}$ vectors in $\{1, \ldots, n\}^n$, which cover $(n,\ldots,1)$.
\end{lemma}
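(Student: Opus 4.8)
The plan is to identify the covering vectors with parking functions. First I would unwind the definition: for $v\in\{1,\ldots,n\}^n$ with non-decreasing rearrangement $b_1\le b_2\le\cdots\le b_n$, note that $v_i^{\star}=b_{n-i+1}$, so the requirement $v_i^{\star}\ge n-i+1$ for all $i\in[n]$ becomes, after substituting $j=n-i+1$, the requirement $b_j\ge j$ for all $j\in[n]$. Thus a vector covers $(n,\ldots,1)$ exactly when its sorted entries dominate $(1,2,\ldots,n)$ coordinatewise.

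Next I would apply the complementation involution $v\mapsto \bar v:=(n+1-v_1,\ldots,n+1-v_n)$ on $\{1,\ldots,n\}^n$. If $b_1\le\cdots\le b_n$ is the sorted form of $v$, then the sorted form of $\bar v$ is $c_j=n+1-b_{n+1-j}$; substituting $j\mapsto n+1-j$ shows that $b_j\ge j$ for all $j$ holds if and only if $c_j\le j$ for all $j$. The latter is precisely the definition of a parking function of length $n$. Hence complementation is a bijection between the covering vectors in $\{1,\ldots,n\}^n$ and the parking functions of length $n$, and it remains only to count the latter.

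It then suffices to recall the classical fact that there are exactly $(n+1)^{n-1}$ parking functions of length $n$ (Konheim and Weiss). For a self-contained argument one can reprove this by Pollak's cycle trick: view $\{1,\ldots,n\}^n$ inside $(\mathbb{Z}/(n+1)\mathbb{Z})^n$, which has $(n+1)^n$ elements and splits into $(n+1)^{n-1}$ orbits of size $n+1$ under adding $(1,\ldots,1)$ modulo $n+1$; running the circular parking process with $n$ cars and $n+1$ spots, each orbit contains exactly one sequence for which spot $n+1$ is the empty one, and those sequences are exactly the parking functions, giving the count $(n+1)^n/(n+1)=(n+1)^{n-1}$.

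The only place that needs care is the index bookkeeping in the first two steps — correctly translating between the statement about order statistics ("the $i$-th largest entry is at least $n-i+1$"), the equivalent statement "$b_j\ge j$ for all $j$", and its image under complementation "$c_j\le j$ for all $j$". Once the parking-function reformulation is established, the enumeration is standard and there is no real obstacle.
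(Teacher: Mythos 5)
Your proof is correct and follows essentially the same route as the paper: translate the covering condition into the parking-function condition and invoke the Konheim--Weiss count $(n+1)^{n-1}$. If anything, your version is more careful than the paper's, which asserts literally that every covering vector \emph{is} a parking function (false as stated, e.g.\ $(2,2)$ for $n=2$ covers $(2,1)$ but is not a parking function); your complementation bijection $v\mapsto(n+1-v_1,\ldots,n+1-v_n)$, with the index bookkeeping you spell out, is exactly what makes the identification precise, and the Pollak cycle argument is a standard optional add-on for self-containedness.
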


\begin{proof}
Our covering relationship is just a reformulation of \emph{parking functions}. Every vector in $\{1,\ldots,n\}^n$, which covers $(n,\ldots,1)$ is a parking function. Konheim and Weiss showed in \cite{parkingFunctions} that there are $(n+1)^{n-1}$ parking functions of length $n$.
\qed
\end{proof}

\begin{proof}[Theorem \ref{thm:BoundsPP1}]
We want to look at first at the lower bound. If the maximum item size is bounded
from above by $\frac{1}{2}$, then the lower bound is just the worst-case result: The total size of the items in one bin is bounded from above by $3/2$. Therefore, $(3/2)\cdot \DNF(L) \geq s(L) \geq \OPT(L)$. So we assume that there is an item with size greater than or equal to $1/2$.

For our construction we could assume w.l.o.g. that $F =F(m)$ is represented by $s = (s_1, \ldots, s_m)$, where $s_1 < \ldots < s_m$ and $p_F$ is the uniform distribution on $s$. We draw items i.i.d. according to $F$, and let $Y_i$ denote the index of the size of the $i$-th drawn item. Let $\tilde{T}(m)$ denote the first time, that $Y_t = m$ and $Y_u = m$, where $u < t$, or that $(Y_1, \ldots, Y_t)$ covers $(m,\ldots,1)$. In both cases the sum of the items drawn is at least $1$. It is clear that $\ew{T_{\DNF}^{F(m)}} \leq \ew{\tilde{T}(m)}$, and $\ew{\tilde{T}(1)} = 1$.

Furthermore, it holds the following recursion for $m \geq 2$:
\begin{align}
\label{recursion}
\ew{\tilde{T}(m)} = \frac{m}{m-1} \ew{\tilde{T}(m-1)} + \frac{(m-1)^{m-2}}{m^{m-1}}.
\end{align}
Let us explain this: Assume that we draw the $Y_i$ uniformly from $\{1, \ldots, m\}$, until we cover $(m, \ldots, 2)$ or draw $m$ the second time. If we stop, because we have drawn $m$ the second time, then the bin is filled. If we need more than $m-1$ items to cover $(m,\ldots,2)$, then we also cover $(m,\ldots,1)$. If we have covered $(m,\ldots,2)$ with exactly $m-1$ draws, then we have to draw one more item to cover $(m,\ldots,1)$. The previous lemma reveals that there are $(m-1)\cdot(m-2+1)^{m-1-2}$ possibilities to cover $(m,\ldots,2)$ with exactly $m-1$ items, using the largest item only once. This yields the recursion.

Applying (\ref{recursion}) several times, we achieve
\begin{multline*}
\ew{\tilde{T}(m)} 
= m \cdot \ew{\tilde{T}(1)} + \sum_{i=2}^m \frac{m}{i} \cdot \frac{(i-1)^{i-2}}{i^{i-1}} \\
= m \cdot \left( 1 + \sum_{i=2}^m \frac{1}{i^2} \left( 1 - \frac{1}{i} \right)^{i-2} \right)
\leq m \cdot \left( 1 + \sum_{i=2}^\infty \frac{1}{i^2} \left( 1 - \frac{1}{i} \right)^{i-2} \right).
\end{multline*}
Plugging this into (\ref{eqn:limitAECR}) yields the lower bound.

Now we want to show the upper bound. Assume $s_i \geq \sum_{j=i+1}^m s_j$ for $1 \leq i \leq m$. We use the same argumentation as before. This time we only need an extra item, if we draw a permutation of $(m,\ldots,2)$ in the first $m-1$ draws. This yields the recursion formula
\begin{align*}
\ew{\tilde{T}(m)} = \frac{m}{m-1} \ew{\tilde{T}(m-1)} + \frac{(m-1)!}{m^{m-1}}.
\end{align*}
Hence
\begin{align*}
\ew{\tilde{T}(m)} = m \cdot \sum_{i=1}^m \frac{(i-1)!}{i^i}.
\end{align*}
Let $s = \frac{1}{Z(m, \epsilon)}\left(1-\epsilon, (1-\epsilon)\cdot \epsilon, (1-\epsilon)\cdot \epsilon^2, \ldots, (1-\epsilon)\cdot \epsilon^{m-1} \right)$, where $Z(m, \epsilon)$ is a normalisation constant. If we choose $\epsilon$ small enough, we achieve $\ew{\tilde{T}(m)}$. This yields the upper bound.
\qed
\end{proof}

We see that even in the analysis of this simple case there is room for improvement. Based on simulations, we suppose that the upper bound represents the truth. Furthermore we were not able to improve the worst-case bound $(1+x)^{-1}$ in the case that the maximum item size is bounded from above by $\frac{1}{2}$.

Finally, let $\ppdistr_{\operatorname{two}}$ denote all discrete perfect-packing distributions for which there exists a representation in which every perfect-packing configuration $b^i$ contains at most two non-zero entries.
\begin{theorem}\label{thm:pptwo}
Let $F \in \ppdistr_{\operatorname{two}}$, then
\begin{align*}
\AECR{\DNF}{F} \geq 2/3.
\end{align*}
\end{theorem}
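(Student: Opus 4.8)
The plan is to mimic the opening moves of the proof of Theorem~\ref{theorem:generalLowerBound}. First I would reduce, by clearing denominators in a convex representation $p=\sum_i\alpha_ib^i$ whose configurations $b^i$ all have at most two non-zero entries, to the case where $F$ is the uniform distribution on the multiset of items of a list that has been cut into perfectly packed bins $B_1,\dots,B_\ell$, each using at most two distinct item sizes. By~(\ref{eqn:limitAECR}) and Wald's equation we have $\AECR{\DNF}{F}=(1+\ew{R})^{-1}$ with $R=S_{T_{\DNF}^F}-1$ the overshoot (here $S_n=X_1+\dots+X_n$ as in the proof of Theorem~\ref{theorem:generalLowerBound}), so the goal becomes $\ew{R}\le 1/2$. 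I would rewrite this as $\ew{(R-1/2)^+}\le\ew{(1/2-R)^+}$, since the two sides differ by $\ew{R}-1/2$; the point is that each unit of "excess" overshoot above $1/2$ must be charged against "deficit" overshoot below $1/2$ created elsewhere in the run.

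Next I would extract the structural facts. Call an item \emph{large} if its size exceeds $1/2$. Since two large items already overfill a bin, each $B_j$ contains at most one large item; moreover, if $B_j$ contains a large item $u<1$ then $B_j=\{u\}\cup\{c_j\text{ copies of }w_j\}$ with $u=1-c_jw_j$, $c_j\ge 1$, $w_j\le 1/2$, and if $B_j$ contains an item of size exactly $1$ then $B_j=\{1\}$. Consequently a large item $u=1-c_jw_j$ closes a DNF bin only when the current level $\lambda$ satisfies $\lambda\ge c_jw_j$, and then the overshoot equals $\lambda-c_jw_j$. Writing $T'=\inf\{n:S_n\ge 1/2\}$, one checks that $R<1/2$ whenever $T'=T_{\DNF}^F$, and whenever the closing item is small. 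Hence $(R-1/2)^+>0$ only for bins whose level passes through the band $[1/2,1)$ before closing \emph{and} which are then closed by a large item; a short case analysis (no large item can occur before the band is entered, and once the level is in $[1/2,1)$ any large item closes the bin) splits these bins into exactly two types: (i) the small items arriving before the first large item already sum into $[1/2,1)$ and a subsequent large item closes the bin; (ii) the first large item itself lands in $[1/2,1)$ and, after possibly more small items, a second large item closes the bin.

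The heart of the proof is a renewal computation balancing (i) and (ii). Let $\phi(\lambda)$ be the expected overshoot of a DNF process started from level $\lambda\in[1/2,1)$; it satisfies $\phi(\lambda)=\ew{(\lambda+X-1)\mathbf{1}_{X\ge 1-\lambda}}+\ew{\phi(\lambda+X)\mathbf{1}_{X<1-\lambda}}$. Type-(ii) contributions are $\ew{\phi(S_{T'})\,;\,T'<T_{\DNF}^F,\ X_{T'}\text{ large}}$, and type-(i) contributions are an analogous averaging of the one-step overshoot against the law of $S_{T'}$ on $\{X_{T'}\text{ small}\}$. On the "excess" side the quantity charged, $\lambda-c_jw_j-1/2$, is always strictly below $1/2$; on the "deficit" side one uses that a nearly-full bin finished by a small item, or a large item landing on a low level, wastes well under $1/2$. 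I would then try to show, after integrating against the stationary level distribution $\pi_F$ of the DNF Markov chain, that the excess terms are dominated by the deficit terms, which reduces to controlling $\ew{X}+\ew{X\mathbf{1}_{X\ge\theta}}$ in terms of $(1+\theta)\prob{X\ge\theta}$ for the thresholds $\theta=1-\lambda$ that are actually attained.

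The main obstacle is that this balancing cannot be done level by level: for the distribution supported on size-$1$ items $\phi(\lambda)=\lambda$, and for the family $F(m,k)$ of Theorem~\ref{thm:UpperBoundPR} the function $\phi$ is close to $1$ for $\lambda$ near $1$, so $\phi(\lambda)\le\lambda/2$ is simply false. What rescues the argument — and this is precisely where the two-entry restriction is used — is that in $\ppdistr_{\operatorname{two}}$ a level close to $1$ can be visited by DNF only if small items are plentiful enough to stack up to it, but plentiful small items pull $\ew{X}$ down (each large item below $1$ carries at least one small partner in its bin), which in turn pushes the mass of $\pi_F$ away from high levels. So the estimate must be global, weighting both the type-(i) and type-(ii) contributions by how often each level is visited, and the two-entry property is what makes the threshold inequality above hold for exactly those visited levels. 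Carrying this weighted estimate through yields $\ew{R}\le 1/2$, hence $\AECR{\DNF}{F}\ge 2/3$; the geometric family $F(m,k)$ shows the bound is tight and leaves no room for slack in the computation.
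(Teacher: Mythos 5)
You set up the correct target---by (\ref{eqn:limitAECR}) and Wald's equation, $\AECR{\DNF}{F}\ge 2/3$ is equivalent to $\ew{R}\le 1/2$---but the proposal stops exactly where the proof would have to begin. The charging inequality $\ew{(R-1/2)^+}\le\ew{(1/2-R)^+}$ is never established: you observe yourself that its pointwise version fails (e.g.\ $\phi(\lambda)\le\lambda/2$ is false for the distributions $F(m,k)$ of Theorem~\ref{thm:UpperBoundPR}, which lie in $\ppdistr_{\operatorname{two}}$), and the proposed rescue---a ``global'' estimate that weights the levels in $[1/2,1)$ by how often they are visited and exploits that reaching high levels forces many small items---is only announced, not carried out. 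No precise inequality of the advertised form (controlling $\ew{X}+\ew{X\mathbf{1}_{X\ge\theta}}$ by $(1+\theta)\prob{X\ge\theta}$ for the attained thresholds $\theta$) is formulated, let alone proved, and since $F(m,k)$ shows the constant $2/3$ is tight on $\ppdistr_{\operatorname{two}}$, such an estimate has no slack and cannot simply be asserted to ``carry through.'' As written this is a research plan with the central step missing, not a proof. (A smaller point: the stationary distribution $\pi_F$ is not the right weighting object here; $\DNF$'s bins are i.i.d.\ renewal cycles started at level $0$, so $\ew{R}$ is the expected overshoot of a single cycle and what your scheme would need is the occupation measure of the band $[1/2,1)$ within one cycle.)

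For comparison, the paper's proof avoids overshoot bookkeeping entirely and is much shorter: after duplication it represents $F$ as the uniform distribution on item sizes $(g_1,\ldots,g_m,s_m,\ldots,s_1)$ with $g_i+s_i=1$, so $\ew{X}=1/2$, and it dominates $T_{\DNF}^F$ by the stopping time $\tilde T$ at which two of the drawn items already pairwise sum to at least $1$ (two large items, or a large $g_j$ together with a small $s_k\ge s_j$). An elementary count of the sequences with $\tilde T>i$, plus the standard bound on $\sum_{j=1}^{m-1}j^{i-1}$, gives $\ew{T_{\DNF}^F}\le\ew{\tilde T}\le 3$, and (\ref{eqn:limitAECR}) then yields $\AECR{\DNF}{F}\ge 1/(\ew{X}\cdot\ew{T_{\DNF}^F})\ge 2/3$. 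If you wish to salvage your formulation, the most economical repair is to prove $\ew{S_{T_{\DNF}^F}}\le 3/2$ (equivalently $\ew{R}\le 1/2$) by such a pairwise-domination argument on the drawn items, rather than by balancing excess against deficit across bin levels.
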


\begin{proof}
Since every perfect-packing configurations contains only two item sizes, we can assume that $F$ is given by the uniform distribution on the vector of item sizes $(g_1, \ldots, g_m, s_m, \ldots, s_1)$, where $g_1 \geq \ldots \geq g_m \geq s_m \geq \ldots \geq s_1$, and $g_i + s_i = 1$ for $1 \leq i \leq m$.
The items $g_i$ have size at least $1/2$. Let $\tilde{T}$ denote the first time, we have drawn two items with size at least $1/2$ or a large item and a small item, such that their sum is greater than or equal to $1/2$. We can compute $\prob{\tilde{T} > i}$ via easy counting. Using an estimate of the sum of powers of integers we achieve
\begin{multline*}
\ew{T_{\DNF}^F} 
= \sum_{i=0}^\infty \prob{T_{\DNF}^F > i}
\leq 2 + \sum_{i=2}^\infty \prob{\tilde{T} > i} \\
= 2 + \sum_{i=2}^\infty \left( \frac{m^i}{(2m)^i} + \frac{1}{(2m)^i} \sum_{j=2}^m i\cdot(j-1)^{i-1} \right) \\
= 2 + \frac{1}{2} + \sum_{i=2}^\infty \frac{1}{2^i} \cdot i \cdot  \left(\frac{1}{m^i} \sum_{j=1}^{m-1} j^{i-1}\right)
\leq \frac{5}{2} + \sum_{i=2}^\infty \frac{1}{2^i} 
= 3.
\end{multline*}
Plugging this into (\ref{eqn:limitAECR}) yields the desired bound.
\qed
\end{proof}

Combining this with the proof of Theorem \ref{thm:UpperBoundPR}, we obtain $\AECR{\DNF}{\ppdistr_{\operatorname{two}}} = 2/3$.


\section{Conclusions and Further Research}\label{section:Conclusions}

We have proven the first lower bound better than~$1/2$ for the asymptotic expected competitive ratio of~$\DNF$ that holds for any discrete distribution. Our lower bound is only slightly better than~$1/2$ and there is still a considerable gap to the best known upper bound of~$2/3$, which we also proved in this article. It is an interesting problem to close the gap between the lower and the upper bound. We conjecture that the lower bound can be improved to~$2/3$. Furthermore, we have shown that the asymptotic random-order ratio coincides with the asymptotic expected competitive ratio under the mild assumption that the adversary is not allowed to add too many too small items. We believe that this assumption is not needed and we conjecture that also for arbitrary inputs the asymptotic random-order ratio coincides with the asymptotic expected competitive ratio for~$\DNF$.

Our analysis in Section~\ref{section:ConnectionAECRandRR} that shows the connection between the asymptotic random-order ratio and the asymptotic expected competitive ratio under the previously mentioned assumption can easily be adapted to the next-fit algorithm for bin packing. We expect that it can also be generalized to more sophisticated algorithms for bin packing (e.g., to all bounded-space algorithms with only a constant number of open bins).

\bibliography{literature}
\bibliographystyle{plain}

\newpage

\appendix
\section{Basics in Markov chains}\label{section:BasicsInMC}
A detailed introduction to the field of Markov chains can be found in \cite{MixingTimes}. We want to repeat here the basics, which are relevant for this paper.

Let $\Omega$ be a finite set and $P$ be a transition matrix on $\Omega$, i.e., for each $x \in \Omega$ $P(x, \cdot)$ is a probability distribution on $\Omega$.

A sequence of random variables $(X_0, X_1, \ldots)$ is a \emph{Markov chain} with state space $\Omega$ and transition matrix $P$, if it holds
\begin{align*}
\prob{X_{t+1} = x_{t+1} \, | \, X_{[0:t]} = x_{[0:t]}} = P(x_t, x_{t+1}),
\end{align*}
for all $t \in \NN_0$ and $x_{[0:t]} \in \Omega^{t+1}$, such that $\prob{X_{[0,t]}=x_{[0:t]}} > 0$.

Two important properties of Markov chains are \emph{irreducibility} and \emph{aperiodicity}. We say that a Markov chain is irreducible, if for all $x,y \in \Omega$ there exists an $t \in \NN$, such that $P^t(x,y) > 0$, i.e., it is possible to reach state $y$ beginning in state $x$. For each state $x \in \Omega$, we define its period as the greatest common divisor of the set $\{t \in \NN \, : \, P^t(x,x) > 0\}$. We say that a Markov chain is aperiodic, if the period of every state is equal to $1$. Otherwise, we say it is periodic. Furthermore, we can show that irreducibility implies that the periods of all states coincide.

We are especially interested in the long time behaviour of Markov chains. There, \emph{stationary distributions} play an important role. A distribution $\pi$ on $\Omega$ is said to be stationary, if it fulfils the condition $\pi = \pi P$. We will utilize the following statement: If a Markov chain is irreducible and aperiodic, then the Markov chain possesses a unique stationary measure $\pi$, and the distribution of $X_n$ converges (exponentially) to $\pi$.

Finally, there exists an interesting connection between $\pi$ and the first return time $\tau_x$. $\tau_x$ is defined as $\inf\{t \in \NN \, : \, X_t = x\}$, where $X_0 = x$. Then, it holds $\pi(x) = \mathbb{E}_x[\tau_x]^{-1}$.

\section{Proofs of statements in Section \ref{section:ConnectionAECRandRR}}\label{section:appendixProofs}
As already mentioned, $K$ denotes a universal constant, which does not depend on the considered list $L$, and could differ from line to line.
\subsection{Proof of Lemma \ref{lemma:estimateDifferenceExpectations}}
The proof of the lemma relies on the total variation distance. The total variation distance is a distance measure between two probability distributions. We will introduce the important statements for our purposes, a more thorough overview could be found in \cite{MixingTimes}.

Let $\Omega$ be a discrete set.

\begin{definition}
The \emph{total variation distance} between two probability distributions $\mu$ and $\nu$ on $\Omega$ is defined by
\begin{align*}
\tv{\mu - \nu} = \max_{A \subset \Omega} |\mu(A) - \nu(A)|.
\end{align*}
\end{definition}

We will work with two useful alternative characterizations of the total variation distance. One of them relies on couplings.

\begin{definition}
A \emph{coupling} of two probability distributions $\mu$ and $\nu$ is a pair of
random variables $(X, Y)$ defined on a single probability space such that the
marginal distribution of $X$ is $\mu$ and the marginal distribution of $Y$ is
$\nu$.
\end{definition}

\begin{lemma}[Proposition 4.2 and Proposition 4.7 in \cite{MixingTimes}]
Let $\mu$ and $\nu$ be two probability distributions on $\Omega$. Then it holds
\begin{align*}
\tv{\mu - \nu} 
& = \frac{1}{2} \sum_{x \in \Omega} | \mu(x) - \nu(x) |, \quad \text{and} \\
\tv{\mu - \nu}
& = \inf\{ \mathbb{P}[X \neq Y] \, : \, (X,Y) \text{ is a coupling of $\mu$ and
$\nu$}\}.
\end{align*}
\end{lemma}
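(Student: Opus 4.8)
The plan is to prove the two identities in turn, obtaining the coupling characterization from the $\ell_1$-formula. First I would establish the $\ell_1$-identity $\tv{\mu-\nu} = \tfrac12\sum_{x}|\mu(x)-\nu(x)|$. Let $B = \{x \in \Omega : \mu(x) \ge \nu(x)\}$. For any $A \subseteq \Omega$ one has $\mu(A) - \nu(A) \le \mu(A \cap B) - \nu(A \cap B) \le \mu(B) - \nu(B)$, since removing points of $A \setminus B$ only increases the difference and adjoining points of $B \setminus A$ increases it further; symmetrically $\nu(A) - \mu(A) \le \nu(B^c) - \mu(B^c)$. Because $\mu(B) + \mu(B^c) = 1 = \nu(B) + \nu(B^c)$, these two upper bounds coincide, so the maximum defining $\tv{\mu-\nu}$ is attained at $A = B$, giving $\tv{\mu-\nu} = \mu(B) - \nu(B)$. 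Splitting $\sum_{x}|\mu(x)-\nu(x)|$ over $B$ and $B^c$ yields $(\mu(B)-\nu(B)) + (\nu(B^c)-\mu(B^c)) = 2(\mu(B)-\nu(B))$, which is the first identity.

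Next I would dispatch the easy direction of the coupling identity. If $(X,Y)$ is any coupling of $\mu$ and $\nu$ and $A \subseteq \Omega$, then $\mu(A) - \nu(A) = \prob{X \in A} - \prob{Y \in A} \le \prob{X \in A,\ Y \notin A} \le \prob{X \ne Y}$; maximizing over $A$ gives $\tv{\mu-\nu} \le \prob{X \ne Y}$, hence $\tv{\mu-\nu} \le \inf_{(X,Y)} \prob{X \ne Y}$.

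The remaining, and main, step is to construct a coupling attaining this bound. Set $p := \sum_{x \in \Omega} \min(\mu(x),\nu(x))$; using the first identity and splitting over $B$ and $B^c$ one checks $p = 1 - \tv{\mu-\nu}$. If $p = 1$ then $\mu = \nu$ and the diagonal coupling $X=Y$ works; if $p = 0$ the supports are disjoint and any independent coupling gives $\prob{X \ne Y} = 1 = \tv{\mu-\nu}$. Otherwise define the overlap distribution $\gamma(x) = \min(\mu(x),\nu(x))/p$ and the residuals $\mu_r(x) = (\mu(x) - \min(\mu(x),\nu(x)))/(1-p)$ and $\nu_r(x) = (\nu(x) - \min(\mu(x),\nu(x)))/(1-p)$; these are genuine probability distributions and $\mu_r,\nu_r$ have disjoint supports. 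Now build $(X,Y)$ thus: with probability $p$ draw $Z \sim \gamma$ and put $X = Y = Z$; with probability $1-p$ draw $X \sim \mu_r$ and, independently, $Y \sim \nu_r$. Then $\prob{X = x} = p\,\gamma(x) + (1-p)\,\mu_r(x) = \mu(x)$ and likewise $\prob{Y = x} = \nu(x)$, so this is a coupling, while $\{X \ne Y\}$ can occur only on the second branch, giving $\prob{X \ne Y} \le 1 - p = \tv{\mu-\nu}$. Together with the previous paragraph this proves the infimum equals $\tv{\mu-\nu}$ (and is attained).

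I expect the only genuine subtlety to be checking that the constructed coupling has the correct marginals and handling the degenerate cases $p \in \{0,1\}$; the rest is elementary manipulation of finite sums, so this is the step I would write out most carefully.
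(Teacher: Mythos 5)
Your proof is correct and complete. Note that the paper does not actually prove this lemma; it cites it directly from the reference text (Propositions 4.2 and 4.7 of Levin, Peres, and Wilmer). Your argument — identifying the extremal event $B=\{\mu\ge\nu\}$ for the first identity, proving the lower bound $\tv{\mu-\nu}\le\prob{X\ne Y}$ for any coupling, and then exhibiting the optimal coupling via the overlap mass $p=\sum_x\min(\mu(x),\nu(x))$ and the residual distributions with disjoint supports — is exactly the standard textbook proof that the cited propositions give, so it matches the intended source in both substance and structure. One small remark: in the nondegenerate case your constructed coupling in fact achieves $\prob{X\ne Y}=1-p$ with equality (since $\mu_r$ and $\nu_r$ have disjoint supports, the second branch never produces $X=Y$), so the infimum is attained; you state the inequality, which suffices once combined with the easy direction, but it is worth noting the equality since the paper's downstream Lemma~7 constructs a specific coupling and bounds $\prob{X\ne Y}$ from above.
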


A key ingredient is, that we are able to bound the difference between the expectations of $f(X)$ and $f(Y)$, where $X$ and $Y$ are random variables, in terms of the total variation distance and the infinity norm of $f$.

\begin{lemma}\label{lemma:estimateDifferenceViaTVandInfinityNorm}
Let $\mu, \nu$ be two probability distributions on $\Omega$, $X, Y \, : \, \Omega \rightarrow \RR$ random variables with $X \sim \mu$ and $Y \sim \nu$, and $f \, : \, \Omega \rightarrow \RR$. Then it holds
\begin{align*}
\left| \ew{f(X)} - \ew{f(Y)} \right| \leq 2 \tv{\mu - \nu} \cdot \|f\|_\infty.
\end{align*}
\end{lemma}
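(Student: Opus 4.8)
The plan is to reduce the inequality to one of the two characterizations of the total variation distance supplied by Proposition~4.2 and Proposition~4.7 of~\cite{MixingTimes}, both of which are stated above. There are two short routes; I would present the coupling-based one, since it is the most transparent, and mention the direct computation as an alternative.

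First I would invoke the coupling characterization: since $\Omega$ is discrete, there is a coupling $(X', Y')$ of $\mu$ and $\nu$ on a common probability space attaining the infimum, i.e.\ with $\prob{X' \neq Y'} = \tv{\mu - \nu}$. Because $\ew{f(X')} = \sum_{x \in \Omega} f(x)\mu(x) = \ew{f(X)}$ and likewise $\ew{f(Y')} = \ew{f(Y)}$, it suffices to bound $\left|\ew{f(X') - f(Y')}\right|$. Then I would estimate
\[
\left| \ew{f(X') - f(Y')} \right| \le \ew{\left| f(X') - f(Y') \right|} = \ew{\left| f(X') - f(Y') \right| \cdot \mathbf{1}_{\{X' \neq Y'\}}} \le 2\|f\|_\infty \cdot \prob{X' \neq Y'} = 2\|f\|_\infty \cdot \tv{\mu - \nu},
\]
where the middle equality holds because the integrand vanishes on $\{X' = Y'\}$, and the remaining inequality uses the pointwise bound $\left|f(X') - f(Y')\right| \le 2\|f\|_\infty$.

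Alternatively, one can avoid couplings entirely and argue directly: write $\ew{f(X)} - \ew{f(Y)} = \sum_{x \in \Omega} f(x)\bigl(\mu(x) - \nu(x)\bigr)$, apply the triangle inequality, bound $|f(x)| \le \|f\|_\infty$, and finish with the identity $\sum_{x \in \Omega} |\mu(x) - \nu(x)| = 2\tv{\mu - \nu}$ from the same proposition. Either way there is essentially no obstacle; the only points requiring a word of care are the trivial case $\|f\|_\infty = \infty$ (where the inequality is vacuous) and summability, which is a non-issue because $\Omega$ is discrete and, in all our applications, finite.
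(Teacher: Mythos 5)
Your argument is correct, but your primary route differs from the paper's. The paper proves the lemma in two lines by exactly your ``alternative'': expand $\ew{f(X)} - \ew{f(Y)} = \sum_{x \in \Omega} (\mu(x)-\nu(x)) f(x)$, apply the triangle inequality and $|f(x)| \le \|f\|_\infty$, and invoke the $L^1$ identity $\tv{\mu-\nu} = \frac{1}{2}\sum_{x\in\Omega}|\mu(x)-\nu(x)|$ from Proposition~4.2. Your main route instead goes through the coupling characterization of Proposition~4.7; it is equally valid, and has the advantage of working verbatim on general (non-discrete) spaces where the pointwise $L^1$ sum would need replacing by an integral. The only point to be careful about is your claim that the infimum over couplings is \emph{attained}: the lemma as quoted in the paper states only $\tv{\mu-\nu} = \inf\{\prob{X\neq Y}\}$, so you should either cite the existence of an optimal coupling (which does hold on discrete spaces and is part of the reference's Proposition~4.7) or, more robustly, take a coupling with $\prob{X'\neq Y'} \le \tv{\mu-\nu}+\epsilon$ and let $\epsilon \to 0$; the rest of your estimate is unchanged. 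In the context of this paper, where $\Omega$ is finite and $f$ is bounded (here $f$ is $\DNF$ or $\OPT$ on lists of bounded length), the direct computation is the shorter and more self-contained choice, which is presumably why the authors used it.
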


\begin{proof}
It holds
\begin{multline*}
\left| \ew{f(X)} - \ew{f(Y)} \right|
= \left| \sum_{x \in \Omega} (\mu(x) - \nu(x)) \cdot f(x)  \right| \\
\leq 2 \|f\|_\infty \cdot \frac{1}{2} \sum_{x \in \Omega} |\mu(x) - \nu(x)|
= 2 \|f\|_\infty \cdot \tv{\mu - \nu}.
\end{multline*}
\qed
\end{proof}

Now we want to show that the difference between sampling with and without replacement is small, if the sample size is small compared to the universe.

\begin{lemma}[Theorem 6 in \cite{Loh}]
\label{lemma:EstimateExpectationTV}
Let $L = (a_1, \ldots, a_N)$ and let $F$ be the induced discrete distribution. We assume that $\sigma \sim \Unif(S^N)$. 

Let $\mu$ be the distribution of $(a_{\sigma(1)}, \ldots, a_{\sigma(n)})$, and $\nu$ the distribution of $I_n(F)$. Then
\begin{align*}
\tv{\mu - \nu} \leq \frac{n^2}{2N}.
\end{align*}
\end{lemma}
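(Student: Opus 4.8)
The plan is to lift both distributions to the space of index tuples $\{1,\ldots,N\}^n$ and to use that total variation distance cannot increase under a deterministic map. Let $\iota\colon\{1,\ldots,N\}^n\to\RR^n$ be the coordinate selection map $\iota(i_1,\ldots,i_n)=(a_{i_1},\ldots,a_{i_n})$. Let $Q$ be the uniform distribution on $\{1,\ldots,N\}^n$, i.e.\ the law of $n$ indices drawn i.i.d.\ uniformly, and let $P$ be the uniform distribution on the set $D\subseteq\{1,\ldots,N\}^n$ of ordered $n$-tuples with pairwise distinct entries, i.e.\ the law of $(\sigma(1),\ldots,\sigma(n))$ for a uniformly random permutation $\sigma$. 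First I would observe that $\iota_\ast Q=\nu$ (a single uniform index is $F$-distributed by the definition of the induced distribution, and the coordinates are independent) and $\iota_\ast P=\mu$ (immediate from the definition of $\mu$). Since the family of preimages $\{\iota^{-1}(A)\}$ is contained in the family of all subsets of $\{1,\ldots,N\}^n$, we get $\tv{\mu-\nu}=\tv{\iota_\ast P-\iota_\ast Q}\le\tv{P-Q}$, so it suffices to bound $\tv{P-Q}$.

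Next I would compute $\tv{P-Q}$ in closed form. Since $|D|=N!/(N-n)!=N(N-1)\cdots(N-n+1)\le N^n$, we have $P(w)=1/|D|\ge 1/N^n=Q(w)$ for every $w\in D$, while $P(w)=0\le Q(w)$ for $w\notin D$; hence $\{w:P(w)>Q(w)\}\subseteq D$. Using the identity $\tv{P-Q}=\sum_{w:\,P(w)>Q(w)}(P(w)-Q(w))$,
\begin{align*}
\tv{P-Q}=\sum_{w\in D}\left(\frac{1}{|D|}-\frac{1}{N^n}\right)=1-\frac{|D|}{N^n}=1-\prod_{k=0}^{n-1}\left(1-\frac{k}{N}\right).
\end{align*}
The right-hand side is exactly the probability that $n$ i.i.d.\ uniform indices contain a repetition, which also explains the result intuitively: a sample with replacement conditioned on having no repetition is a sample without replacement, and the two processes can be coupled to coincide on the no-repetition event.

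Finally I would bound this collision probability by a union bound over pairs: using $\prod_k(1-x_k)\ge 1-\sum_k x_k$ for $x_k\in[0,1]$,
\begin{align*}
1-\prod_{k=0}^{n-1}\left(1-\frac{k}{N}\right)\le\sum_{k=0}^{n-1}\frac{k}{N}=\frac{1}{N}\binom{n}{2}=\frac{n(n-1)}{2N}\le\frac{n^2}{2N}.
\end{align*}
Chaining the three steps yields $\tv{\mu-\nu}\le\tv{P-Q}\le n^2/(2N)$. There is no substantial obstacle in this argument; the only point that needs a moment's care is verifying that $\mu$ and $\nu$ really are pushforwards of $P$ and $Q$ under the \emph{same} map $\iota$, so that the monotonicity of total variation under $\iota$ applies in the direction claimed (and, optionally, checking that the coupling intuition is consistent with the closed-form value of $\tv{P-Q}$ obtained above).
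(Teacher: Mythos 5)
Your proof is correct and reaches the same core reduction as the paper: both arguments ultimately bound the total variation distance by the probability of an index collision in $n$ i.i.d.\ uniform draws from $\{1,\ldots,N\}$, and then apply the same union/summation bound $\sum_{k<n}k/N\le n^2/(2N)$. The difference is in how the reduction is made. The paper constructs an explicit coupling $(X,Y)$ where $X$ is the i.i.d.\ sample and $Y$ equals $X$ unless $X$ is an ``illegal'' sample, and uses $\tv{\mu-\nu}\le\prob{X\neq Y}$; you instead lift both laws to index space $\{1,\ldots,N\}^n$, invoke monotonicity of total variation under the pushforward map $\iota$, and compute $\tv{P-Q}$ in closed form as $1-\prod_{k<n}(1-k/N)$. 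Your route is slightly cleaner on one point: the paper's coupling as written operates on value tuples, where the notion of a ``legal sample without replacement'' is ambiguous when $L$ contains repeated values, and the marginal of $Y$ is not obviously $\mu$ at that level; the argument is only transparently valid once one works on index tuples, which is exactly the lift you make explicit. As a bonus you obtain the exact value of $\tv{P-Q}$ rather than only a coupling-based upper bound, and you correctly note that this exact value is the collision probability, reconciling the two viewpoints.
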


\begin{proof}
We use the following coupling: Let $X$ be drawn according to $\nu$. We set
\begin{align*}
Y =
\begin{cases}
X, & \text{if $X$ is a legal sample w.r.t. sampling without replacement} \\
\tilde{Y}, & \tilde{Y} \sim \mu, \text{otherwise}.
\end{cases}
\end{align*}
Let $A$ be the event, that we draw $n$ different balls from an urn containing
$N$ different balls, w.r.t. sampling with replacement. Furthermore let $A_i$
denote the event, that in the $i$-th trial, we draw a ball, which was not
drawn in one of the $i-1$ trials before. Then it holds
\begin{multline*}
\tv{\mu - \nu}
\leq 1 - \prob{X = Y} \\
= 1 - \prob{X \text{ is a legal sample w.r.t. sampling without replacement}} \\
\leq 1 - \prob{A}
= 1 - \prob{\bigcap_{i=1}^n A_i}
= \prob{\bigcup_{i=1}^n A_i^c}
\leq \sum_{i=1}^n \prob{A_i^c}
= \sum_{i=1}^n \frac{i}{N}
\leq \frac{n^2}{2N}.
\end{multline*}
\qed
\end{proof}

Since $\DNF(L)$ and $\OPT(L)$ are bounded from above by $N(L)$, the proof of Lemma \ref{lemma:estimateDifferenceExpectations} follows by applying the previous lemma and Lemma \ref{lemma:estimateDifferenceViaTVandInfinityNorm}.

\subsection{Proof of Theorem \ref{thm:differenceOptAndAsymptotic}}

We divide the proof of the theorem into three lemmas.

\begin{lemma}
\label{lemma:Triangle1}
Let $L$ be an arbitrary instance, $F$ the induced discrete distribution, and $b := \lceil N(L)^{1/3} \rceil$. Then it holds
\begin{align*}
\left| \OPT(L) - \frac{N(L)}{b} \cdot \ew{\OPT(I_b(F))} \right| \leq K \cdot N(L)^{2/3}.
\end{align*}
\end{lemma}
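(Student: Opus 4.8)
The plan is to prove Lemma~\ref{lemma:Triangle1} by the standard ``split into blocks'' technique and exploit the superadditivity of $\OPT$ together with Lemma~\ref{lemma:estimateDifferenceExpectations}. Write $N := N(L)$ and $b := \lceil N^{1/3}\rceil$. First I would fix a uniformly random permutation $\sigma$ of $L$ and cut $\sigma(L)$ into $\lceil N/b\rceil$ consecutive blocks $L^\sigma_1,\ldots,L^\sigma_{\lceil N/b\rceil}$, each of length $b$ except possibly the last. Since $\OPT$ is superadditive we have $\OPT(\sigma(L)) = \OPT(L) \ge \sum_i \OPT(L^\sigma_i)$, and on the other hand concatenating optimal packings of the blocks wastes at most one unfilled bin per block, giving $\OPT(L) \le \sum_i \OPT(L^\sigma_i) + \lceil N/b\rceil$. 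Taking expectations over $\sigma$ and using $N/b \le K N^{2/3}$, we get $\bigl|\OPT(L) - \sum_{i=1}^{\lceil N/b\rceil} \mathbb{E}_\sigma[\OPT(L^\sigma_i)]\bigr| \le K N^{2/3}$.

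The next step is to replace each $\mathbb{E}_\sigma[\OPT(L^\sigma_i)]$ by $\mathbb{E}[\OPT(I_b(F))]$. By exchangeability of the random permutation, each block $L^\sigma_i$, viewed on its own, is a uniformly random $b$-subset of $L$ in random order, so its distribution is exactly the ``sampling without replacement'' distribution that Lemma~\ref{lemma:estimateDifferenceExpectations} compares with $I_b(F)$; hence $\bigl|\mathbb{E}_\sigma[\OPT(L^\sigma_i)] - \mathbb{E}[\OPT(I_b(F))]\bigr| \le b^3/N$ for each of the (at most) $\lceil N/b\rceil$ blocks that have full length $b$, and the single short block contributes an error of at most $b$. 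Summing, the total replacement error is at most $\lceil N/b\rceil \cdot b^3/N + b \le K(b^2 + b) \le K N^{2/3}$. Combining with the previous paragraph via the triangle inequality yields $\bigl|\OPT(L) - \tfrac{N}{b}\mathbb{E}[\OPT(I_b(F))]\bigr| \le K N^{2/3}$, where I absorb the difference between $\lceil N/b\rceil$ and $N/b$ (which costs at most $\mathbb{E}[\OPT(I_b(F))] \le b$) into the error term.

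The only genuinely delicate point is bookkeeping the boundary block and making sure every ``$\le K N^{2/3}$'' really is $O(N^{2/3})$ with a constant independent of $L$: the three sources of error are the $\lceil N/b\rceil$ almost-filled bins from superadditivity ($O(N/b) = O(N^{2/3})$), the per-block total-variation error ($O(N/b \cdot b^3/N) = O(b^2) = O(N^{2/3})$), and the last truncated block plus the ceiling rounding ($O(b) = O(N^{2/3})$). Since $b = \lceil N^{1/3}\rceil$ balances the $N/b$ and $b^2$ terms, this choice is exactly what makes the bound $N^{2/3}$; I would remark that any other power would give a worse exponent. Nothing here uses perfect packing, so the lemma holds for all instances $L$, and Theorem~\ref{thm:differenceOptAndAsymptotic} then follows once one further shows (in the companion lemmas) that $\tfrac1b\mathbb{E}[\OPT(I_b(F))]$ is within $O(N^{-1/3})$ of $\gamma(F)$, i.e.\ that the per-item limit is approached at the expected rate.
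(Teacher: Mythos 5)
Your proof follows the same block-decomposition route as the paper's: permute $L$, split into blocks of length $b = \lceil N^{1/3}\rceil$, use Lemma~\ref{lemma:estimateDifferenceExpectations} to replace block expectations by i.i.d.\ expectations, and balance the resulting error terms. So in that sense you have reproduced the paper's argument. The difficulty is that both you and the paper rely on a concatenation bound for $\OPT$ that is not correct in bin covering. You write ``concatenating optimal packings of the blocks wastes at most one unfilled bin per block, giving $\OPT(L) \le \sum_i \OPT(L^\sigma_i) + \lceil N/b\rceil$'', and the paper states $\OPT(L_1 L_2) \leq \OPT(L_1) + \OPT(L_2) + 1$. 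Neither holds: an optimal packing of $L_1L_2$ can exploit cross-list combinations unavailable to the separate packings. Take $L_1 = (0.6)^{2k}$ and $L_2 = (0.4)^{2k}$; then $\OPT(L_1) = k$, $\OPT(L_2) = \lfloor 2k/3\rfloor$, but $\OPT(L_1L_2) = 2k$, so the gap is $\Theta(k)$, not $O(1)$. Superadditivity only yields the lower direction $\OPT(L) \geq \sum_i \OPT(L_i^\sigma)$; there is no additive ``one extra bin per boundary'' argument for the upper direction (that argument is sound for $\DNF$, whose state is a single bin level, but not for an adversarial optimum).

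Worse, with $b = \lceil N^{1/3}\rceil$ the lemma as stated appears to be false, so this is not a fixable bookkeeping issue. Take $L = (0.6)^{N/2}(0.4)^{N/2}$, which is perfectly packable, so $\OPT(L) = N/2$, and $F$ is uniform on $\{0.6, 0.4\}$. A random block $I_b(F)$ has a binomial imbalance of order $\sqrt{b}$ between the two item types; whenever the $0.4$'s outnumber the $0.6$'s, the excess $0.4$'s must be used three to a bin, so one computes $\ew{\OPT(I_b(F))} = b/2 - \Theta(\sqrt{b})$. Hence $\frac{N}{b}\ew{\OPT(I_b(F))} = N/2 - \Theta(N/\sqrt{b}) = N/2 - \Theta(N^{5/6})$, and the left-hand side of the lemma is $\Theta(N^{5/6})$, not $O(N^{2/3})$. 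Your closing remark that ``nothing here uses perfect packing'' is therefore not reassuring: even a perfectly packable $L$ defeats the claimed bound, because $\ew{\OPT(I_b(F))}/b$ systematically undershoots $\gamma(F)$ by $\Theta(1/\sqrt{b})$, and multiplying by $N/b$ inflates this to $\Theta(N^{5/6})$. Any repair must compare $\OPT(L)$ to $\ew{\OPT(I_N(F))}$ more directly rather than passing through the small block size $b$.
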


\begin{proof}
The first step of the proof is splitting up the list into smaller sublists, and using that $\OPT$ satisfies
\begin{align*}
\OPT(L_1) + \OPT(L_2) - 1 \leq \OPT(L_1 L_2) \leq \OPT(L_1) + \OPT(L_2) + 1,
\end{align*}
where $L_1$ and $L_2$ are two lists, and $L_1 L_2$ denotes the concatenation of them. This idea was brought up in \cite{randomOrderNextFit}. Afterwards we will apply Lemma \ref{lemma:estimateDifferenceExpectations}.

Let $L = (a_1, \ldots, a_N)$. To ease notation, we will write $N$ instead of $N(L)$. Let $b = \left\lceil N^{1/3} \right\rceil$. Since $\OPT$ does not depend on the order of the items it holds $\OPT(L) = \mathbb{E}_\sigma \left[ \OPT(\sigma(L)) \right]$. Therefore we show instead
\begin{align*}
\left| \ew{\OPT(\sigma(L))} - \frac{N}{b} \cdot \ew{\OPT(I_b(F))} \right|
\leq K \cdot N^{2/3}.
\end{align*}

To simplify the notation, let $L^\sigma := \sigma(L)$. We divide $L^\sigma$ into $\left\lceil \frac{N}{b} \right\rceil$ sublists $L^\sigma_1, \ldots, L^\sigma_{\left\lceil \frac{N}{b} \right\rceil}$. Here, for $1 \leq i \leq \left\lceil \frac{N}{b} \right\rceil - 1$, it is $L^\sigma_i = (a^\sigma_{(i-1)b + 1}, \ldots, a^\sigma_{ib})$, and $L^\sigma_{\left\lceil \frac{N}{b} \right\rceil} = (a^\sigma_{\left( \left\lceil \frac{N}{b} \right\rceil - 1 \right)b + 1}, \ldots, a^\sigma_N)$. Then for $1 \leq i \leq \left\lceil \frac{N}{b} \right\rceil - 1$, $\OPT(L^\sigma_i)$ are identically distributed random variables, and $\OPT(L^\sigma_{\left\lceil \frac{N}{b} \right\rceil}) \leq b$. So, using the estimate from Lemma \ref{lemma:estimateDifferenceExpectations}, we get
\begin{multline*}
\mathbb{E}_\sigma \left[ \OPT(\sigma(L)) \right] 
\leq \left(\left\lceil \frac{N}{b} \right\rceil - 1 \right) \ew{\OPT(L^\sigma_1)} + b + \left(\left\lceil \frac{N}{b} \right\rceil - 1 \right) \\
\leq \frac{N}{b} \ew{\OPT(L^\sigma_1)} + b + \frac{N}{b}
\leq \frac{N}{b} \cdot \left( \ew{\OPT(I_b(F))} + \frac{b^3}{N} \right) + b + \frac{N}{b} \\
\leq \frac{N}{b} \ew{\OPT(I_b(F))} + b^2 + b + \frac{N}{b}.
\end{multline*}
In the same way we achieve
\begin{align*}
\mathbb{E}_\sigma \left[ \OPT(\sigma(L)) \right] \geq \frac{N}{b} \ew{\OPT(I_b(F))} - b^2 - b - \frac{N}{b},
\end{align*}
which shows the statement.
\qed
\end{proof}

\begin{lemma}
\label{lemma:Triangle2}
Let $L = (a_1, \ldots, a_N)$, $F$ be the induced discrete distribution, and $b := \lceil N(L)^{1/3} \rceil$. It holds
\begin{align*}
\left| \ew{\OPT(I_N(F))} - \frac{N(L)}{b} \cdot \ew{\OPT(I_b(F))} \right| \leq K \cdot N(L)^{2/3}.
\end{align*}
\end{lemma}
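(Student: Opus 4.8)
The plan is to run the same sublist-splitting argument as in the proof of Lemma~\ref{lemma:Triangle1}, but now entirely in the i.i.d.\ world, so that the only error term comes from the super/sub-additivity slack of $\OPT$ across the $\lceil N/b\rceil$ cut points. First I would sample $I_N(F)$ by drawing $N$ items i.i.d.\ according to $F$ and grouping them into $\lceil N/b\rceil$ consecutive blocks $B_1,\ldots,B_{\lceil N/b\rceil}$, where $B_1,\ldots,B_{\lceil N/b\rceil-1}$ each have exactly $b$ items and the last block has at most $b$ items. Crucially, each full block $B_i$ has the same distribution as $I_b(F)$, and the blocks are mutually independent — this is the point where sampling with replacement makes life easier than in Lemma~\ref{lemma:Triangle1}, since here there is no need to invoke Lemma~\ref{lemma:estimateDifferenceExpectations} at all.

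Next I would apply the two-sided additivity estimate for $\OPT$ recalled in the proof of Lemma~\ref{lemma:Triangle1}, namely $\OPT(L_1)+\OPT(L_2)-1\le\OPT(L_1L_2)\le\OPT(L_1)+\OPT(L_2)+1$, iterated across the $\lceil N/b\rceil$ blocks. This gives
\begin{align*}
\Bigl|\OPT(I_N(F)) - \sum_{i=1}^{\lceil N/b\rceil}\OPT(B_i)\Bigr| \le \Bigl\lceil \tfrac{N}{b}\Bigr\rceil - 1 \le \frac{N}{b},
\end{align*}
pointwise on every realisation. Taking expectations, using $\ew{\OPT(B_i)}=\ew{\OPT(I_b(F))}$ for $i<\lceil N/b\rceil$ and $0\le\ew{\OPT(B_{\lceil N/b\rceil})}\le b$ for the last (possibly short) block, and absorbing the $O(N/b)$ and $O(b)$ terms, yields
\begin{align*}
\Bigl|\ew{\OPT(I_N(F))} - \tfrac{N}{b}\ew{\OPT(I_b(F))}\Bigr| \le \frac{N}{b} + b + \ew{\OPT(I_b(F))}\cdot O(1),
\end{align*}
and with $b=\lceil N^{1/3}\rceil$ both $N/b=O(N^{2/3})$ and $b=O(N^{1/3})$, so the right-hand side is $O(N^{2/3})=K\cdot N(L)^{2/3}$ as claimed. (One should be slightly careful that $(\lceil N/b\rceil-1)$ full blocks contribute $(\lceil N/b\rceil-1)\ew{\OPT(I_b(F))}$ rather than $\tfrac Nb\ew{\OPT(I_b(F))}$; the discrepancy is at most $\ew{\OPT(I_b(F))}\le b=O(N^{1/3})$, which is again swallowed by the $N^{2/3}$ bound.)

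I do not expect a genuine obstacle here; the argument is essentially a cleaner version of Lemma~\ref{lemma:Triangle1} with the total-variation step removed. The only mild subtlety worth stating carefully is the bookkeeping of the short last block and the off-by-one between $\lceil N/b\rceil$ and $N/b$, both of which are lower-order. Once Lemma~\ref{lemma:Triangle1} and Lemma~\ref{lemma:Triangle2} are in hand, Theorem~\ref{thm:differenceOptAndAsymptotic} follows immediately by the triangle inequality, since $\gamma(F)=\lim_n \ew{\OPT(I_n(F))}/n$ and the two lemmas pin $\OPT(L)$ and $\ew{\OPT(I_N(F))}$ to the common quantity $\tfrac{N}{b}\ew{\OPT(I_b(F))}$ up to $O(N^{2/3})$; a final limiting argument ($\tfrac{N}{b}\ew{\OPT(I_b(F))}=N\cdot\tfrac{\ew{\OPT(I_b(F))}}{b}\to N\gamma(F)$ as $N\to\infty$, with the rate of convergence of $\ew{\OPT(I_b(F))}/b$ to $\gamma(F)$ controlled by the same super-additivity) closes the theorem.
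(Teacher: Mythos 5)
Your plan follows the paper's intended argument exactly: the paper's proof of Lemma~\ref{lemma:Triangle2} consists of the single sentence that it ``can be proved in the same way as the previous one,'' i.e.\ by the block decomposition used for Lemma~\ref{lemma:Triangle1}, and that is precisely what you do. Your observation that Lemma~\ref{lemma:estimateDifferenceExpectations} is not needed here is correct and is a genuine tidying of the argument: once you are sampling $I_N(F)$ i.i.d., each of the $\lceil N/b\rceil-1$ full blocks $B_i$ is distributed exactly as $I_b(F)$, so the total-variation comparison between sampling with and without replacement plays no role. The bookkeeping of the short last block and the off-by-one between $\lceil N/b\rceil$ and $N/b$ is handled the same way the paper handles it in Lemma~\ref{lemma:Triangle1}.

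One issue is worth flagging, though it is a gap in the paper's own proof of Lemma~\ref{lemma:Triangle1} which you inherit rather than create. Both you and the paper use the two-sided estimate $\OPT(L_1)+\OPT(L_2)-1 \le \OPT(L_1L_2) \le \OPT(L_1)+\OPT(L_2)+1$ pointwise. The left inequality is superadditivity and is fine, but the right inequality is false for bin covering. For instance, with $L_1=(0.6,\ldots,0.6)$ of length $2n+1$ and $L_2=(0.4,\ldots,0.4)$ of length $2n+1$ one has $\OPT(L_1)=n$ and $\OPT(L_2)=\lfloor(2n+1)/3\rfloor$, yet $\OPT(L_1L_2)=2n+1$, so $\OPT(L_1L_2)-\OPT(L_1)-\OPT(L_2)$ grows linearly in~$n$. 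Hence your pointwise claim
\begin{align*}
\Bigl|\OPT(I_N(F)) - \sum_{i=1}^{\lceil N/b\rceil}\OPT(B_i)\Bigr| \le \Bigl\lceil \tfrac{N}{b}\Bigr\rceil - 1
\end{align*}
holds only in the direction $\OPT(I_N(F)) \ge \sum_i\OPT(B_i)$; the reverse direction, which is what gives the upper bound on $\ew{\OPT(I_N(F))} - \tfrac{N}{b}\ew{\OPT(I_b(F))}$, does not follow from what is written. Since the paper makes the identical step in its proof of Lemma~\ref{lemma:Triangle1}, this is a shared gap in the block-decomposition technique as presented, not a flaw specific to your write-up, but you should be aware that the half of the bound relying on the ``$\le\OPT(L_1)+\OPT(L_2)+1$'' direction requires a different justification.
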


This lemma can be proved in the same way as the previous one.

\begin{lemma}
\label{lemma:Triangle3}
Let $L = (a_1, \ldots, a_N)$ be a list, and $F$ the induced discrete distribution. It holds
\begin{align*}
\left| \ew{\OPT(I_{N(L)}(F))} - N(L) \cdot \gamma(F) \right| \leq K \cdot N(L)^{2/3}.
\end{align*}
\end{lemma}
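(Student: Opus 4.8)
The plan is to prove the much stronger statement that $\ew{\OPT(I_N(F))}$ lies within an additive \emph{constant} of $N\gamma(F)$; the exponent $2/3$ is then kept only for uniformity with Lemmas~\ref{lemma:Triangle1} and~\ref{lemma:Triangle2}. The key observation is that, in contrast to Lemma~\ref{lemma:Triangle1}, no passage between sampling with and without replacement occurs here, so no total-variation machinery is needed — only the (almost-)additivity of $\ew{\OPT(\cdot)}$ in the number of items, which is essentially Fekete's lemma made quantitative.

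Concretely, I would write $g(N) := \ew{\OPT(I_N(F))}$ and recall from Section~\ref{section:basicStatements} that $\gamma(F) = \lim_{m\to\infty} g(m)/m$ exists. Fixing $N$ and an arbitrary $j \in \NN$, I would view $I_{jN}(F)$ as the concatenation of $j$ independent copies $B_1,\dots,B_j$ of $I_N(F)$ and iterate the two-sided inequality $\OPT(L_1)+\OPT(L_2) \le \OPT(L_1L_2) \le \OPT(L_1)+\OPT(L_2)+1$ already used in the proof of Lemma~\ref{lemma:Triangle1}, obtaining
\[
\sum_{i=1}^{j}\OPT(B_i) \;\le\; \OPT(I_{jN}(F)) \;\le\; \sum_{i=1}^{j}\OPT(B_i) + (j-1).
\]
Taking expectations gives $j\,g(N) \le g(jN) \le j\,g(N) + (j-1)$; dividing by $jN$ and letting $j\to\infty$ forces $g(N)/N \le \gamma(F) \le g(N)/N + 1/N$, i.e.\ $N\gamma(F)-1 \le g(N) \le N\gamma(F)$. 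Since $N^{2/3}\ge 1$ for every $N\ge 1$, this yields $\left| g(N)-N\gamma(F)\right| \le 1 \le K\,N^{2/3}$, which is the claim.

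The only ingredient beyond Fekete's lemma is the upper bound $\OPT(L_1L_2) \le \OPT(L_1)+\OPT(L_2)+1$ (superadditivity of $\OPT$ handles the lower bound); it is precisely this "almost-subadditivity" that prevents $g(N)$ from overshooting its density $N\gamma(F)$ by more than a constant. As that fact is already established and used earlier in the paper, I expect this proof to be essentially immediate — the one place to be slightly careful is the handling of the $(j-1)$ error term when passing to the limit in $j$, but since it is divided by $jN$ it vanishes cleanly.
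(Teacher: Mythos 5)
The decisive step in your argument is the iterated upper bound $\OPT(L_1L_2)\le\OPT(L_1)+\OPT(L_2)+1$, and this inequality is false for bin covering: $\OPT$ is superadditive, but it is \emph{not} subadditive up to an additive constant, because concatenating two lists can create many ``mixed'' covered bins that neither list can form on its own. Concretely, let $L_1$ consist of $m$ items of size $0.4$ and $L_2$ of $m$ items of size $0.6$; then $\OPT(L_1)=\lfloor m/3\rfloor$, $\OPT(L_2)=\lfloor m/2\rfloor$, but $\OPT(L_1L_2)=m>\lfloor m/3\rfloor+\lfloor m/2\rfloor+1$ for all $m\ge 12$. (That this two-sided bound is also asserted in the appendix before Lemma~\ref{lemma:Triangle1} does not rescue it: only the superadditivity half, stated in Section~\ref{section:basicStatements}, is valid for covering.) Consequently the step $g(jN)\le j\,g(N)+(j-1)$, and with it the direction $g(N)\ge N\gamma(F)-1$, is unsupported; only the easy half of your conclusion, $g(N)\le N\gamma(F)$, follows from Fekete's lemma.

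Moreover, the stronger statement you aim for is not just unproved but false, so no repair along this route is possible. Take $F$ uniform on the two sizes $0.4$ and $0.6$; this is a perfect-packing distribution with $\gamma(F)=\nicefrac12$. In $I_N(F)$ the surplus of $0.4$-items over $0.6$-items is of order $\sqrt N$ with constant probability, and surplus small items cover bins only at rate $\nicefrac13$ per item instead of $\nicefrac12$, so $N\gamma(F)-\ew{\OPT(I_N(F))}=\Theta(\sqrt N)$, not $O(1)$. This is exactly why the lemma carries a sublinear error term, and why the paper does not prove it by a splitting or subadditivity argument at all: it invokes the reference \cite{OptBinCoveringRandomSizes}, where the hard direction $\ew{\OPT(I_{N}(F))}\ge N\gamma(F)-K\,N^{2/3}$ is obtained by a genuinely probabilistic argument (in essence, controlling the deviation of the empirical item counts from $Np$ and using that inserting or deleting a single item changes $\OPT$ by at most one). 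Any self-contained proof must contain an argument of that kind; deterministic almost-subadditivity, which is what your proof rests on, is simply not available for bin covering.
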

The proof of this lemma can be found in \cite{OptBinCoveringRandomSizes}.

Finally, a simple application of the triangle inequality yields the proof of the theorem.

\end{document}